\documentclass[a4paper,USenglish,thm-restate]{lipics-v2019}
\pdfoutput=1

\newcommand{\noisyBinarySearch}{\ensuremath{\mathsf{NoisyBinarySearch}}\xspace}


\usepackage[T1]{fontenc}
\usepackage{inputenc}
\usepackage{soul} 
\usepackage{phfqit}
\usepackage{proba} 
\usepackage{amssymb}
\usepackage{amsmath}
\usepackage{amsthm}
\usepackage{amsfonts}
\usepackage{mathtools}
\usepackage{xspace}
\usepackage{bm}
\usepackage{nicefrac}
\usepackage{commath}
\usepackage{bbm}
\usepackage{boxedminipage}
\usepackage{xparse}
\usepackage[x11names]{xcolor}
\usepackage{float}
\usepackage{multirow}
\usepackage{graphicx}
\usepackage{caption}
\usepackage{subcaption}
\usepackage{ifthen}
\usepackage{algpseudocode}
\usepackage{colortbl} 
\usepackage[nameinlink,capitalize]{cleveref}

\usepackage{hyperref}
    \hypersetup{ colorlinks=true, linkcolor=blue, filecolor=magenta, urlcolor=red,
    citecolor=blue,}
\usepackage[utf8]{inputenc}
\usepackage{environ}
\usepackage{tabu}
\usepackage{mdframed}
\usepackage{algorithm}
\usepackage{cleveref}




\usepackage{amsthm,thmtools}%



\newcommand{\etal}{\text{et al.}\xspace}


\DeclareMathOperator{\polylog}{polylog}
\newcommand{\negl}{\ensuremath{\mathsf{negl}}\xspace}

\newcommand{\zo}{\ensuremath{{\{0,1\}}}\xspace}

\newcommand{\tuple}[1]{\ensuremath{\left(#1\right)}}
\newcommand{\eps}{\ensuremath{\epsilon}}
\newcommand{\defeq}[0]{\ensuremath{{\;\vcentcolon=\;}}\xspace}

\NewEnviron{problem}[1]{%
	\begin{center}\fbox{\parbox{6in}{%
				{\centering\scshape #1\par}%
				\parskip=1ex
				\everypar{\hangindent=1em}%
				\BODY
}}\end{center}}






\newcommand{\superscript}[1]{\ensuremath{^{\mbox{\tiny{\textit{#1}}}}}\xspace}
\def \th {\superscript{th}}     
\def \etal{\,{\it et~al.}\,}
\newcommand{\ignore}[1]{{}}

\newcommand{\Goody}{\ensuremath{\mathsf{Good}}}

\newcommand{\HAM}{\ensuremath{\mathsf{HAM}}}




\newcommand{\Enc}{\ensuremath{\mathsf{Enc}}}
\newcommand{\Dec}{\ensuremath{\mathsf{Dec}}}







\providecommand{\email}[1]{\href{mailto:#1}{\nolinkurl{#1}\xspace}}




\newcommand{\itn}[1]{^{(#1)}}

\newcommand{\tp}{\tuple}
\newcommand{\ED}{\ensuremath{\mathsf{ED}}\xspace}

\newcommand{\inn}{\mathsf{in}\xspace}

\def\*#1{\mathbf{#1}}
\def\+#1{\mathcal{#1}}

\newcommand{\bbN}{\ensuremath{\mathbb{N}}\xspace}
\newcommand{\bbZ}{\ensuremath{\mathbb{Z}}\xspace}
\newcommand{\p}[1]{\ensuremath{^{(#1)}}\xspace}
\newcommand{\buff}{\ensuremath{\mathsf{b}}\xspace}
\newcommand{\tY}{\ensuremath{\tilde{Y}}\xspace}

\newcommand{\bbs}{\textproc{Block-Decode}\xspace}
\newcommand{\buffind}{\textproc{Buff-Find}\xspace}

\DeclareMathOperator{\dist}{dist}

\algnewcommand{\IIf}[1]{\State\algorithmicif\ #1\ \algorithmicthen}
\algnewcommand{\EndIIf}{\unskip\ \algorithmicend\ \algorithmicif}

\usepackage{microtype}



\bibliographystyle{plainurl}

\title{Locally Decodable/Correctable Codes for Insertions and Deletions} 

\titlerunning{Local InsDel Codes} 

\author{Alexander R. Block}{Purdue University, USA }{block9@purdue.edu}{}{Supported by NSF CCF-1910659.}

\author{Jeremiah Blocki}{Purdue University, USA}{jblocki@purdue.edu}{}{Supported by NSF CCF-1910659, CNS-1755708, CNS-1704587 and CNS-1931443}

\author{Elena Grigorescu}{Purdue University, USA}{elena-g@purdue.edu}{}{Supported by NSF CCF-1910659 and NSF CCF-1910411.}

\author{Shubhang Kulkarni\footnote{Work done while at Purdue University, USA.}}{University of Illinois Urbana-Champaign, USA}{smkulka2@illinois.edu}{}{}

\author{Minshen Zhu}{Purdue University, USA }{zhu628@purdue.edu}{}{Supported by NSF CCF-1910659.}


\authorrunning{A. R. Block, J. Blocki, E. Grigorescu, S. Kulkarni, M. Zhu} 

\Copyright{Alexander R. Block and Jeremiah Blocki and Elena Grigorescu and Shubhang Kulkarni and Minshen Zhu} 

\ccsdesc[500]{Theory of computation~Error-correcting codes} 

\keywords{Locally decodable/correctable codes; insert-delete channel} 

\category{} 

\relatedversion{This is the full version of \cite{BBGKZ20}} 

\supplement{}



\nolinenumbers 

\hideLIPIcs  


\begin{document}

\maketitle

\begin{abstract}

Recent efforts in coding theory have focused on building codes for insertions and deletions, called insdel codes, with optimal trade-offs between their redundancy and their error-correction capabilities, as well as {\em efficient} encoding and decoding algorithms.


In many applications, polynomial running time may still be prohibitively expensive, which has motivated the study of codes with {\em super-efficient} decoding algorithms. These have led to 
the well-studied notions of Locally Decodable Codes (LDCs) and Locally Correctable Codes (LCCs).  Inspired by these notions, Ostrovsky and Paskin-Cherniavsky (Information Theoretic Security, 2015) generalized Hamming LDCs to insertions and deletions. To the best of our knowledge, these are the only known results  that study the analogues of Hamming LDCs in channels performing insertions and deletions.

Here we continue the study of insdel codes that admit local algorithms. Specifically, we reprove the results of Ostrovsky and Paskin-Cherniavsky for insdel LDCs using a different set of techniques. We also observe that the techniques extend to constructions of LCCs. Specifically, we obtain insdel LDCs and LCCs from their Hamming LDCs and LCCs analogues, respectively. The rate and error-correction capability blow up only by a constant factor, while the query complexity blows up by a poly log factor in the block length.

Since insdel locally decodable/correctble codes are scarcely studied in the literature, we believe our results and techniques may lead to further research. In particular, we conjecture that constant-query insdel LDCs/LCCs do not exist.
\end{abstract}

\newpage
\section{Introduction}

Building error-correcting codes that can recover from insertions and deletions (a.k.a. ``insdel codes'') has been a central theme in recent advances in coding theory  \cite{Levenshtein_SPD66,Kiwi_expectedlength, guruswami2017deletion,HaeuplerS17,guruswami2018polynomial,GuruswamiL18,HaeuplerSS18,HaeuplerS18,BrakensiekGZ18, ChengJLW18, ChengHLSW19, ChengJ0W19, GuruswamiL19,HaeuplerRS19,Haeupler19, ChengGHL20, cheng2020efficient, LiuTX20,GuruswamiHS20}. Insdel codes are generalizations of Hamming codes, in which the corruptions may be viewed as deleting symbols and then inserting other symbols at the deleted locations.

An insdel code is described by an encoding function $E:\Sigma^k\rightarrow{\Sigma^n}$, which encodes every message of length $k$ into a codeword of block length $n$. The rate of the code is the ratio $\frac{k}{n}$.
Classically, a decoding function $D:\Sigma^*\rightarrow{\Sigma^k}$ takes as input a string $w$ obtained from some $E(m)$ after $\delta n$ insertions and deletions and satisfies $D(w)=m$. 
A fundamental research direction is building codes with high communication rate $\frac{k}{n}$, that are robust against a large $\delta$ fraction of insertions and deletions, which also admit {\em efficient} encoding and decoding algorithms. It is only recently that efficient insdel codes with asymptotically good rate and error-correction parameters have been well-understood \cite{HaeuplerRS19, HaeuplerS18, Haeupler19, LiuTX20, GuruswamiHS20}.

In modern applications, polynomial-time decoding may still be prohibitively expensive when working with large data, and instead {\em super-efficient} codes are even more desirable. 
Such codes admit very fast decoding algorithms that query only few  locations into the received word to recover portions of the data. 
Ostrovsky and Paskin-Cherniavsky \cite{Ostrovsky-InsdelLDC-Compiler} defined the notion of Locally Decodable Insdel Codes,\footnote{In \cite{Ostrovsky-InsdelLDC-Compiler}, they are named {\em Locally Decodable Codes for Edit Distance}. 
} inspired by the notion of Locally Decodable Codes (LDCs) for Hamming errors \cite{KatzT00,SudanTV99}. 
A code defined by an encoding $E:\Sigma^k\rightarrow{\Sigma^n}$ is a $q$-query {Locally Decodable Insdel Code}  (Insdel LDC) if there exists a randomized algorithm $\mathcal D$, such that: (1) for each $i\in [k]$ and message $m\in \Sigma^k$,  $\mathcal D$ can probabilistically recover $m_i$, given query access to a word $w\in \Sigma^*$, which was obtained from $E(m)$ corrupted by $\delta$ fraction of insertions and deletions; and  (2)  $\mathcal D$ makes only $q$ queries into $w$. 
The number of queries $q$ is called the {\em locality} of the code.

The rate, error-correcting capability, and locality of the code  are opposing design features, and optimizing all of them at the same time is impossible. For example, every $2$-query LDCs for Hamming errors must have vanishing rate \cite{KerenidisW04}. While progress in understanding these trade-offs for Hamming errors has spanned several decades  \cite{KerenidisW04, Yekhanin08, Yekhanin12, DvirGY11, Efremenko12, KoppartyMRS17}
(see surveys by Yekhanin \cite{Yekhanin12} and by Kopparty and Saraf \cite{KoppartyS16}), in contrast, the literature on the same trade-offs for the more general insdel codes is scarce. Namely, besides the results of \cite{Ostrovsky-InsdelLDC-Compiler}, to the best of our knowledge, only Haeupler and Shahrasbi \cite{HaeuplerS18} consider the notion of locality in building synchronization strings, which are important components of optimal insdel codes.

The results of \cite{Ostrovsky-InsdelLDC-Compiler} provide a direct reduction from classical Hamming error LDCs to insdel LDCs, which preserves the rate of the code and error-correction
capabilities up to constant factors, and whose locality grows only by a polylogarithmic factor in the block length.

In this paper we revisit the results of Ostrovsky and Paskin-Cherniavsky \cite{Ostrovsky-InsdelLDC-Compiler} and provide an alternate proof, using different combinatorial techniques.
We also observe that these results extend  to building Locally Correctable Insdel Codes (Insdel LCCs) from Locally Correctable Codes (LCCs) for Hamming errors. LCCs are a variant of LDCs, in which the decoder is tasked to locally correct every entry of the encoded message, namely $E(m)_i$, instead of the entries of the message itself. If the message $m$ is part of the encoding $E(m)$, then an LCC is also an LDC. In particular, all linear LCCs (i.e, whose codewords form a vector space) are also LDCs.

\begin{theorem}\label{thm:general}
If there exist $q$-query LDCs/LCCs with encoding $E:\Sigma^k\rightarrow{\Sigma^n}$, that can correct from $\delta$-fraction of Hamming errors, then there exist binary $q\cdot \polylog(n)$-query Insdel LDCs/LCCs  with codeword length $\Theta(n\log|\Sigma|)$, 
that can correct from  $\Theta(\delta)$-fraction of insertions and deletions.
\end{theorem}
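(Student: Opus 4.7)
The plan is to reduce the insdel setting to the Hamming setting through an \emph{indexed binary encoding} in which a local decoder can simulate queries to the original Hamming codeword. Given a Hamming LDC/LCC $\encoder : \alphabet^\lenmsg \to \alphabet^\lencodeword$ with $q$ queries and tolerance $\delta$, I would partition the new binary codeword into $\lencodeword$ equal-length blocks, where the $i$-th block encodes the pair $(i, \encoder(\msg)_i)$. The index $i$ is written in binary and protected by a short constant-rate Hamming-error ECC, so that a small amount of bit-corruption inside a block cannot change the recovered index; the symbol $\encoder(\msg)_i$ is likewise expanded into binary (again with a short Hamming code, plus enough padding/delimiters so that block boundaries can be recovered up to a constant factor shift). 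Since each block has length $\Theta(\log \lencodeword + \log|\alphabet|)$, the overall codeword length is $\Theta(\lencodeword \log|\alphabet|)$, matching the statement.

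Next I would describe the local decoder. For each query $j \in [\lencodeword]$ that the original Hamming decoder wishes to make, the insdel decoder runs \noisyBinarySearch on the received word \corrcodeword, viewed as a sequence of blocks, to locate a block whose recovered index equals $j$. The key structural point is that insertions and deletions preserve the relative order of the surviving blocks, so the sequence of recovered block-indices forms a corrupted sorted list \corrSortedList that disagrees with the true \sortedList $= (1,2,\ldots,\lencodeword)$ on only an $O(\delta)$ fraction of positions. Each invocation of \noisyBinarySearch touches $O(\log \lencodeword)$ blocks, each of $\polylog(\lencodeword)$ bits, so the total query complexity is $q \cdot \polylog(\lencodeword)$. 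The decoded symbol from the located block is then fed to the underlying Hamming decoder in place of an ordinary position read.

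For correctness, I would argue as follows. An insdel adversary corrupting a $\Theta(\delta)$ fraction of the total binary codeword can completely damage at most an $O(\delta)$ fraction of the $\lencodeword$ blocks, because each edit affects only a constant number of blocks after amortization via the delimiter/index padding. The remaining ``good'' blocks present an increasing subsequence of true indices, so \noisyBinarySearch succeeds with probability $1-1/\mathrm{poly}(\lencodeword)$ and returns the true $\encoder(\msg)_j$ whenever block $j$ itself is good. The composed view supplied to the outer Hamming decoder therefore lies within an $O(\delta)$-fraction Hamming ball of $\encoder(\msg)$ except on a negligible event; choosing constants so that this ball falls inside the Hamming decoder's tolerance recovers $\msg_i$ (LDC) or $\encoder(\msg)_i$ (LCC) with the required probability, after a union bound over the $q$ simulated reads.

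The main technical obstacle is establishing a sufficiently strong guarantee for \noisyBinarySearch against an \emph{optimal} insdel adversary \optattcker. Concretely, I need to show that even when the adversary adaptively distributes its $\Theta(\delta)$-fraction of edits so as to maximally misalign block boundaries and plant misleading indices, the binary-search procedure, using only $O(\log \lencodeword)$ block-probes, locates the target block (or correctly reports it destroyed) with only inverse-polynomial failure probability. Once this combinatorial primitive is in place, everything else---the block length accounting, the $\Theta(\delta)$ blow-up in error tolerance, and the $\polylog(\lencodeword)$ blow-up in locality---is routine bookkeeping driven by the parameters of the underlying Hamming LDC/LCC.
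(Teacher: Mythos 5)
Your high-level architecture --- append indices to blocks, surround with delimiters, and answer the outer decoder's queries via a noisy binary search over blocks --- matches the paper's construction. However, there are two concrete gaps that would break the proof as described, both concerning what happens inside a single block.

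First, you protect the index and the outer symbol with a short \emph{constant-rate Hamming-error ECC}, arguing that ``a small amount of bit-corruption inside a block cannot change the recovered index.'' But the corruption inside a block is not bit-corruption: a single deletion at position $p$ of a block shifts every subsequent bit, creating $\Omega(\text{block length})$ Hamming errors, which a constant-rate Hamming ECC cannot correct. Worse, the adversary's budget of $\Theta(\delta)$-fraction of edits on a codeword of length $\Theta(n\log|\Sigma|)$ is $\omega(\text{number of blocks})$, so it can afford to place at least one interior deletion in essentially every block, killing every inner Hamming decode even while leaving almost all blocks negligibly edit-corrupted. The inner code must itself be an insdel code; the paper uses the Schulman--Zuckerman code (\cref{lem:SZ-inner-code}), whose special structural property (every short window has high Hamming weight) is also what makes the buffer/delimiter detection work locally.

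Second, your rate accounting is off. You allocate one outer symbol per block, so each block carries $\Theta(\log n)$ bits of index overhead but only $\Theta(\log|\Sigma|)$ bits of payload, giving total length $\Theta\bigl(n(\log n + \log|\Sigma|)\bigr)$. When $\log|\Sigma| = o(\log n)$ --- in particular for binary or constant-alphabet outer codes --- this is $\Theta(n\log n)$, not the claimed $\Theta(n\log|\Sigma|)$. The paper groups $\Theta(\log n)$ outer symbols into each block so that the $\Theta(\log n)$-bit index is amortized to $O(1)$ bits per symbol, which is what makes the final length $\Theta(n\log|\Sigma|)$. Your remaining discussion --- that noisy binary search over nearly-sorted indices is the technical crux, and that one must show it succeeds against an adaptive insdel adversary --- is on target and is indeed where the bulk of the paper's work lies (via $\gamma$-good and $(\theta,\gamma)$-locally good blocks), but those two fixes are prerequisites before that analysis can get off the ground.
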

We emphasize that the resulting LDC/LCC of \cref{thm:general} is a {\em binary} code, even if the input LDC/LCC is over some higher alphabet $\Sigma$.

%






Classical constructions of LDCs/LCCs for Hamming errors fall into three query-complexity regimes. In the constant-query regime, the best known results are based on matching-vector codes, and give encodings that map $k$ symbols into $\exp( \exp(\sqrt{\log k \log\log k}))$ symbols \cite{Yekhanin08,DvirGY11,Efremenko12}. Since the best lower bounds are only quadratic \cite{Woodruff12}, for all we know so far, it is possible that there exist constant-query complexity LDCs with polynomial block length.
In the $\polylog k$-query regime, Reed-Muller codes are examples of $\log^c k$-query LDCs/LCCs of block length $k^{1+\frac{1}{c-1}+o(1)}$ for some $c>0$ (e.g., see \cite{Yekhanin12}). Finally, there exist sub-polynomial (but super logarithmic)-query complexity LDCs/LCCs with constant rate \cite{KoppartyMRS17}. These relatively recent developments improved upon the previous constant rate codes in the 
 $n^{\epsilon}$-query regime achieved by Reed-Muller codes, and later by more efficient constructions (e.g. \cite{KoppartySY14}). 



Given that our reduction achieves $\polylog n$-query complexity blow-up, the results above in conjunction with \cref{thm:general} give us the following asymptotic results.
\begin{corollary}
There exist $\polylog(k)$-query  Insdel LDCs/LCCs encoding $k$ symbols into $o(k^2)$ symbols, that can correct a constant fraction of insertions and deletions.
\end{corollary}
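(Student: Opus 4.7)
The plan is to instantiate \cref{thm:general} with a Reed-Muller code in the $\polylog$-query regime. Fix any constant $c>2$ (for concreteness, $c=3$). The Reed-Muller construction yields a $\log^c(k_0)$-query Hamming LDC/LCC mapping $k_0$ symbols over an alphabet $\Sigma$ of size polynomial in $k_0$ to $n_0 = k_0^{1+1/(c-1)+o(1)}$ symbols, while tolerating a constant fraction of Hamming errors. Since Reed-Muller codes are linear, their LCC property immediately implies the corresponding LDC property, as observed in the discussion preceding \cref{thm:general}.

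Feeding this construction into \cref{thm:general} produces a binary Insdel LDC/LCC with query complexity $\log^c(k_0)\cdot\polylog(n_0) = \polylog(k_0)$, message length $K = k_0\log|\Sigma| = \Theta(k_0 \log k_0)$ bits, codeword length $N = \Theta(n_0 \log|\Sigma|) = \Theta(k_0^{1+1/(c-1)+o(1)}\log k_0)$ bits, and correcting a constant fraction of insertions and deletions.

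It then remains to verify the two required asymptotic identities: $N = o(K^2)$ and $\polylog(k_0) = \polylog(K)$. The second is immediate since $K = \Theta(k_0 \log k_0)$. For the first, note that $K^2 = \Theta(k_0^2 \log^2 k_0)$, while $1 + 1/(c-1) < 2$ whenever $c > 2$, so $N = O(k_0^{1+1/(c-1)+o(1)} \log k_0) = o(k_0^2) = o(K^2)$. The only point requiring any care is the choice of $c$ strictly greater than $2$, so that the Reed-Muller block length comfortably beats the quadratic barrier after absorbing the constant-factor blowup promised by \cref{thm:general}; no substantive new idea is needed beyond \cref{thm:general} itself, which is why I expect the main obstacle of the paper to lie in proving \cref{thm:general}, not in deducing this corollary from it.
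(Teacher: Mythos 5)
Your proposal is correct and follows the same route the paper takes (which it states only in passing): instantiate \cref{thm:general} with a $\log^c k$-query Reed--Muller LDC/LCC with block length $k^{1+1/(c-1)+o(1)}$ for a constant $c>2$, and observe that the query and length blow-ups from the reduction are only polylogarithmic and multiplicative by $\Theta(\log|\Sigma|)$, respectively, so the result still has $\polylog(k)$ queries and sub-quadratic codeword length. One small inaccuracy worth flagging: for Reed--Muller in the $\log^c k$-query regime the alphabet $\Sigma$ is typically of size $\polylog(k_0)$ rather than polynomial in $k_0$; this does not affect your conclusion (it only makes the $\log|\Sigma|$ factor smaller), but the parenthetical claim about alphabet size should not be stated as a property of the construction.
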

\begin{corollary}
There exist  $(\log k)^{O(\log \log k)}$-query Insdel LDCs/LCCs with constant rate, that can correct from a constant fraction of insertions and deletions. 
\end{corollary}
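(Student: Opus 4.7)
The plan is to instantiate \cref{thm:general} with the constant-rate, sub-polynomial-query Hamming LDCs/LCCs of Kopparty, Meir, Ron-Zewi, and Saraf~\cite{KoppartyMRS17}. Those codes encode $k$ symbols over a constant-size alphabet $\Sigma$ into $n = \Theta(k)$ symbols, tolerate a constant fraction of Hamming errors, and have locality $q = (\log k)^{O(\log \log k)}$, which is exactly the ingredient needed to match the target query bound.

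First, I would fix a constant $\delta > 0$ inside the Hamming-error tolerance of the KMRS codes and feed the resulting family into \cref{thm:general}. Because $|\Sigma| = O(1)$, the resulting binary insdel LDC/LCC has block length $\Theta(n \log|\Sigma|) = \Theta(k)$, so it retains constant rate. The theorem also only loses a constant factor on the error tolerance, so a constant fraction of insertions and deletions is still corrected.

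Next, I would verify the locality bound. The theorem multiplies the query complexity by a factor of $\polylog(n)$, and since $n = \Theta(k)$ this is $(\log k)^{O(1)}$. Hence the total locality is
\[
q \cdot \polylog(n) \;=\; (\log k)^{O(\log \log k)} \cdot (\log k)^{O(1)} \;=\; (\log k)^{O(\log \log k)},
\]
since the constant-exponent factor is absorbed into the $O(\log\log k)$ term in the exponent.

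The one nontrivial point (and the only plausible obstacle) is confirming that the KMRS codes can be taken over a constant-size alphabet, so that $\log|\Sigma|$ does not inflate the block length and ruin the rate; this is indeed the case in the relevant parameter regime, as noted in the discussion preceding the corollary. With this in hand, the argument reduces to a direct substitution of parameters into \cref{thm:general}, with no further combinatorial work required.
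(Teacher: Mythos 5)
Your argument is correct and is exactly the instantiation the paper has in mind: the corollary follows by plugging the constant-rate, $(\log k)^{O(\log\log k)}$-query Hamming LDCs/LCCs of Kopparty, Meir, Ron-Zewi, and Saraf \cite{KoppartyMRS17} into \cref{thm:general}, with the $\polylog(n)$ query blowup absorbed into the exponent. Your parameter checks (constant alphabet so the rate stays constant after the $\Theta(n\log|\Sigma|)$ block-length expansion, and $n=\Theta(k)$ so $\polylog(n)=(\log k)^{O(1)}$ is dominated by $(\log k)^{O(\log\log k)}$) are precisely what the paper implicitly relies on.
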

Our results, similarly to those in \cite{Ostrovsky-InsdelLDC-Compiler}, do not have implications in the constant-query regime. We conjecture that there do not exist constant-query LDCs/LCCs, regardless of their rate.
Since achieving locality against insertions and deletions appears to be a difficult task,  and the area is in its infancy, we believe our results and techniques may motivate further research.

\subsection{Overview of Techniques}
\noindent{\bf Searching in a Nearly Sorted Array.} To build intuition for our local decoding algorithm we consider the following simpler problem: We are given a nearly sorted array $A$ of $n$ distinct elements. 
By nearly sorted we mean that there is another sorted array $A'$ such that $A'[i] = A[i]$ on all but $n'$ indices. 
Given an input $x$ we would like to quickly find $x$ in the original array. 
In the worst case this would require time at least $\Omega(n')$ so we relax the requirement that we always find $x$ to say that there are at most  $cn'$ items that we fail to find $x$ for some constant $c > 0$. 

To design our noisy binary search algorithm that meets these requirement we borrow a notion of local goodness used in the design and analysis of depth-robust graphs---a combinatorial object that has found many applications in cryptography \cite{EGS75,CCS:AlwBloHar17,EC:AlwBloPie18}. 
In particular, fixing $A$ and $A'$ (sorted) we say that an index $j$ is corrupted if $A[j] \neq A'[j]$. 
We say that an index $i$ is $\theta$-locally good if for any $r \geq 0$ at most $\theta$ fraction of the indices $ j \in [i,\ldots, i+r]$ are corrupted and at most $\theta$ fraction of the indices in $[i-r,i]$ are corrupted. 
If at most $n'$ indices are corrupted then one can prove that at least  $n-2n'/\theta$ indices are $\theta$-locally good~\cite{EGS75}. 

As long as the constant $\theta$ is suitably small we can design an efficient randomize search procedure which, with high probability, correctly locates $x$ whenever $x=A[i]$, provided that the unknown index $i$ is $\theta$-locally good. 
Intuitively, suppose we have already narrowed down our search to the smaller range $I= [i_0, i_1]$. 
The rank of $x=A[i]$ in $A'[i_0],\ldots, A'[i_1]$ is exactly $i-i_0+1$ since $A[i]$ is uncorrupted and the rank of $x$ in $A[i_0],\ldots, A[i_1]$ can change by at most $\pm \theta (i-i_0+1)$ --- at most $\theta (i_1-i_0+1)$ indices $j' \in [i_0,i_1]$ can be corrupted since $i \in [i_0,i_1]$ is $\theta$-locally good. 
Now suppose that we sample $t=\polylog(n)$ indices $j_1,\ldots,j_t \in [i_0, i_1]$ and select the median $y_{med}$ of $A[j_1],\ldots, A[j_t]$. 
With high probability the rank $r$ of $y_{med}$ in $A[j_1],\ldots, A[j_t]$ will be close to $(i_1-i_0+1)/2$; i.e., $|r-(i_1-i_0+1)/2| \leq \delta (i_1-i_0+1)$ for some arbitrarily constant $\delta$ which may depend on the number of samples $t$. 
Thus, for suitable constants $\theta$ and $\delta$ whenever $x > y_{med}$ (resp. $x < y_{med}$) we can safely conclude that $i > i_0 + (i_1-i_0+1)/8$ (resp. $i < i_1 - (i_1-i_0+1)/8$) and search in the smaller interval $I' = [i_0+(i_1-i_0+1)/8, i_1]$ (resp. $I' = [i_0, i_1-(i_1-i_0+1)/8]$). 
In both cases the size of the search space is reduced by a constant multiplicative factor so the procedure will terminate after $O(\log n)$ rounds making $O(t \log n)$ queries. 
At its core our local decoding algorithm relies on a very similar idea. \mbox{}\\

\noindent{\bf Encoding.}
Our encoder builds off of the known techniques of concatenation codes.
First, a message $x$ is encoded via the outer code to obtain some (intermediate) encoding $y$.
We then partition $y$ into some number $k$ blocks $y = y_1 \circ \cdots \circ y_k$ and append each block $y_i$ with index $i$ to obtain $y_i \circ i$.
Each $y_i\circ i$ is then encoded with the inner encoder to obtain some $d_i$.
Then each $d_i$ is prepended and appended with a run of $0$s (i.e., buffers), to obtain $c_i$.
The encoder then outputs $c = c_1 \circ \cdots \circ c_k$ as the final codeword. 
For our inner encoder, we in fact use the Schulman-Zuckerman (SZ) \cite{SchZuc99} edit distance code. \mbox{}\\

\noindent{\bf Decoding.}
Given oracle access to some corrupted codeword $c'$, on input index $i$, the decoder simulates the outer decoder and must answer the outer decoder oracle queries.
The decoder uses the inner decoder to answer these queries.
However, there are two major challenges: (1) Unlike the Hamming-type errors, even only a few insertions and deletions make it difficult for the decoder to know where to probe; and (2) The boundaries between blocks can be ambiguous in the presence of insdel errors. 
We overcome these challenges via a variant of binary search, which we name \textsf{NoisyBinarySearch}, together with a buffer detection algorithm, and make use of a block decomposition of the corrupted codeword to facilitate the analysis. \mbox{}\\

\noindent{\bf Analysis.}
The analyses of the binary search and the buffer detection algorithms are based on the notion of ``good blocks'' and ``locally good blocks'', which are natural extensions of the notion of $\theta$-locally good indices discussed above.
Recall that our encoder outputs a final codeword that is a concatenation of $k$ smaller codeword ``blocks''; namely $\Enc(x) = c_1 \circ \cdots \circ c_k$.
Suppose $c'$ is the corrupted codeword obtained by corrupting $c$ with $\delta$-fraction of insertion-deletion errors, and suppose we have a method of partitioning $c'$ into $k$ blocks $c'_1\circ \cdots \circ c'_{k}$.
Then we say that block $c'_j$ is a $\gamma$-good block if it is within $\gamma$-fractional edit distance to the {\em uncorrupted} block $c_j$.
Moreover, $c'_j$ is $(\theta,\gamma)$-locally good if at least $(1-\theta)$ fraction of the blocks in every neighborhood around $c'_j$ are $\gamma$-good and if the total number of corruptions in every neighborhood is bounded.
Here $\theta$ and $\gamma$ are suitably chosen constants.
Both notions of good and locally good blocks are necessary to the success of our binary search algorithm \textsf{NoisyBinarySearch}. 

The goal of \textsf{NoisyBinarySearch} is to locate a block with a given index $j$, and the idea is to decode the corrupted codeword at random positions to get a list of decoded indices (recall that the index of each block is appended to it). 
Since a large fraction of blocks are $\gamma$-good blocks, the sampled indices induce a new search interval for the next iteration. 
In order to apply this argument recursively, we need that the error density of the search interval does not increase in each iteration. 
Locally good blocks provide precisely this property. \mbox{}\\

\noindent{\bf Comparison with the techniques of \cite{Ostrovsky-InsdelLDC-Compiler}.} The Insdel LDC construction of \cite{Ostrovsky-InsdelLDC-Compiler} also uses Schulman-Zuckerman (SZ) \cite{SchZuc99} codes, except it opens them up and directly uses the inefficient greedy inner codes used for the final efficient SZ codes themselves. In our case, we observe that the efficiently decodable codes of \cite{SchZuc99} have the additional property described in \Cref{lem:SZ-inner-code}, which states that small blocks have large weight. 
This observation implies a running time that is polynomial in the query complexity of the final codes, since it helps make the buffer-finding algorithms local.
The analysis of \cite{Ostrovsky-InsdelLDC-Compiler} also uses a binary search component, but our analysis and their analysis differ significantly.

\subsection{Related work} The study of codes for insertions and deletions was initiated by Levenstein \cite{Levenshtein_SPD66} in the mid 60's. Since then there has been a large body of works concerning insdel codes, and we refer the reader to the excellent surveys of \cite{Sloane2002OnSC,Mercier2010ASO,Mitzenmachen-survey}.
In particular, random codes with positive rate correcting from a large fraction of deletions were studied in \cite{Kiwi_expectedlength,guruswami2017deletion}. Efficiently encodable/decodable  codes, with constant rate, and that can withstand a constant fraction of insertion and deletions were extensively studied in \cite{SchZuc99,guruswami2017deletion,HaeuplerS17, ChengJLW18,HaeuplerS18, HaeuplerS18, ChengHLSW19,GuruswamiL19,BrakensiekGZ18,GuruswamiHS20,cheng2020efficient, ChengGHL20}. A recent area of interest is building ``list-decodable'' insdel codes, that can withstand a larger fraction of insertions and deletions, while outputting a small list of potential codewords  \cite{HaeuplerSS18,GuruswamiHS20,LiuTX20}.

In \cite{HaeuplerS18}, Haeupler and Shahrasbi construct explicit synchronization strings which can be locally decoded, in the sense that each index of the string can be computed using values located at only a small number of other indices. Synchronization strings are powerful combinatorial objects that can be used to index elements in constructions of insdel codes. These explicit and locally decodable synchronizations strings were then used to imply near linear time interactive coding scheme for insdel errors.

Recently, in \cite{ChengLZ20}, Cheng, Li and Zheng propose the notion of locally decodable codes with randomized encoding, in both the Hamming and edit distance regimes. They study such codes in the settings in which the encoder and decoder share randomness, or the channel is blivious to the codeword, and hence  adds error patterns non-adaptively. For edit error they obtain codes with $n=O(k)$ or $n= k \log k$ and $poly \log k$ query complexity.

There are various other notions of ``noisy search'' that have been studied in the literature.
Dhagat, Gacs, and Winkler~\cite{SODA:DhaGacWin92} consider a noisy version of the game ``Twenty Questions''.
In this problem, an algorithm searches an array for some element $x$, and a bounded number of incorrect answers can be given to the algorithm queries, and the goal is to minimize the number of queries made by an algorithm.
Feige \etal~\cite{FRPU94} study the depth of {\em noisy decision trees}: decision trees where each node gives the incorrect answer with some constant probability, and moreover each node success or failure is independent.
Karp and Kleinberg~\cite{SODA:KarKle07} study noisy binary search where direct comparison between elements is not possible; instead, each element has an associated biased coin.
Given $n$ coins with probabilities $p_1 \leq \dotsc \leq p_n$, target value $\tau \in [0,1]$, and error $\eps$, the goal is to design an algorithm which, with high probability, finds index $i$ such that the intervals $[p_i, p_{i+1}]$ and $[\tau-\eps, \tau+\eps]$ intersect.
Braverman and Mossel~\cite{SODA:BraMos08}, Klein \etal~\cite{ESA:KPSW11} and Geissmann \etal~\cite{GLLP17} study noisy sorting in the presence of recurrent random errors: when an element is first queried, it has some (independent) probability of returning the incorrect answer, and all subsequent queries to this element are fixed to this answer.
We note that each of the above notions of ``noisy search'' are different from each other and, in particular, different from our noisy search.


\subsection{Organization}
We begin with some general preliminaries in \cref{sec:prelims}. 
In \cref{sec:encoder-decoder} we present the formal encoder and decoder. In \cref{sec:block-decomposition} we define block decomposition which play an important role in our analysis. 
In \cref{sec:outer-decoder}, \cref{sec:noisy-binary-search}, and \cref{sec:buffer-detection} we prove correctness of our local decoding algorithm in a top-down fashion. 

\section{Preliminaries}\label{sec:prelims}
For integers $a \leq b$, we let $[a,b]$ denote the set $\{a, a+1, \dotsc, b\}$.
For positive integer $n$ we let $[n] \defeq [1,n]$.
All logarithms are base $2$ unless specified otherwise. 
We denote $x\circ y$ as the concatenation of string $x$ with string $y$. 
For any $x \in \Sigma^n$, $x[i] \in \Sigma$ denotes the $i$\th coordinate of $x$. 
Further, for $i < j$, we let $x[i,j] = (x_i, x_{i+1}, \dotsc, x_j)$ denote coordinates $i$ through $j$ of $x$.
A function $f(n)$ is said to be \emph{negligible} in $n$ if $f(n) = o\left(n^{-d}\right)$ for any $d \in \mathbb{N}$.
We let $\negl(n)$ denote an unspecified negligible function.
For any $x, y \in \Sigma^n$, $\HAM(x,y) = |\{i:x[i] \not = y[i]\}|$ denotes the Hamming distance between $x$ and $y$. 
Furthermore, $\ED(x,y)$ denotes the edit distance between $x$ and $y$; i.e., the minimum number of symbol insertions and deletions to transform $x$ into $y$. 
For any string $x \in \Sigma^*$ with finite length, we denote $|x|$ as the length of $x$. 
The fractional Hamming distance (resp., edit distance) is $\HAM(x,y)/|x|$ (resp., $\ED(x,y)/(2|x|)$).

\begin{definition}[Locally Decodable Codes for Hamming and Insdel errors]\label{def:ldc}
A code with encoding function $E: \Sigma_{M}^k\rightarrow \Sigma_{C}^n$ is a {\em $(q, \delta, \epsilon)$-Locally Decodable Code} (LDC) if there exists a randomized decoder $\mathcal D$, such that for every message $x\in \Sigma_{M}^k$ and index $i\in [k]$, and for every $w\in \Sigma_{C}^*$ such that $\dist(w, E(x))\leq \delta$ the decoder makes at most $q$ queries to $w$ and outputs $x_i$ with probability $\frac{1}{2}+\epsilon$; when $\dist$ is the fractional Hamming distance then this is a Hamming LDC; when $\dist$ is the fractional edit distance then this is an Insdel LDC. 
We also say that the code is binary if $\Sigma_{C}=\set{0,1}$.
\end{definition}

\begin{definition}[Locally Correctable Codes for Hamming and Insdel errors]\label{def:lcc}
A code with encoding function $E: \Sigma_{M}^k\rightarrow \Sigma_{C}^n$ is a {\em $(q, \delta, \epsilon)$-Locally Correctable Code} (LCC) if there exists a randomized decoder $\mathcal D$, such that for every message $x\in \Sigma_{M}^k$ and index $j\in [n]$,  and for every $w\in \Sigma_{C}^*$ such that $\dist(w, E(x))\leq \delta$ the decoder makes at most $q$ queries to $w$ and outputs $E(x)_j$ with probability $\frac{1}{2}+\epsilon$; when $\dist$ is the fractional Hamming distance then this is a Hamming error LCC; when $\dist$ is the fractional edit distance then this is an Insdel LCC. 
We also say that the code is binary if $\Sigma_{C}=\set{0,1}$.
\end{definition}

Our construction, like most insdel codes in the literature, is obtained via adaptations of the simple but powerful operation of code concatenation. 
If $C_{out}$ is an ``outer'' code over alphabet $\Sigma_{out}$ with encoding function $E_{out}: \Sigma_{out}^k\rightarrow \Sigma_{out}^n$, and $C_{in}$ is an ``inner'' code over alphabet $\Sigma_{in}$ with encoding function $E_{in}: \Sigma_{out}\rightarrow \Sigma_{in}^p$, then the concatenated code $C_{out}\bullet C_{in}$ is the code whose codewords lie in $\Sigma_{in}^{np}$, obtained by first applying $E_{out}$ to the message, and then applying $E_{in}$ to each symbol of the resulting outer codeword.

\section{Insdel LDCs/LCCs{} from Hamming LDCs/LCCs} \label{sec:encoder-decoder}
We give our main construction of Insdel LDCs/LCCs from Hamming LDCs/LCCs.
Our construction can be viewed as a procedure which, given outer codes $C_{out}$ and binary inner codes $C_{in}$ satisfying certain properties, produces binary codes $C(C_{out}, C_{in})$. 
This is formulated in the following theorem, which implies \Cref{thm:general}.

\begin{restatable}{theorem}{main}\label{thm:main}
Let $C_{out}$ and $C_{in}$ be codes such that
\begin{itemize}
    \item $C_{out}$ defined by $\Enc_{out} \colon \Sigma^k\rightarrow \Sigma^m$ is an a $(\ell_{out}, \delta_{out}, \epsilon_{out})$-LDC/LCC (for Hamming errors).
    \item $C_{in}$ is family of binary polynomial-time encodable/decodable codes with rate $1/\beta_{in}$ capable of correcting $\delta_{in}$ fraction of insdel errors. In addition, there are constants $\alpha_1, \alpha_2 \in (0,1)$ such that for any codeword $c$ of $C_{in}$, any substring of $c$ with length at least $\alpha_1|c|$ has fractional Hamming weight at least $\alpha_2$.
\end{itemize}
Then $C(C_{out}, C_{in})$ is a binary $\tp{\ell_{out} \cdot O\tp{\log^4 n'}, \Omega(\delta_{out}\delta_{in}), \epsilon-\negl(n')}$-Insdel LDC, or a binary $\tp{\ell_{out} \cdot O\tp{\log^5 n'}, \Omega(\delta_{out}\delta_{in}), \epsilon-\negl(n')}$-Insdel LCC, respectively. Here the codewords of $C$ have length $n = \beta m$ where $\beta = O\tp{\beta_{in}\log|\Sigma|}$, and $n'$ denotes the length of received word. 
\end{restatable}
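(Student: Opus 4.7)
The plan is to instantiate the concatenation construction described in the techniques overview and then analyze it in a top-down fashion. Concretely, the encoder would first apply $\Enc_{out}$ to obtain $y \in \Sigma^m$, partition $y$ into $k$ blocks $y_1 \circ \cdots \circ y_k$ of equal (constant) symbol length, append each $y_j$ with a binary representation of its index $j$, apply $\Enc_{in}$ to each such indexed block, and finally sandwich each resulting inner codeword $d_j$ between two all-zero buffers of appropriate length to obtain a block $c_j$. The output is $c = c_1 \circ \cdots \circ c_k$, of length $n = \beta m$ with $\beta = \O{\beta_{in} \log |\Sigma|}$. The decoder, on input $i$ (for an LDC) or a codeword coordinate (for an LCC), simulates the outer LDC/LCC decoder; whenever the outer decoder issues a query to position $i$ of $y$, our decoder needs to locate and inner-decode the block $c'_j$ of $c'$ whose index $j$ contains the queried position, and then (for the LCC) extract the specific symbol from the recovered inner message.

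The core technical components would be, in order: (i) a block decomposition of the received word $c'$, pairing it with the original blocks of $c$ in a way that respects the edit distance; (ii) the definitions of $\gamma$-good blocks (blocks $c'_j$ within $\gamma$-fractional edit distance of $c_j$) and of $(\theta, \gamma)$-locally good blocks (blocks whose every neighborhood contains at least $1-\theta$ fraction $\gamma$-good blocks and bounded total corruption); (iii) an averaging/amortized-corruption argument, in the spirit of depth-robust graph analyses, showing that all but a small fraction of indices are $(\theta,\gamma)$-locally good; (iv) the \noisyBinarySearch procedure, which at each of $\O{\log n'}$ rounds draws $\polylog(n')$ random positions in the current interval, uses the buffer detection subroutine to identify block boundaries, inner-decodes the sampled blocks to extract their embedded indices, and contracts the search interval based on the median index; and (v) a buffer-finding subroutine that relies on the $(\alpha_1, \alpha_2)$ dense-substring property of $C_{in}$ from \Cref{lem:SZ-inner-code} to distinguish inner-codeword regions from buffer regions using only $\O{\log n'}$ probes. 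Combining these, one argues that whenever the target block is $(\theta,\gamma)$-locally good, \noisyBinarySearch returns that block correctly except with negligible probability, so that the outer decoder receives answers that agree with a Hamming-corruption of $\Enc_{out}(x)$ at rate $\Omega(\delta_{out})$ (assuming $\delta = \Omega(\delta_{out} \delta_{in})$); the standard LDC/LCC guarantee then gives the stated success probability. The query bound multiplies the outer locality $\ell_{out}$ by the per-query cost $\O{\log^4 n'}$ (for the LDC) or $\O{\log^5 n'}$ (for the LCC, where an extra $\log n'$ factor comes from having to inner-decode an entire block rather than a single message coordinate).

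The main obstacle will be the correctness analysis of \noisyBinarySearch, specifically proving that the corruption density of the current search interval does not grow across iterations. The naive induction fails because shrinking to a subinterval may concentrate the remaining corruptions. This is exactly what the $(\theta,\gamma)$-locally good property is designed to handle: if the target block index $j$ is locally good, then \emph{every} subinterval of $[1,k]$ containing $j$ inherits a corruption fraction at most $\theta$, so the median argument sketched in the overview still safely eliminates a constant fraction of the interval. Making this rigorous requires carefully choosing the constants $\theta, \gamma, \delta_{out}, \delta_{in}$, the Chernoff parameters for the median step, and the buffer-length parameters so that (a) the probability that \noisyBinarySearch fails on a locally good block is $\negl(n')$, (b) the number of indices that fail to be locally good is $\O{\delta_{out} m}$ so that they only contribute a small additional Hamming-error rate to the outer decoder, and (c) the buffer detection succeeds whenever invoked within a $\gamma$-good block. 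A secondary subtlety is that blocks identified by \noisyBinarySearch may not align exactly with the original partition; I would absorb the resulting mismatched symbols into the $\gamma$-goodness budget so that the outer decoder still sees a Hamming-corruption of at most $\delta_{out}$ fraction with high probability.
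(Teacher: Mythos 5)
Your proposal follows essentially the same architecture as the paper's proof: concatenation with an SZ-style inner code, zero buffers, block decomposition of the received word, $\gamma$-good and $(\theta,\gamma)$-locally good blocks, an amortized bound on locally bad blocks in the spirit of depth-robust graph arguments, and a median-based noisy binary search whose correctness on locally good targets is what feeds the outer decoder a sufficiently low-Hamming-error string. The identification of the main obstacle (corruption density must not grow as the search interval shrinks, and local goodness is precisely what controls this) is exactly the paper's key insight.

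There is, however, one concrete flaw in the encoder that would break the rate claim. You write that $y$ is partitioned into blocks of \emph{constant} symbol length. If each block has $O(1)$ symbols, then the index appended to each block still requires $\Theta(\log m)$ bits, so after inner encoding and buffering each block has length $\Theta(\log m)$ bits, giving a total codeword length of $\Theta(m \log m)$ and a rate that vanishes as $1/\log m$. This contradicts the theorem's claim that $n = \beta m$ with $\beta = O(\beta_{in}\log|\Sigma|)$ a constant. The paper instead uses blocks of $\log m$ \emph{symbols} each (so there are $d = m/\log m$ blocks), which makes the $\Theta(\log m)$-bit index comparable to the $\Theta(\log m \cdot \log|\Sigma|)$-bit payload, and yields a constant-factor length blowup. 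You would need to make this change for the rate analysis to go through; once you do, the rest of your outline lines up with the paper. A secondary, more cosmetic point: the extra $\log n'$ factor in the LCC query bound comes from running $\Dec_{out}$ once for each of the $\log m$ outer-codeword symbols contained in the block that holds the target coordinate (so that the decoder can re-encode the whole block and read off one bit), rather than from a more expensive inner decoding step per se; your phrasing gestures at this but attributes the cost slightly differently.
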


For the inner code, we make use of the following efficient code constructed by  Schulman-Zuckerman \cite{SchZuc99}. 

\begin{lemma}[SZ-code~\cite{SchZuc99}]\label{lem:SZ-inner-code}
There exist constants $\beta_{in} \ge 1$, $\delta_{in} > 0$, such that for large enough values of $t > 0$, there exists a code $SZ(t) = (\Enc, \Dec)$ where $\Enc:\{0,1\}^{t} \rightarrow\{0,1\}^{\beta_{in}t}$ and $\Dec:\{0,1\}^{\beta_{in}t} \rightarrow\{0,1\}^{t}\cup \{\bot\}$ capable of correcting $\delta_{in}$ fraction of insdel errors, having the following properties:
\begin{enumerate}
	\item $\Enc$ and $\Dec$ run in  time $poly(t)$;
	\item For all $x \in \{0,1\}^t$, every interval of length $2\log t$ of $\Enc(x)$ has fractional Hamming weight at least $2/5$. 
\end{enumerate}
\end{lemma}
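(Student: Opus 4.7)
The plan is to follow the Schulman-Zuckerman \cite{SchZuc99} concatenation paradigm with a more carefully chosen inner codebook, so that the additional density guarantee (property 2) holds without breaking the SZ insdel-correction analysis.

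First I would take the outer code to be a Reed-Solomon (or Reed-Solomon-type) code of constant rate over a field of characteristic $2$, with outer alphabet size $2^L$, and set the inner block length to $L = c \log t$ for a small constant $c < 1$. Property 1 (polynomial-time encoding and decoding) is then standard, exactly as in \cite{SchZuc99}: the outer code is explicit and efficient, and because there are only $2^L = \text{poly}(t)$ candidate inner codewords, the inner codebook can be enumerated exhaustively and each inner block brute-forced in polynomial time.

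The core new step is to show that the inner codebook can be chosen so that (i) it satisfies the insdel-distance requirement of the standard SZ analysis, and (ii) every codeword is \emph{dense}, with fractional Hamming weight at least a fixed constant $\alpha$ satisfying $\alpha(1-c) \ge 2/5$. A Chernoff plus union bound argument shows that a uniformly random binary string of length $L$ is dense with probability $1 - 2^{-\Omega(L)}$, so nearly all strings in $\{0,1\}^L$ satisfy (ii); the SZ insdel-distance argument then runs inside this overwhelming-majority pool to produce a codebook of the required size satisfying both (i) and (ii), and such a codebook is found by brute-force enumeration. Property 2 for the concatenated codeword then follows by a short counting argument: any window of length $2\log t$ contains at least $2/c - 2$ complete inner blocks, each of Hamming weight at least $\alpha L$, and so has fractional Hamming weight at least $\alpha(1-c) \ge 2/5$.

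The main obstacle is the compatibility of the two inner-code constraints: density is a local, per-codeword constraint, while insdel distance is a global constraint on all pairs of codewords. I expect this to be resolved by the two-step argument above (first restrict to dense strings, then apply the SZ selection inside this pool); since each defect event has exponentially small probability in $L$, a straightforward union bound shows that the surviving codebook still has the required size, and the resulting SZ-style decoding guarantees are inherited unchanged.
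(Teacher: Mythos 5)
The paper does not supply a proof of \cref{lem:SZ-inner-code}; it is presented as a citation to Schulman--Zuckerman \cite{SchZuc99}, and the density condition (item~2) is explicitly described in the paper's discussion as an \emph{observation} about the published SZ construction rather than something re-derived in the text. Your proposal therefore takes a genuinely different route: you re-construct a code in the SZ style and argue from scratch that it can be made dense, whereas the paper simply verifies the density claim against SZ's existing encoding. Both routes are legitimate, but you should be aware that your argument proves the existential statement ``such a code exists'' rather than the cited statement ``the SZ code has this property,'' which is what the rest of the paper actually uses.

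As for your reconstruction itself, the core idea --- restrict the brute-force inner codebook to the overwhelming majority of $\alpha$-dense strings, then observe that any window of length $2\log t$ must cover $\Omega(1)$ complete inner blocks of length $c\log t$ with $c<1$, so their weight alone forces density near $\alpha(1-c)$ --- is sound, and the two constraints (per-codeword density and pairwise edit distance) are indeed compatible via the union bound you describe. Two details need tightening. First, be explicit about whether your $L$ is the inner \emph{message} length (the outer alphabet bit-size) or the inner \emph{codeword} length; the window count needs the codeword length $L'$, and the density bound is really $\alpha\bigl(1 - L'/\log t\bigr)$ up to the floor from counting complete blocks. That floor costs about one extra block, so for concrete constants ($\alpha$ strictly below $1/2$ to leave room in the Chernoff bound) you must take $c$ quite small; e.g.\ $\alpha\approx 0.45$, $c=0.1$ barely fails once the floor is applied, while $c=0.01$ comfortably clears $2/5$. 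Second, and more substantively, the published SZ encoding interleaves separator/synchronization material between inner blocks; if any such separator is low-weight it drags down windows that straddle it, and your density calculation currently ignores it. You should either confirm from \cite{SchZuc99} that the separators are absent or themselves dense, or carry their length and weight through the window count. Neither point looks fatal, but as written the argument quietly assumes a buffer-free concatenation, which is exactly the aspect this paper's own encoder adds \emph{on top} of $SZ(t)$, not inside it.
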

We formally complete the proof of correctness of \cref{thm:main} in \cref{sec:outer-decoder}. 
{\color{black}We only prove the correctness of the LDC decoder since it is cleaner and captures the general strategy of the LCC decoder as well. }
We dedicate the remainder of this section to outlining the construction of the encoding and decoding algorithms.

\subsection{Encoding and Decoding Algorithms}
In our construction of $C(C_{out}, C_{in})$, we denote 
the specific code of \Cref{lem:SZ-inner-code} as our inner code
$C_{in} = (\Enc_{in}, \Dec_{in})$. 
For our purpose, we view a message $x \in \Sigma^m$ as a pair in $[m] \times \Sigma^{\log m}$. 
The encoding function $\Enc_{in} \colon [m] \times \Sigma^{\log m} \rightarrow \set{0,1}^{\beta_{in}\tp{1 + \log|\Sigma|}\log m}$ maps a string in $\Sigma$ of length $\log m$ appended with an index from set $[m]$ ---  i.e., a (padded) message of bit-length $\tp{1 + \log|\Sigma|}\log m$ --- to a binary string of length $\beta_{in}\tp{1 + \log|\Sigma|}\log m$. The inner decoder $\Dec_{in}$ on input $y'$ returns $x$ if $\ED\tp{y', y} \le \delta_{in}\cdot 2|y|$ where $y = \Enc_{in}(x)$. 
The information rate of this code is $R_{in} = 1/\beta_{in}$.\mbox{}\\

\noindent\textbf{\textsf{The Encoder $(\Enc)$.}} 
Given an input string $x \in \Sigma^k$ and outer code $C_{out} = (\Enc_{out}, \Dec_{out})$, our final encoder \Enc does the following: 
\begin{enumerate}
	\item Computes the outer encoding of $x$ as $s = \Enc_{out}(x)$;
	
	\item For each $i \in [m/\log m]$, groups $\log m$ symbols $s[(i-1)\log m,i\log m -1]$ into a single block $b_i \in \Sigma^{\log m}$;
	
	\item For each $i \in [m/\log m]$, computes the $i^{th}$ block of the inner encoding as $Y\itn{i} = \Enc_{in}(i \circ b_i)$ --- i.e., computes the inner encoding of the $i$\th block concatenated with the index $i$;
	
	\item For some constant $\alpha \in (0,1)$ (to be decided), appends a $\alpha\log m$-long buffer of zeros before and after each block; and
	
	\item Outputs the concatenation of the buffered blocks (in indexed order) as the final codeword $c = \Enc(x) \in \{0,1\}^{n}$, where
	\begin{align}
		& &c = {{\color{gray}\bigg(}0^{\alpha\log m} \circ Y\itn{1} \circ 0^{\alpha\log m}{\color{gray}\bigg)}\circ \cdots \circ {\color{gray}\bigg(}0^{\alpha\log m} \circ Y\itn{m/\log m} \circ 0^{\alpha\log m}{\color{gray}\bigg)}.} \label{eq:codeword}
	\end{align}
\end{enumerate}
Denoting $\beta = 2\alpha + \beta_{in}\tp{1 + \log|\Sigma|}$, the length of $c = \Enc(x)$ is 
$$ n = \tp{2\alpha\log m + \beta_{in}\tp{1 + \log|\Sigma|}\log m} \cdot \frac{m}{\log m} = \beta m. $$

\noindent\textbf{\textsf{The LDC Decoder $(\Dec)$.}}
We start off by describing the high-level overview of our decoder $\Dec$ and discuss the challenges and solutions behind its design. 
As defined in \cref{eq:codeword}, our encoder \Enc, on input $x \in \Sigma^k$, outputs a codeword $c = c_1 \circ \cdots \circ c_{d} \in \zo^n$, where $d = m/\log m$.
The decoder setting is as follows:
on input $i \in [k]$ and query access to the corrupted codeword $c' \in \zo^{n'}$ such that $\ED(c,c') \leq 2n\delta$, our final decoder \Dec{} needs to output the message symbol $x[i]$ with high probability.
Notice that if \Dec{} had access to the original codeword $s = \Enc_{out}(x)$, then \Dec{} could simply run $\Dec_{out}(i)$ while supplying it with oracle access to this codeword $s$.
This naturally motivates the following decoding strategy: simulate oracle access to the codeword $s$ by answering the queries of $\Dec_{out}$ by decoding the appropriate bits using $\Dec_{in}$. 
We give a detailed description of this strategy next.

Let $Q_i = \{q_1,\dotsc, q_{\ell_{out}}\} \subset[m]$ be a set of indices which $\Dec_{out}(i)$ queries.\footnote{Our construction also supports adaptive queries, but we use non-adaptive queries for ease of presentation.}
We observe that if our decoder had oracle access to the uncorrupted codeword $c$, then answering these queries would be simple:
\begin{enumerate}
	\item For each $q \in Q_i$, let $b_j = s[(j-1)\log m, j\log m-1]$ be the block which contains $s[q]$.
	In particular, $q = (j-1)\log m + r_j$ for some $r_j \in [0, \log m-1]$,
	
	\item Obtain block $c_j$ by querying oracle $c$ and obtain $Y\p{j}$ by removing the buffers from $c_j$,
	
	\item Obtain $j \circ b_j$ by running $\Dec_{in}(Y\p{j})$, then return $s[q] = b_j[r_j]$ to $\Dec_{out}$.
\end{enumerate}

In fact, it suffices to answer the queries of $\Dec_{out}$ with symbols consistent with any string $s'$ such that $\HAM(s,s') \leq m\delta_{out}$.
Then the correctness of the output would follow from the correctness of $\Dec_{out}$. 
We carry out the strategy mentioned above, except that now we are given a corrupted codeword $c'$.

For the purposes of analysis, we first define the notion of a {\em block decomposition} of the corrupted codeword $c'$.
Informally, a block decomposition is simply a partitioning of $c'$ into contiguous blocks.
Our first requirement for successful decomposition is that there must exist a block decomposition $c' = c'_1 \circ \cdots \circ c'_{d}$ that is ``not too different'' from the original decomposition $c = c_1 \circ \cdots \circ c_d$.\footnote{We note that we do not need to know this decomposition explicitly, and that its existence is sufficient for our analysis.}
In particular, we require that $\sum_j \ED(c'_j, c_j) \leq 2n\delta$, which is guaranteed by \cref{prop:block-decomps-exist}.
Next, we define the notion of {\em $\gamma$-good} (see \cref{def:g-good}).
The idea here is that if a block $c'_j$ is $\gamma$-good (for appropriate $\gamma$), then we can run $\Dec_{in}$ on $c'_j$ and obtain $j \circ b_j$. 
As the total number of errors is bounded, it is easy to see that all but a small fraction of blocks are $\gamma$-good (\cref{lem:g-bad}). 
At this point, we are essentially done if we can decode $c_j'$ for any given $\gamma$-good block $j$.

An immediate challenge we are facing is that of {\em locating} a specific $\gamma$-good block $c'_j$, while maintaining overall locality. The presence of insertions and deletions may result in uneven block lengths and misplaced blocks, making the task of locating a specific block non-trivial.
However, $\gamma$-good blocks make up the majority of the blocks and enjoy the property that they are in correct relative order, it is conceivable to perform a \emph{binary search} style of algorithm over the blocks of $c'$ to find block $c'_j$. 
The idea is to maintain a search interval and iteratively reduce its size by a constant multiplicative factor. 
In each iteration, the algorithm samples a small number of blocks and obtains their (appended) indices. 
As the vast majority of blocks are $\gamma$-good, these indices guide the binary search algorithm in narrowing down the search interval. 
Though there is one problem with this argument: the density of $\gamma$-good blocks may decrease as the search interval becomes smaller. 
In fact, it is impossible to \emph{locally} locate a block $c'_j$ surrounded by many bad blocks, even if $c'_j$ is $\gamma$-good. 
This is where the notion of $(\theta, \gamma)$-locally good (see \cref{def:locally-good-block}) helps us: if a block $c'_j$ is $(\theta,\gamma)$-locally good, then $(1-\theta)$-fraction of blocks in every neighborhood around $c'_j$ are $\gamma$-good, and every neighborhood around $c'_j$ has a bounded number of errors.
Therefore, as long as the search interval contains a locally good block, we can lower bound the density of $\gamma$-good blocks and recover $c'_j$ with high probability.

Our {\em noisy binary search algorithm} essentially implements this idea.
On input block index $j$, the algorithm searches for block $j$.
If block $j$ is $(\theta,\gamma)$-locally good, then we can guarantee that our noisy binary search algorithm will find $j$ except with negligible probability (see \cref{thm:main-noisy-binary-search}).
Thus it is desirable that the number of $(\theta,\gamma)$-locally good blocks is large; if this number is large, the noisy binary search is effectively providing oracle access to a string $s'$ which is close to $s$ in Hamming distance, and thus the outer decoder is able to decode $x[i]$ with high probability.
\cref{lem:bad-local} exactly guarantees this property.

The discussion above requires knowing the boundaries of each block $c'_j$, which is non-trivial even in the no corruption case.
As the decoder is oblivious to the block decomposition, the decoder works with {\em approximate boundaries} which can be found locally by a {\em buffer search algorithm}, described as follows.
Recall that by construction $c_j$ consists of $Y\p{j}$ surrounded by buffers of $(\alpha \log m)$-length 0-runs.
So to find $Y\p{j}$, it suffices to find the buffers surrounding $Y\p{j}$.
Our buffer search algorithm can be viewed as a ``local variant'' of the buffer search algorithm of Schulman and Zuckerman \cite{SchZuc99}. 
This algorithm is designed to find approximate buffers surrounding a block $c'_j$ if it is $\gamma$-good. 
Then the string in between two buffers is identified as a corrupted codeword and is decoded to $j \circ b_j$.
The success of the algorithm depends on $\gamma$-goodness of the block being searched and requires that any substring of a codeword from $C_{in}$ has ``large enough'' Hamming weight.
In fact, our inner code given by \cref{lem:SZ-inner-code} gives us this exact guarantee.
All together, this enables the noisy binary search algorithm to use the buffer finding algorithm to search for a block $c'_j$.

We formalize the decoder outlined above.
On input $i \in [k]$, $\Dec$ simulates $\Dec_{out}(i)$ and answers its queries. 
Whenever $\Dec_{out}(i)$ queries an index $j \in [m]$, $\Dec$ expresses $j = (p-1)\log m + r_j$ for $p \in [m/\log m]$ and $r_j \in [0,\log m - 1]$, and runs $\noisyBinarySearch(c', p)$ (which calls the algorithm \buffind) to obtain a string $b' \in \Sigma^{\log m}$ (or $\perp$). 
Then it feeds the $r_j$-th symbol of $b'$ (or $\perp$) to $\Dec_{out}(i)$. 
Finally, $\Dec$ returns the output of $\Dec_{out}(i)$. \mbox{}\\

\noindent\textbf{\textsf{The LCC Decoder $(\Dec)$.}}
Similar to the LDC decoder, our LCC decoder \Dec{} does the following: let $B = 2\alpha\log m + \beta_{in}\tp{1 + \log|\Sigma|}\log m$. 
On input $j \in [n]$, $\Dec$ first expresses $j = (p-1)B + r_j$ for some $p \in [m/\log m]$ and $0\le r_j < B$, and checks whether $j$ is inside a buffer. 
Specifically, if $r_j \in [0, \log m) \cup [B-\log m, B)$ then it outputs 0. 
Otherwise, it simulates $\Dec_{out}((p-1)\log m + r)$ for each $0\le r< \log m$, and answers their queries. 
Whenever $\Dec_{out}$ queries $i \in [m]$, $\Dec$ expresses $i = (b-1)\log m + r_i$ for some $b \in [m/\log m]$ and $0\le r_i < \log m$, and runs $\noisyBinarySearch(c', b)$ to obtain a string $S \in \Sigma^{\log m}$ (or $\perp$), and answers the query with $S_{r_i}$ (or $\perp$). 
Finally, denoting by $s_r$ the output of $\Dec_{out}((p-1)\log m + r)$, $\Dec$ returns the $(r_j-\log m+1)$-th bit of $\Enc_{in}\tp{p \circ s_0s_1\ldots s_{\log m - 1}}$. \mbox{}\\

\noindent\textbf{\textsf{Efficiency.}}
We note that the efficiency of our compiler depends on the efficiency of the inner and outer codes. 
Let $T(\Enc_{in}, l)$, $T(\Enc_{out},l)$, $T(\Enc, l)$ denote the run-times of the inner, outer and final encoders, respectively, on inputs of length $l$. 
Similarly, let $T(\Dec_{in}, l)$, $T(\Dec_{out},l)$, $T(\Dec, l)$ denote the run-times of the inner, outer, and final decoders (respectively), with oracle access to corrupted codewords of length $l$. 
Then we have following run-time relations:
\begin{align*}
    & &T(\Enc, k) & = T(\Enc_{out}, k) + O(m/\log m) \cdot T(\Enc_{in}, \log|\Sigma| \cdot \log m + \log m),& \\
    & &T(\Dec, n') &= T(\Dec_{out},m) + \ell_{out} \cdot O\tp{\log^3n'} \cdot T(\Dec_{in}, \beta\log m).&
\end{align*}
Here, $n'$ is the length of the corrupted codeword, $k$ is the input length of $\Enc_{out}$, $m$ is the input length of $\Enc_{in}$, $\ell_{out}$ is the locality of $\Dec_{out}$, and $1/\beta$ is the rate of the final encoder.

\section{Block Decomposition of Corrupted Codewords} \label{sec:block-decomposition}
The analysis of our decoding procedure relies on a so-called buffer finding algorithm and a noisy binary search algorithm.
To analyze these algorithms, we introduce the notion of a {\em block decomposition} for (corrupted) codewords, as well as what it means for a block to be {\em (locally) good}.

For convenience, we now fix some notation for the remainder of the paper. 
We fix an arbitrary message $x \in \Sigma^k$. 
We use $s = \Enc_{out}(x) \in \Sigma^{m}$ for the encoding of $x$ by the outer encoder. 
Let $\tau = \log m$ be the length of each block and $d = m/\log m$ be the number of blocks. 
For $i \in [d]$, we let $b_i \in \Sigma^{\tau}$ denote the $i$-th block $s[(i-1)\tau,i\tau-1]$, and let $Y\itn{i}$ denote the encoding $\Enc_{in}\tp{i \circ b_i}$. 
Recall that $\alpha\tau$ is the length of the appended buffers for some $\alpha \in (0,1)$, and the parameter $\beta = 2\alpha + \beta_{in}(1 + \log|\Sigma|)$. 
Thus $|Y\itn{i}| = (\beta-2\alpha)\tau$. 
The final encoding is given by
\begin{align*}
& &c = \tilde{Y}\itn{1} \circ \tilde{Y}\itn{2} \circ \cdots \circ \tilde{Y}\itn{d},
\end{align*}
where $\tilde{Y}\itn{j} = 0^{\alpha\tau} \circ Y\itn{j} \circ 0^{\alpha\tau}$ and $|\tilde{Y}\p{j}| = \beta\tau$. The length of $c$ is $n=d\beta\tau = \beta m$. 
We let $c' \in \set{0,1}^{n'}$ denote a corrupted codeword satisfying $\ED\tp{c,c'} \le 2n\cdot \delta$.

\begin{definition}[Block Decomposition]\label{def:block-decomp}
    A {\em block decomposition} of a (corrupted) codeword $c'$ is a non-decreasing mapping $\phi \colon [n'] \rightarrow [d]$ for $n', d \in \bbZ^+$. 
\end{definition}

We say a set $I \subseteq [n']$ is an interval if $I = \emptyset$ (i.e., an empty interval) or $I = \set{l, l+1, \ldots, r-1}$ for some $1 \le l < r \le n'$, in which case we write $I=[l,r)$.
For an interval $I=[l,r)$, we write $c'[I]$ for the substring $c'[l] c'[l+1]\ldots c'[r-1]$.
Finally, $c[\emptyset]$ stands for the empty string. 

We remark that for a given block decomposition $\phi$, since $\phi$ is non-decreasing we have that for every $j \in [d]$ the pre-image $\phi^{-1}(j)$ is an interval. 
Since $\phi$ is a total function, it induces a partition of $[n']$ into $d$ intervals $\set{\phi^{-1}(j) \colon j \in [d]}$. 
The following definition plays an important role in the analysis.
\begin{definition}[Closure Intervals] 
	The {\em closure of an interval $I=[l,r) \subseteq [n']$} is defined as $\cup_{i=l}^{r-1}\phi^{-1}(\phi(i)).$ 
	An interval $I$ is a \emph{closure interval} if the closure of $I$ is itself. 
	Equivalently, every closure interval has the form $\+I[a,b] \coloneqq \bigcup_{j=a}^{b}\phi^{-1}(j)$ for some $a, b \in [d]$.
\end{definition}

\begin{proposition}\label{prop:block-decomps-exist}
    There exists a block decomposition $\phi \colon [n'] \rightarrow [d]$ such that 
    \begin{align*}
    	& &\sum_{j \in [d]}\ED\tp{c'[\phi^{-1}(j)], \ \tilde{Y}\itn{j}} \le \delta \cdot 2n.
    \end{align*}
\end{proposition}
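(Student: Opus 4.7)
The plan is to exhibit $\phi$ directly from an optimal edit-distance alignment between $c$ and $c'$. Fix any sequence of $K = \ED(c,c') \le 2n\delta$ insertion and deletion operations that transforms $c$ into $c'$. Such a sequence induces a natural alignment: every symbol of $c'$ either comes from (i.e., is matched to) a specific symbol of $c$ that was retained, or else is an inserted symbol. Since the alignment processes $c$ left-to-right and produces $c'$ left-to-right, the matched positions preserve order.

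Next, I would use this alignment to partition $[n']$ into $d$ contiguous intervals $I_1, \ldots, I_d$ (some possibly empty), one per block of $c$. For each $j \in [d]$, let $I_j$ consist of (a) all positions of $c'$ whose matched symbol lies in $\tilde Y^{(j)}$, together with (b) all inserted symbols that occur strictly between the images of two retained symbols of $\tilde Y^{(j)}$, or before the first (resp.\ after the last) retained symbol of $\tilde Y^{(j)}$ if that insertion block is adjacent to $\tilde Y^{(j)}$. The only ambiguity is a run of insertions sitting between the last retained symbol of $\tilde Y^{(j)}$ and the first retained symbol of $\tilde Y^{(j+1)}$; fix the convention of attributing such insertions to $\tilde Y^{(j)}$ (any fixed convention works). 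Since matched positions are in order and every position of $c'$ is assigned to exactly one block, the intervals $I_1, \ldots, I_d$ are contiguous and partition $[n']$. Defining $\phi(i)=j$ for $i \in I_j$ then yields a non-decreasing map, hence a valid block decomposition, with $\phi^{-1}(j) = I_j$.

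Finally, I would bound $\sum_j \ED\!\left(c'[\phi^{-1}(j)],\, \tilde Y^{(j)}\right)$. For each $j$, charge every deletion operation performed on a position of $\tilde Y^{(j)}$ and every insertion assigned to $I_j$ to that block; applying exactly these operations to $\tilde Y^{(j)}$ produces precisely $c'[I_j]$, so $\ED\!\left(c'[I_j], \tilde Y^{(j)}\right)$ is at most the number of such operations. Summing over $j$ recovers the total operation count, which is $K \le 2n\delta$.

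The only real obstacle is the bookkeeping for insertions that fall between two blocks, since they do not live inside any $\tilde Y^{(j)}$. The convention above resolves this cleanly: each insertion is charged to exactly one block, so the per-block edit counts sum to $K$ without any double-counting, and the concatenation of the modified blocks reproduces $c'$ exactly. Everything else is routine.
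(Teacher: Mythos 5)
Your proposal is correct and follows essentially the same approach as the paper: the paper also constructs $\phi$ by attributing the edit operations to blocks, writing (somewhat informally) that WLOG the adversary corrupts block-by-block and that $\phi$ is obtained by modifying the natural decomposition $\phi_0$ accordingly, then declaring the bound ``clear.'' Your alignment-based argument is a more careful account of why that WLOG is legitimate and why each operation is charged to exactly one block, but the underlying decomposition and charging scheme are the same.
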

\begin{proof}
	Let $\phi_0 \colon [n] \rightarrow [d]$ be the block decomposition for $c$ satisfying $\phi_0(i) = j$ if $i$ lies in block $\tilde{Y}\p{j}$.
	Without loss of generality, we assume the adversary performs the following corruption process: 
	\begin{enumerate}
	    \item The adversary picks some $j \in [d]$;
	    \item The adversary corrupts $\tilde{Y}\p{j}$.
	\end{enumerate}
	Steps (1) and (2) are repeated up to the specified edit distance bound of $2\delta n$.
	We construct $\phi \colon [n'] \rightarrow [d]$ by modifying the decomposition $\phi_0$ according to the above process.
	It is clear that $\phi$ satisfies the desired property.
\end{proof}

We now introduce the notion of {\em good blocks}.
In the following definitions, we also fix an arbitrary block decomposition $\phi$ of $c'$ enjoying the property guaranteed by \Cref{prop:block-decomps-exist}.

\begin{definition}[$\gamma$-good block]\label{def:g-good}
	For $\gamma \in (0,1)$ and $j \in [d]$ we say that block $j$ is {\em $\gamma$-good} if $\ED( c'[\phi^{-1}(j)], \tY\p{j} ) \leq \gamma\alpha\tau$.
	Otherwise we say that block $j$ is {\em $\gamma$-bad}.
\end{definition}

\begin{definition}[$(\theta, \gamma)$-good interval]\label{def:good-interval}
    We say a closure interval $\+I[a,b]$ is $(\theta, \gamma)$-good if the following hold:
    \begin{enumerate}
        \item $\sum_{j=a}^{b}\ED\tp{c'[\phi^{-1}(j)], \tilde{Y}\itn{j}} \le \gamma \cdot (b-a+1)\alpha\tau $.
        \item There are at least $(1-\theta)$-fraction of $\gamma$-good blocks among those indexed by  $\set{a, a+1, \cdots, b}$. 
    \end{enumerate}
\end{definition}

\begin{definition}[$(\theta,\gamma)$-local good block] \label{def:locally-good-block}
    For $\theta, \gamma \in (0,1)$ we say that block $j$ is {\em $(\theta,\gamma)$-local good} if for every $a,b \in [d]$ such that $a \leq j \leq b$ the interval $\+I[a,b]$ is $(\theta,\gamma)$-good.
    Otherwise, block $j$ is $(\theta,\gamma)$-locally bad.
\end{definition}
Note that in \cref{def:locally-good-block}, if $j$ is $(\theta,\gamma)$-locally good, then $j$ is also $\gamma$-good by taking $a = b = j$.

\begin{proposition}\label{prop:bounds-good}
The following bounds hold:
\begin{enumerate}
\item For any $\gamma$-good block $j$, $(\beta-\alpha\gamma)\tau \leq |\phi^{-1}(j)| \leq (\beta+\alpha\gamma)\tau$.
\item For any $(\theta, \gamma)$-good interval $\+I[a,b]$, $(b-a+1)(\beta-\alpha\gamma)\tau \le \abs{\+I[a,b]} \le (b-a+1)(\beta+\alpha\gamma)\tau$.
\end{enumerate}
\end{proposition}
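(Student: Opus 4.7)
The plan is to prove both bounds by observing that edit distance controls length differences, and that this bound is additive across disjoint block contributions.

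For part (1), I would recall that by construction each uncorrupted block satisfies $|\tilde{Y}\p{j}| = \beta\tau$. Since $\ED$ counts the minimum number of insertions and deletions needed to transform one string into the other, any optimal edit sequence from $c'[\phi^{-1}(j)]$ to $\tilde{Y}\p{j}$ consists of some number $I$ of insertions and $D$ of deletions, so the two lengths differ by exactly $|I-D| \leq I+D = \ED\bigl(c'[\phi^{-1}(j)], \tilde{Y}\p{j}\bigr)$. By the $\gamma$-good assumption, this edit distance is at most $\gamma\alpha\tau$. Hence $\bigl||\phi^{-1}(j)| - \beta\tau\bigr| \leq \gamma\alpha\tau$, which rearranges to the claimed two-sided bound.

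For part (2), I would use the fact that $\phi$ induces a partition of $\+I[a,b]$ into the disjoint intervals $\phi^{-1}(a), \phi^{-1}(a+1), \ldots, \phi^{-1}(b)$, so $|\+I[a,b]| = \sum_{j=a}^{b} |\phi^{-1}(j)|$. Applying the same length-vs-edit-distance inequality blockwise gives
\begin{equation*}
\Bigl| |\+I[a,b]| - (b-a+1)\beta\tau \Bigr| \;\leq\; \sum_{j=a}^{b} \bigl| |\phi^{-1}(j)| - |\tilde Y\p{j}| \bigr| \;\leq\; \sum_{j=a}^{b} \ED\bigl(c'[\phi^{-1}(j)], \tilde Y\p{j}\bigr).
\end{equation*}
The definition of a $(\theta,\gamma)$-good interval bounds the right-hand side by $\gamma(b-a+1)\alpha\tau$, and rearranging yields the claimed two-sided bound on $|\+I[a,b]|$.

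There is no real obstacle here; the statement is essentially a bookkeeping corollary of the definitions together with the elementary fact that $\ED(u,v) \geq \bigl||u|-|v|\bigr|$. The only subtlety worth flagging in the writeup is that the $\gamma$-good condition in part (2) is \emph{not} assumed blockwise but only in aggregate over $[a,b]$, so the argument must avoid using part (1) block-by-block and instead sum the pointwise inequality $\bigl||\phi^{-1}(j)| - \beta\tau\bigr| \leq \ED(c'[\phi^{-1}(j)], \tilde Y\p{j})$ before invoking the aggregate edit-distance budget from \Cref{def:good-interval}.
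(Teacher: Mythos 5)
Your proof is correct and takes essentially the same approach as the paper's: both use the elementary inequality $\bigl||u|-|v|\bigr| \leq \ED(u,v)$ pointwise for part (1) and then sum the blockwise length deviations before invoking the aggregate edit-distance budget from \Cref{def:good-interval} for part (2). Your closing remark about why part (2) cannot simply invoke part (1) block-by-block is a useful clarification, but the underlying argument matches the paper's.
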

\begin{proof}
    For item (1) note that an uncorrupted block has length $\beta\tau$. Since $j$ is $\gamma$-good, we know that $\ED(c'[\phi^{-1}(j)], \tY\itn{j}) \leq \gamma \alpha\tau$, which implies that $(\beta-\alpha\gamma)\tau \leq |\phi^{-1}(j)| \leq (\beta+\alpha\gamma)\tau$.
    
    For item (2), we first note that $|a| - \Delta \le |b| \le |a| + \Delta$ where $\Delta = \ED\tp{a, b}$.
    Let $\Delta_j = \ED\tp{c'[\phi^{-1}(j)], \tY\itn{j}}$. By definition of $(\theta,\gamma)$-good interval, we have that $\sum_{j=a}^b \Delta_j \leq \gamma (b-a+1)\alpha\tau$. This gives us the following two properties. 
    \begin{align*}
        & &\abs{\+I[a,b]} = \sum_{j=a}^{b}\abs{\phi^{-1}(j)} \le \sum_{j=a}^{b}\beta\tau + \Delta_j \le (b-a+1)(\beta+\alpha\gamma)\tau,\\
        & &\abs{\+I[a,b]} = \sum_{j=a}^{b}\abs{\phi^{-1}(j)} \ge \sum_{j=a}^{b}\beta\tau - \Delta_j \ge (b-a+1)(\beta-\alpha\gamma)\tau.
    \end{align*}
\end{proof}

The following lemmas give upper bounds on the number of $\gamma$-bad and $(\theta,\gamma)$-locally bad blocks.
\begin{lemma}\label{lem:g-bad}
	The total fraction of $\gamma$-bad blocks is at most $2\beta\delta/(\gamma\alpha)$.
\end{lemma}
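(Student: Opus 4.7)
The plan is to prove this via a straightforward averaging (Markov-style) argument, combining the total edit-distance budget guaranteed by \Cref{prop:block-decomps-exist} with the per-block edit-distance threshold defining $\gamma$-badness. The only algebraic content of the proof is the observation that each $\gamma$-bad block ``consumes'' at least $\gamma\alpha\tau$ units of the global edit-distance budget, so the number of such blocks is limited by the total budget divided by this per-block cost.

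More concretely, I would first invoke \Cref{prop:block-decomps-exist} applied to the specific decomposition $\phi$ fixed just before the lemma statement, which gives
\[
\sum_{j\in[d]}\ED\tp{c'[\phi^{-1}(j)],\,\tY\p{j}} \;\le\; 2n\delta \;=\; 2d\beta\tau\delta,
\]
using that $n=d\beta\tau$. Next, let $B\subseteq[d]$ be the set of $\gamma$-bad block indices; by \Cref{def:g-good}, every $j\in B$ satisfies $\ED\tp{c'[\phi^{-1}(j)],\,\tY\p{j}}>\gamma\alpha\tau$. Restricting the sum above to $B$ (and using nonnegativity of the remaining terms) yields $|B|\cdot\gamma\alpha\tau < 2d\beta\tau\delta$, from which one immediately reads off $|B|/d < 2\beta\delta/(\gamma\alpha)$, as desired.

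I do not foresee any real obstacle here: the step is purely bookkeeping, and no structural property of the inner code, the decoder, or the adversary beyond the edit-distance budget enters the argument. The only subtlety worth flagging explicitly in the write-up is that $\phi$ refers throughout to the particular block decomposition from \Cref{prop:block-decomps-exist} (rather than an arbitrary one), since otherwise the hypothesis $\sum_j\ED\le 2n\delta$ is not available; once this is stated, the rest is a one-line averaging bound.
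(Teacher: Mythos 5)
Your proposal is correct and coincides with the paper's own proof: both restrict the total edit-distance budget from \Cref{prop:block-decomps-exist} to the $\gamma$-bad blocks, each of which contributes more than $\gamma\alpha\tau$, and divide. (Your write-up is in fact slightly cleaner — the paper's intermediate line drops a factor of $2$ that reappears in the final bound — but the argument is the same.)
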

\begin{proof}
	Let $\Delta_j = \ED( c'[\phi^{-1}(j)], \tY_j )$ for every $j \in [d]$.
	By our choice of $\phi$ and \Cref{prop:block-decomps-exist} we have that:
 	\begin{align*}
 		& &\sum_{j=1}^{d} \Delta_j \leq  2n\cdot \delta.
 	\end{align*}
	Let $\mathsf{Bad} \subseteq [d]$ be the set of $\gamma$-bad blocks.
	Then we have
 	\begin{align*}
		& &\delta\cdot 2n \ge \sum_{j=1}^d \Delta_j \geq \sum_{i \in \mathsf{Bad}} \Delta_i > \abs{\mathsf{Bad}} \cdot \gamma \alpha\tau
 	\end{align*}
	where the latter inequality follows by the definition of $\gamma$-bad.
	Thus we obtain $\abs{\mathsf{Bad}} < \delta n / \gamma\alpha \tau$.
	Recalling that $n = \beta d\tau$ 
	we have that $\abs{\mathsf{Bad}}/d < 2\beta\delta/(\gamma\alpha)$ as desired.
\end{proof}

\begin{lemma}\label{lem:bad-local}
    The total fraction of $(\theta,\gamma)$-local bad blocks is at most $(4/\gamma\alpha)(1+1/\theta)\delta\beta$.
\end{lemma}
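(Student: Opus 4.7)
The plan is to split the set of $(\theta,\gamma)$-locally bad blocks into two natural subsets and bound each via a discrete maximal-function / interval-covering argument. Let $\Delta_j = \ED(c'[\phi^{-1}(j)], \tilde{Y}^{(j)})$. By \Cref{prop:block-decomps-exist} we have $\sum_{j=1}^{d} \Delta_j \le 2n\delta = 2\beta d\tau\delta$, and by \Cref{lem:g-bad} the number of $\gamma$-bad blocks is at most $(2\beta\delta/\gamma\alpha)\cdot d$.

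Call a block $j$ \emph{type-1 bad} if there exists $[a,b]\ni j$ violating condition (1) in \Cref{def:good-interval}, i.e., $\sum_{i=a}^{b}\Delta_i > \gamma(b-a+1)\alpha\tau$, and \emph{type-2 bad} if there exists $[a,b]\ni j$ violating condition (2), i.e., the fraction of $\gamma$-good blocks in $\{a,\dots,b\}$ is strictly less than $1-\theta$. Every $(\theta,\gamma)$-locally bad block is type-1 bad or type-2 bad, so it suffices to bound each of these sets by $(2\beta d\delta/\gamma\alpha)(1+1/\theta)$ will not quite work; I will establish the stronger bound $4\beta d\delta/(\gamma\alpha)$ for type-1 bad and $4\beta d\delta/(\gamma\alpha\theta)$ for type-2 bad, which sum to the desired quantity.

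The key tool is the following one-dimensional discrete maximal inequality, which I will state and prove as a small preliminary claim: \emph{if $f \colon [d]\to\mathbb{R}_{\ge 0}$ and $S\subseteq [d]$ is the set of indices $j$ for which some interval $[a,b]\ni j$ satisfies $\tfrac{1}{b-a+1}\sum_{i=a}^{b}f(i) > \lambda$, then $|S|\le 2\|f\|_1/\lambda$.} The proof is the standard Vitali/rising-sun covering: for each $j\in S$ pick a witnessing interval $I_j$, apply a greedy selection to extract a sub-collection whose doubles cover $S$ and whose members are pairwise disjoint, then use the heavy-density property on the disjoint members. Applying this claim with $f(j)=\Delta_j$ and $\lambda=\gamma\alpha\tau$ gives the bound on type-1 bad blocks, and applying it with $f(j)=\mathbf{1}[j\text{ is }\gamma\text{-bad}]$ and $\lambda=\theta$, combined with \Cref{lem:g-bad}, gives the bound on type-2 bad blocks.

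Summing the two bounds yields at most
\[
\frac{4\beta d\delta}{\gamma\alpha}+\frac{4\beta d\delta}{\gamma\alpha\theta}=\frac{4\beta d\delta}{\gamma\alpha}\Bigl(1+\frac{1}{\theta}\Bigr)
\]
locally bad blocks, so dividing by $d$ gives the stated fractional bound. The main obstacle is the covering/maximal-function lemma: one has to be careful that witnessing intervals are two-sided (they can extend arbitrarily far to either side of $j$), which forces the factor of $2$ in the inequality; a one-sided argument would lose this factor and still suffice for a weaker constant, but the factor-$2$ version is needed to match the stated constant $4/(\gamma\alpha)$.
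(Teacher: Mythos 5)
Your proposal is correct and takes essentially the same approach as the paper: both split the locally bad blocks into those violating condition (1) and those violating condition (2) of \cref{def:good-interval}, and bound each set by a one-dimensional greedy/rising-sun covering argument applied to $\Delta_j$ and to the indicator of $\gamma$-bad blocks, respectively. You package the covering step as a discrete uncentered weak-$(1,1)$ maximal inequality with constant $2$, which is exactly what the paper establishes by hand via its separate left-endpoint and right-endpoint witness bookkeeping.
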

\begin{proof}
	First we count the number of blocks which violate condition (1) of \Cref{def:good-interval}.
	We proceed by counting in two steps.
	Suppose that $i_1 \in [d]$ is the smallest index such that block $i_1$ violates (1) of \Cref{def:good-interval} with witness $(i_1,b_1)$; that is, $\ED_{j=i_1}^{b_1} \ED(c'[\phi^{-1}(j), \tY\p{j}]) > \gamma \cdot (b_1-i_1+1)\tau$.
	Continuing inductively, let $i_k \in [d]$ be the smallest index such that $i_k > i_{k-1}+b_{k-1}$ and $i_k$ violates condition (1) of \Cref{def:good-interval} with witness $(i_k, b_k)$.
	Let $\set{(i_k, b_k)}_{k=1}^t$ for some $t$ be the result of this procedure.
	Further let $D_k = \sum_{i=i_k}^{b_k} \ED\tp{c'[\phi^{-1}(i), \tY\p{i}}$ for every $k \in [t]$.
	Let $n_\gamma\p{1}$ be the total number of locally bad blocks $j$ of the form $(j,b)$ for some $b$.
	Then we claim that (1) $n\p{1}_\gamma \leq \sum_{k=1}^t b_k - i_k$, (2) for all $k \in [t]$ we have that $D_k > \gamma\tau (b_k-i_k)$, and (3) $\sum_{k=1}^t D_k \leq \ED(c,c')$.
	The first equation follows from the fact that any locally bad block $j$ with witnes $(j,b)$ for some $b \geq j$ must fall into some interval $[i_k, b_k]$, else this would contradict the minimality of the chosen $i_k$.
	The second equation follows directly by definition of local good.
	The third equation follows from the fact that the sum of $D_k$ is at most the sum of all possible blocks, which is upper bounded by the edit distance.
	Combining these equations we see that $n_\gamma\p{1} \leq 2\delta n/(\gamma\alpha\tau).$
	Symmetrically, we can consider all bad blocks $j$ which violate condition (1) of \Cref{def:good-interval} and have witnesses of the form $(a,j)$.
	For this bound we obtain $n_\gamma\p{2} \leq 2\delta n/(\gamma\alpha\tau).$
	
	Now we consider the number of bad blocks which violate condition (2) of \Cref{def:good-interval}.
	By identical analysis and first considering bad blocks $j$ with witnesses of the form $(j,b)$, we obtain a set of minimally chosen witnesses $\set{(i_k,b_k)}_{k=1}^t$.
	Let $n_\theta\p{1}$ be the total number of bad blocks $j$ with witnesses of the form $(j,b)$.
	Further, let $B_k$ denote the number of $\gamma$-bad blocks in the interval $[i_k, b_k]$.
	Then we have (1) $n\p{1}_\theta \leq \sum_{k=1}^t b_k-i_k$, (2) for all $k \in [t]$, $B_k > \theta(b_k - i_k)$, and (3) $\sum_{k=1}^t B_k\leq \ED(c,c')/(\gamma\alpha\tau)$.
	Then by these three equations we have that $n_\theta\p{1} \leq 2\delta n/(\gamma\theta\alpha\tau)$.
	By a symmetric argument, if $n_\theta\p{2}$ is the total number of blocks $j$ which violate condition (2) of \Cref{def:good-interval} with witnesses of the form $(a,j)$ then we have $n_\theta\p{2} \leq 2\delta n/(\theta\gamma\alpha\tau)$.
	
	Thus the total number of possible bad blocks violating either condition is at most $(4/\gamma\alpha\tau)(1 + 1/\theta)\delta n$.
	Recalling that $n = \beta d \tau$, we have that the total fraction of locally bad blocks is at most $(4/\gamma\alpha)(1+1/\theta)\delta\beta$ as desired.
\end{proof}

\section{Outer Decoder} \label{sec:outer-decoder}
At a high level, the our decoding algorithm $\Dec$ runs the outer decoder $\Dec_{out}$ and must answer all oracle queries of $\Dec_{out}$ by simulating oracle access to some corrupted string $s'$. 
Recall that $C_{out}$, with encoding function $\Enc_{out} \colon \Sigma^k \rightarrow \Sigma^m$, is a $(\ell_{out}, \delta_{out}, \eps_{out})$-LDC for Hamming errors. 
Further, $C_{out}$ has probabilistic decoder $\Dec_{out}$ such that for any $i \in [k]$ and string $s' \in \tp{\Sigma \cup \set{\perp}}^{m}$ such that $\textsf{HAM}\tp{s', s} \le m\cdot \delta_{out}$ for some codeword $s = \Enc_{out}(x)$, we have
\begin{align*}
	& &\Pr\left[ \Dec_{out}^{s'}(i) = x[i] \right] \ge \frac{1}{2} + \eps_{out}.
\end{align*}
Additionally, $\Dec_{out}$ makes at most $\ell_{out}$ queries to $s'$.

In order to run $\Dec_{out}$, we need to simulate oracle access to such a string $s'$. 
To do so, we present our noisy binary search algorithm \cref{alg:noisy-binary-search} in \cref{sec:noisy-binary-search}.
For now, we assume \cref{alg:noisy-binary-search} has the properties stated in the following proposition and theorem.
\begin{restatable}{proposition}{nbsQuery}\label{prop:noisybs-query-complexity}
	\Cref{alg:noisy-binary-search} has query complexity $O\tp{\log^4 n'}$.
\end{restatable}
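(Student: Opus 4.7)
The plan is to decompose the total query cost of \cref{alg:noisy-binary-search} into three multiplicative factors: (i) the number of binary-search rounds, (ii) the number of blocks sampled per round, and (iii) the number of bit-queries spent on each sampled block (which is dominated by the cost of the buffer-finding routine \buffind{} followed by a call to $\Dec_{in}$). Each factor will be shown to be $O(\log n')$ or $O(\log^2 n')$, and multiplying them will yield the claimed $O(\log^4 n')$ bound.

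First I would bound the number of rounds. As described in the Overview, each iteration of \noisyBinarySearch{} shrinks the current search interval over the block indices by a constant multiplicative factor: the median of the sampled indices lets the algorithm safely discard at least a constant fraction (e.g.\ a $1/8$-fraction as in the motivating analysis) of the current interval. Since the initial search interval spans all $d = m/\log m = O(n'/\log n')$ blocks, the recursion terminates after $O(\log d) = O(\log n')$ rounds.

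Second, I would bound the samples drawn per round. For the concentration argument underlying correctness (formalized in \cref{thm:main-noisy-binary-search}) to give a negligible-in-$n'$ failure probability for a single round --- so that a union bound over the $O(\log n')$ rounds still yields negligible total failure --- the algorithm samples $t = O(\log^2 n')$ positions uniformly at random from the current closure-interval and queries each one. A Chernoff bound then gives per-round failure probability $\exp(-\Omega(t)) = \negl(n')$, and any smaller $t$ would be insufficient for this target. Note that $O(\log^2 n')$ is also the factor that dominates the argument below.

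Finally, for each sampled position, I would bound the per-sample query cost incurred by \buffind{} and the subsequent inner decoding. Because each uncorrupted block has length $\beta\tau = O(\log m) = O(\log n')$, \cref{prop:bounds-good} guarantees that the image under $\phi^{-1}$ of any $\gamma$-good block is contained in an $O(\log n')$-length window of $c'$; thus \buffind{} needs to probe only $O(\log n')$ bits to identify the surrounding $0$-buffers, and $\Dec_{in}$ then reads the entire recovered block of $O(\log n')$ bits. So each sampled index contributes $O(\log n')$ queries, giving a total of $O(\log n') \cdot O(\log^2 n') \cdot O(\log n') = O(\log^4 n')$. The main subtlety, and the only real obstacle, is coordinating the choice of sample size $t$ with the correctness analysis: $t$ must be taken large enough to drive per-round error below any inverse polynomial in $n'$ (required for the union bound over rounds and queries of $\Dec_{out}$), but small enough that the cumulative budget remains $O(\log^4 n')$; the choice $t = \Theta(\log^2 n')$ is what simultaneously satisfies both constraints.
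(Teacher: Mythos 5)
Your proposal is correct and follows essentially the same decomposition as the paper's proof: $O(\log n')$ rounds (the interval shrinks by a constant factor $1-\rho$ each round until it falls below the threshold $C$), times $N=\Theta(\log^2 n')$ calls to \textsc{Block-Decode} per round, times $O(\log n')$ queries per call (since \buffind{} and $\Dec_{in}$ operate within an $O(\tau)=O(\log n')$-length window), plus one terminal call to \textsc{Interval-Decode} costing $O(\log n')$, for $O(\log^4 n')$ total. One small imprecision: the interval whose length shrinks geometrically is the raw index range $r-l$ over positions of $c'$ (and the samples are drawn uniformly from $[m_1,m_2)$ without reference to block boundaries or closures), not a range of block indices; this does not affect the count since both are $O(n')$, but the paper's phrasing is cleaner here.
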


\begin{restatable}{theorem}{nbsToOuter}\label{thm:noisybs-to-outer-decoder}
For $j \in [d]$, let $\*b_j \in \Sigma^{\tau} \cup \set{\perp}$ be the random variable denoting the output of \cref{alg:noisy-binary-search} on input $(c',1,n'+1,j)$. 
We have
\begin{align*}
& &\Pr\left[ \Pr_{j \in [d]}\left[\*b_j \neq b_j \right] \ge \delta_{out} \right] \le \negl(n'),
\end{align*}
where the probability is taken over the joint distribution of $\set{\*b_j \colon j \in [d]}$.
\end{restatable}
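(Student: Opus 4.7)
The plan is to reduce the statement to two ingredients: the deterministic combinatorial bound on locally bad blocks from \Cref{lem:bad-local}, and the per-block correctness guarantee of $\noisyBinarySearch$ (\Cref{thm:main-noisy-binary-search}, which is to be established separately). First I will fix the constants $\alpha, \gamma, \theta$ so that Lemma~\ref{lem:bad-local} yields at most a $\delta_{out}/2$ fraction of $(\theta,\gamma)$-locally bad blocks. Concretely, since Theorem~\ref{thm:main} sets $\delta = \Theta(\delta_{out} \delta_{in})$, the bound $(4/\gamma\alpha)(1+1/\theta)\delta\beta$ can be made smaller than $\delta_{out}/2$ by taking the hidden constant in the $\Omega(\delta_{out}\delta_{in})$ small enough, with $\gamma, \theta, \alpha$ chosen as absolute constants compatible with the requirements of the binary search analysis.

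Next, partition the index set $[d]$ into $G \sqcup B$, where $G$ is the set of $(\theta,\gamma)$-locally good blocks and $B$ is its complement. By the previous paragraph $|B|/d \le \delta_{out}/2$ deterministically (for this fixed $c'$ and block decomposition $\phi$ from \Cref{prop:block-decomps-exist}). For $j \in G$, the assumed \Cref{thm:main-noisy-binary-search} gives $\Pr[\*b_j \neq b_j] \le \negl(n')$, where the probability is over the internal randomness of $\noisyBinarySearch$ invoked on input $(c',1,n'+1,j)$.

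Let $N = |\{j \in [d] : \*b_j \neq b_j\}|$, and split $N = N_G + N_B$ according to whether $j \in G$ or $j \in B$. Then $N_B \le |B| \le \delta_{out} d / 2$ with probability $1$, so the event $N/d \ge \delta_{out}$ is contained in the event $N_G/d \ge \delta_{out}/2$. Linearity of expectation gives
\[
\mathbb{E}[N_G] = \sum_{j \in G} \Pr[\*b_j \neq b_j] \le d \cdot \negl(n'),
\]
and Markov's inequality then yields
\[
\Pr\!\left[\frac{N_G}{d} \ge \frac{\delta_{out}}{2}\right] \le \frac{2\,\mathbb{E}[N_G]}{\delta_{out}\, d} \le \frac{2\,\negl(n')}{\delta_{out}} = \negl(n').
\]
Chaining these two steps gives $\Pr[\Pr_{j \in [d]}[\*b_j \neq b_j] \ge \delta_{out}] \le \negl(n')$, as required. (Note we do not need any independence between the $\*b_j$'s, since Markov applies to the expectation.)

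The main obstacle here is essentially bookkeeping: making sure the constants $\alpha,\gamma,\theta$ and the error budget $\delta$ fit simultaneously into the Lemma~\ref{lem:bad-local} bound \emph{and} the hypotheses of \Cref{thm:main-noisy-binary-search}; the substantive probabilistic work (handling one locally good block) is deferred to that theorem. A subtle point to check in the write-up is that the block decomposition $\phi$ guaranteed by \Cref{prop:block-decomps-exist} is fixed once and for all before randomness is drawn, so that ``$(\theta,\gamma)$-locally good'' is a deterministic property of $j$ and the splitting $G \sqcup B$ is well-defined.
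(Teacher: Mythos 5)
Your proposal is correct and follows essentially the same structure as the paper's proof: both decompose $[d]$ into $(\theta,\gamma)$-locally good and bad blocks, invoke Lemma~\ref{lem:bad-local} with the parameter choice from Section~\ref{sec:param-and-proof} to make the locally bad fraction at most $\delta_{out}/2$, and then appeal to Theorem~\ref{thm:main-noisy-binary-search} for the per-block $\negl(n')$ failure probability, correctly noting that independence of the $\*b_j$'s is not needed. The only (cosmetic) difference is the final probabilistic step: the paper takes a union bound over the $\le d$ good blocks to conclude that with probability $1 - \negl(n')$ \emph{every} locally good block is recovered (i.e., $N_G = 0$), whereas you apply Markov's inequality to $\mathbb{E}[N_G]$ to bound $\Pr[N_G/d \ge \delta_{out}/2]$; both give $\negl(n')$ since $d \cdot \negl(n') = \negl(n')$, so neither buys anything over the other here.
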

We note that in \cref{thm:noisybs-to-outer-decoder}, the random variables $\*b_j$ do not need to be independent, i.e., two runs of \cref{alg:noisy-binary-search} can be correlated. 
For example, we can fix the random coin tosses of \cref{alg:noisy-binary-search} before the first run and reuse them in each call.

\section{Noisy Binary Search} \label{sec:noisy-binary-search}
We present \cref{alg:noisy-binary-search} in this section. 
As mentioned in \cref{sec:outer-decoder}, the binary search algorithm discussed in this section can be viewed as providing the outer decoder with oracle access to some string $s' \in \tp{\Sigma \cup \set{\perp}}^{m}$. 
Namely whenever the outer decoder queries an index $j \in [m]$ which lies in block $p$, we run \textsc{Noisy-Binary-Search} on $(c',1,n'+1,p)$ and obtain a string $b'_p \in \Sigma^{\log m}$ which contains the desired symbol $s'[j]$. 

\begin{algorithm}[ht] 
	\caption{\sc{Noisy binary search}}
	\label{alg:noisy-binary-search}
	\textbf{Input:} An index $j \in [d]$, and oracle access to a codeword $c' \in \set{0, 1}^{n'}$. \\
	\textbf{Output:} A string $b \in \Sigma^{\tau}$ or $\perp$.
	\begin{algorithmic}[1]
		\State $N \gets \Theta(\log^2 n')$
		\State $\rho \gets \min\set{\frac{1}{4} \cdot \frac{\beta-\gamma}{\beta+\gamma}, 1-\frac{3}{4}\cdot\frac{\beta+\gamma}{\beta-\gamma}}$
		\State $C \gets 36(\beta+\gamma)\tau$
		\Function{Noisy-Binary-Search}{$c'$, $l$, $r$, $j$} 
		\If{$r - l \le C$}
			\State $s \gets$ \Call{Interval-Decode}{$l$, $r$, $j$} 
			\State \Return{$s$}
		\EndIf
		\State $m_1 \gets (1-\rho)l + \rho r$, $m_2 \gets \rho l + (1-\rho)r$
		\For{$t \gets 1$ to $N$}
		    \State Randomly sample $i$ from $\set{m_1, m_1+1, \ldots, m_2-1}$
		    \State $j_t \gets $ \Call{Block-Decode}{$i$}
		\EndFor
		\State $\tilde{j} \gets $ median of $j_1, \ldots, j_N$ (ignore $j_t$ if $j_t = \perp$)
		\If {$j \le \tilde{j}$}
			\State \Return \Call{Noisy-Binary-Search}{$c'$, $l$, $m_2$, $j$}
		\Else
			\State \Return \Call{Noisy-Binary-Search}{$c'$, $m_1$, $r$, $j$}
		\EndIf
		\EndFunction
	\end{algorithmic}
\end{algorithm}

We analyze the query complexity of \cref{alg:noisy-binary-search} and prove \cref{prop:noisybs-query-complexity}.
\nbsQuery*
\begin{proof}
	The number of iterations $T$ is at most $O\tp{\log \frac{n'}{C}} = O\tp{\log n'}$ as $r-l$ is reduced by a constant factor $1-\rho$ in each iteration until it goes below $C$. 
	In each iteration (except for the last iteration), the algorithm makes $N=\Theta(\log^2 n')$ calls to \textsc{Block-Decode}, which has query complexity $O\tp{\log n'}$. 
	In the last iteration, it calls \textsc{Interval-Decode} which has query complexity $O\tp{\log n'}$. 
	Thus the overall query complexity is $O\tp{\log^4 n'}$.
\end{proof}

The following theorem shows that the set of indices which can be correctly returned by \cref{alg:noisy-binary-search} is captured by the locally good property.

\begin{restatable}{theorem}{mainNoisyBinarySearch}\label{thm:main-noisy-binary-search}
	If $j \in [d]$ is a $(\theta,\gamma)$-locally good block, running \cref{alg:noisy-binary-search} on input $(c', 1, n'+1, j)$ outputs $b_j$ with probability at least $1-\negl(n')$. 
\end{restatable}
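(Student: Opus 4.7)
The plan is to proceed by induction on the recursion depth of \textsc{Noisy-Binary-Search}, which is $O(\log n')$ since each call shrinks the search interval $[l, r)$ by at least a factor of $(1-\rho)$ until the base case $r - l \le C$. The invariant to maintain is that at every level the current interval $[l, r)$ contains the positions $\phi^{-1}(j)$ of block $j$ in $c'$, and is aligned with a closure interval $\+I[a, b]$ with $a \le j \le b$ up to at most one partial block at each endpoint. Because $j$ is $(\theta,\gamma)$-locally good, Definition~\ref{def:locally-good-block} forces $\+I[a,b]$ to be $(\theta,\gamma)$-good, so Proposition~\ref{prop:bounds-good} tightly controls $|\+I[a,b]|$ in terms of $(b-a+1)$, and at least a $(1-\theta)$-fraction of the blocks in $[a, b]$ are $\gamma$-good. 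The initial call with $l=1$, $r=n'+1$ trivially satisfies the invariant since $\+I[1,d]=[1,n'+1)$.

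For the inductive step, consider the middle sub-interval $[m_1, m_2)$ obtained by trimming a $\rho$-fraction from each end of $[l, r)$. Since block lengths are confined to $[(\beta-\alpha\gamma)\tau,\ (\beta+\alpha\gamma)\tau]$ by Proposition~\ref{prop:bounds-good}, the positions in $[m_1, m_2)$ cover a contiguous range of roughly $(1-2\rho)(b-a+1)$ consecutive blocks, and because at most $\theta(b-a+1)$ blocks in $[a,b]$ are $\gamma$-bad, a uniformly random position $i \in [m_1, m_2)$ lies in a $\gamma$-good block with probability at least $1 - \theta/(1-2\rho) - O(\alpha\gamma/\beta)$. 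Whenever $i$ falls in such a block, \textsc{Block-Decode}$(i)$ correctly recovers its appended index, invoking the buffer-detection guarantees that are proved separately in Section~\ref{sec:buffer-detection} for $\gamma$-good blocks. Applying a Chernoff bound to the $N = \Theta(\log^2 n')$ samples then yields that, except with probability $\negl(n')$, the empirical median $\tilde{j}$ is the true index of some block whose positions lie in $[m_1, m_2)$ and within an $O(\theta)$-fraction of $(b-a+1)$ of the midpoint of $[a,b]$.

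Given this concentration, the branching step preserves the invariant: if $j \le \tilde{j}$ then $\phi^{-1}(j) \subseteq [l, m_2)$, and if $j > \tilde{j}$ then $\phi^{-1}(j) \subseteq [m_1, r)$, where the value $\rho = \min\{\tfrac14 \tfrac{\beta-\gamma}{\beta+\gamma},\ 1 - \tfrac34\tfrac{\beta+\gamma}{\beta-\gamma}\}$ is calibrated precisely to absorb both the $O(\theta)$ deviation of the median and the slack from block-length variation and endpoint misalignment. In the base case $r-l \le C = 36(\beta+\gamma)\tau$ only a constant number of blocks intersect $[l, r)$, including block $j$ itself (which is $\gamma$-good because local goodness implies $\gamma$-goodness), so \textsc{Interval-Decode} exhaustively scans the candidates and invokes buffer detection to recover $b_j$. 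A union bound over the $O(\log n')$ iterations, each succeeding with probability $1 - \negl(n')$, gives the claimed success probability. The principal obstacle is the quantitative calibration of the concentration step: one must verify that after carving out the middle sub-interval, the density of $\gamma$-good blocks among sampled positions still exceeds $1/2$ by a fixed constant margin, so that the median lies on the correct side of any target rank, and that this margin does not degrade across the $O(\log n')$ recursion levels despite the boundary effects and the amplification factor $1/(1-2\rho)$ in the bad-block density.
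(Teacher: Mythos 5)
Your outline tracks the paper's proof at a high level — maintain the invariant $\phi^{-1}(j)\subseteq[l,r)$ across $O(\log n')$ recursion levels, Chernoff over $N=\Theta(\log^2 n')$ samples per level, union bound, and a base case handled by \textsc{Interval-Decode} once all blocks in the small closure interval are $\gamma$-good. But the step that carries the weight — showing the branch preserves the invariant — is argued incorrectly. You assert that the median $\tilde{j}$ concentrates ``within an $O(\theta)$-fraction of $(b-a+1)$ of the midpoint of $[a,b]$.'' The median of the sampled $\varphi$-values concentrates near the midpoint of $[M_1,M_2-1]$ (the closure of $[m_1,m_2)$), not near the midpoint of $[L,R-1]$, and \Cref{lem:proximity-good-interval} only forces $M_1\le L+(R-L)/3$, $M_2\ge L+2(R-L)/3$; so $(M_1+M_2)/2$ can sit a full $(R-L)/6$ away from $(L+R)/2$, which is $\Theta(b-a+1)$, not $O(\theta)(b-a+1)$. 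In fact no statement about where the median lands is needed. The paper's \Cref{lem:main-invariant} reduces WLOG to the case $L\le j\le M_1$ and observes that \emph{every} $i\in[m_1,m_2)$ satisfies $\phi(i)\ge M_1\ge j$, so a correct decode $\varphi(i)=\phi(i)$ already implies $\varphi(i)\ge j$. Chernoff is then applied to the one-sided indicator $X_s$ of the event $\{\varphi(i_s)=\perp\}\cup\{\varphi(i_s)<j\}$, which is dominated by $Y_s$ (the indicator of $\varphi(i_s)\neq\phi(i_s)$), and $\E[Y_s]<1/3$ suffices for $\tilde{j}\ge j$ whp; the branch to $[l,m_2)$ then contains $\phi^{-1}(j)$ because $j\le M_1<M_2-1$. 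This one-sided comparison is what makes the argument go through, and you have not supplied an adequate substitute for it.

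Two secondary imprecisions: your estimate that a random position in $[m_1,m_2)$ decodes correctly with probability $1-\theta/(1-2\rho)-O(\alpha\gamma/\beta)$ omits the $\gamma$-failure probability of \textsc{Block-Decode} \emph{inside} a $\gamma$-good block — \Cref{thm:buffer-finding-properties}(1) gives only $\Pr_{i\in\phi^{-1}(j)}[\varphi(i)\neq j]\le\gamma$, so the statement ``whenever $i$ falls in such a block, \textsc{Block-Decode}$(i)$ correctly recovers its appended index'' is false, and the $\gamma$ must be added (as in \Cref{lem:boundary-oblivious-sampling}, giving $\gamma+\theta+\gamma/\beta$). And your bad-block-density dilution factor $1/(1-2\rho)$ is not what the paper uses: \Cref{lem:proximity-good-interval} exploits the fact that the locally good block $j$ satisfies $j\le M_1$ so the interval $[j,M_2-1]$ is $(\theta,\gamma)$-good and $M_2-L\le 2(M_2-M_1)$, yielding $(2\theta,2\gamma)$-goodness of $\+I[M_1,M_2-1]$. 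Both get a constant dilution, but yours does not explain how the goodness of the recursed interval's closure is re-established at the next level; the paper gets it for free because the new closure again contains the locally good block $j$.
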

We defer the proof of \cref{thm:main-noisy-binary-search} to \cref{app:main-noisybs-proof}, as the proof requires many auxiliary claims and lemmas. 
For now, we assume \cref{thm:main-noisy-binary-search} and work towards proving \cref{thm:noisybs-to-outer-decoder}.

We first observe that the only time \cref{alg:noisy-binary-search} interacts with $c'$ is when it queries \textsc{Block-Decode} and \textsc{Interval-Decode}.
Thus the properties of these two algorithms is essential to our proof. 
We briefly describe these two subroutines now.
\begin{itemize}
	\item \textsc{Block-Decode}: On input index $i \in [n']$, \textsc{Block-Decode} tries to find the block $j$ that contains $i$, and attempts to decode the block to $j \circ b_j$. 
	It returns the index $j$ if the decoding was successful, and $\perp$ otherwise.
	
	\item \textsc{Interval-Decode}: On input $l, r \in [n']$ and $j \in [d]$, \textsc{Interval-Decode} (roughly) runs the buffer search algorithm of Schulman and Zuckerman \cite{SchZuc99} over the substring $c'[l,r]$ to obtain a set of approximate buffers, and attempts to decode all strings separated by the approximate buffers. 
	It returns $b$ if any string is decoded to $j \circ b$, and $\perp$ otherwise.
\end{itemize}

For convenience, we model \textsc{Block-Decode} as a function $\varphi \colon [n'] \rightarrow [d] \cup \set{\perp}$, and model \textsc{Interval-Decode} as a function $\psi \colon [n'] \rightarrow \Sigma^{\tau} \cup \set{\perp}$. 
The functions $\varphi$ and $\psi$ have the following properties, which are crucial to the proof of \cref{thm:noisybs-to-outer-decoder}.
\begin{restatable}{theorem}{buffProperties}\label{thm:buffer-finding-properties}
	The functions $\varphi$ and $\psi$ satisfy the following properties:
	\begin{enumerate}
		\item For any $\gamma$-good block $j$ we have 
		\begin{align*}
			& &\Pr_{i \in \phi^{-1}(j)}\left[ \varphi(i) \neq j \right] \le \gamma.
		\end{align*}
	
		\item Let $[l,r)$ be an interval with closure $\+I[L,R-1]$, satisfying that every block $j \in \set{L,\ldots, R-1}$ is $\gamma$-good. Then for every block $j$ such that $\phi^{-1}(j) \subseteq [l,r)$, we have $\psi(j,l,r)=b_j$.
	\end{enumerate}
\end{restatable}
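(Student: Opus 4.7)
The proof will establish the two properties separately, with the common technical underpinning being a careful analysis of the buffer-search procedure of Schulman--Zuckerman applied to the corrupted codeword.

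For property~1, fix a $\gamma$-good block $j$, so $\ED(c'[\phi^{-1}(j)], \tilde{Y}^{(j)}) \le \gamma\alpha\tau$ where $\tilde{Y}^{(j)} = 0^{\alpha\tau}\circ Y^{(j)}\circ 0^{\alpha\tau}$. On input $i$, \textsc{Block-Decode} performs a local scan around position $i$ to locate the two buffers flanking $Y^{(j)}$, extracts the substring between them, and applies $\Dec_{in}$. My plan is to define a ``bad set'' $B \subseteq \phi^{-1}(j)$ consisting of positions within a small neighborhood of each of the at most $\gamma\alpha\tau$ edit operations affecting block $j$, together with a boundary margin near the endpoints of $\phi^{-1}(j)$. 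Outside $B$, the local scan produces the correct flanking buffers of block $j$: Lemma~\ref{lem:SZ-inner-code} guarantees that every length-$2\log\tau$ window of $Y^{(j)}$ has fractional Hamming weight at least $2/5$, so such a window cannot masquerade as a zero-run even after $\gamma\alpha\tau$ edits are applied in the vicinity, while the true $\alpha\tau$-long zero-runs lose at most $\gamma\alpha\tau$ symbols and remain detectable for small enough $\gamma$. The extracted substring is then within $\gamma\alpha\tau\le \delta_{in}\cdot 2|Y^{(j)}|$ edit distance of $Y^{(j)}$, so $\Dec_{in}$ returns $j\circ b_j$, and hence $\varphi(i)=j$. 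Counting, $|B|$ is at most $O(\gamma\alpha\tau\log\tau)$, which is $\le \gamma\cdot |\phi^{-1}(j)|$ after the constants $\alpha,\gamma$ are tuned against $\beta$, proving the claim.

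For property~2, let $[l,r)$ have closure $\+I[L,R-1]$ with every block in $\{L,\ldots,R-1\}$ $\gamma$-good. \textsc{Interval-Decode} runs the SZ buffer-search globally over $c'[l,r)$, partitioning it into alternating buffer and non-buffer segments, then decodes each non-buffer segment with $\Dec_{in}$ and returns $b$ if some segment yields $j\circ b$. I will show that (i) the buffer search correctly locates the buffers of every block whose image lies entirely inside $[l,r)$, and (ii) each extracted non-buffer segment for such a block has edit distance at most $\gamma\alpha\tau$ from the corresponding $Y^{(j')}$. Point~(i) again uses Lemma~\ref{lem:SZ-inner-code} to rule out spurious long zero-runs inside any $Y^{(j')}$ segment, combined with $\gamma$-goodness to guarantee that each genuine $\alpha\tau$-long buffer survives. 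Point~(ii) follows directly from $\gamma$-goodness. Since $\gamma\alpha\tau \le \delta_{in}\cdot 2|Y^{(j')}|$ for appropriately chosen parameters, $\Dec_{in}$ recovers $j'\circ b_{j'}$ exactly for every interior block, in particular for the target block $j$ with $\phi^{-1}(j)\subseteq [l,r)$, giving $\psi(j,l,r) = b_j$.

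The main obstacle will be the analysis at the two endpoint blocks $L$ and $R-1$, whose images may only partially intersect $[l,r)$ and could produce a misparsed segment near the boundary of the search region. I plan to argue that this boundary misbehavior is immaterial to the conclusion: any interior block $j$ with $\phi^{-1}(j)\subseteq[l,r)$ is separated from both endpoints by true buffers that are detected reliably by (i), so the non-buffer segment corresponding to $j$ is extracted independently of how the partial endpoint blocks are parsed. A secondary subtlety is to choose the constants $\alpha$, $\gamma$, the buffer-detection threshold, and $\delta_{in}$ consistently so that (a)~no false buffer appears inside a $\gamma$-good $Y^{(j')}$, (b)~every true buffer in a $\gamma$-good block is detected, and (c)~every extracted inner segment lies within $\Dec_{in}$'s decoding radius. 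These are all satisfiable by taking $\gamma$ sufficiently small relative to $\alpha$, $\delta_{in}$, and the $2/5$ weight guarantee of the SZ code, and the proof amounts to unpacking these constraints.
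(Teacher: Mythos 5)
Your strategy for property (1) has a quantitative flaw that would break the argument. You propose a bad set $B$ consisting of positions near each of the $\le\gamma\alpha\tau$ edit operations (with an $O(\log\tau)$-sized neighborhood around each) plus boundary margins, and you assert $|B| = O(\gamma\alpha\tau\log\tau)$. You then want $|B| \le \gamma\cdot|\phi^{-1}(j)|$. But $|\phi^{-1}(j)| = \Theta(\beta\tau)$ by \cref{prop:bounds-good}, so you are asking for $O(\gamma\alpha\tau\log\tau) \le \gamma\beta\tau$, i.e. $O(\alpha\log\tau)\le\beta$. Since $\alpha$ and $\beta$ are constants and $\tau=\log m\to\infty$, this fails for large $m$ no matter how the constants are tuned. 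The spurious $\log\tau$ factor comes from a misdiagnosis of where failures occur: individual edit operations do not each carve out an unusable $\Theta(\log\tau)$-neighborhood. \textproc{Buff-Find} scans the entire window $c'[i-\eta\tau,i+\eta\tau]$ regardless of where $i$ sits, so the only positions $i\in\phi^{-1}(j)$ on which the decode fails are those that do not lie strictly between the two detected buffers $(a_1,b_1)$ and $(a_2,b_2)$ flanking $Y^{(j)}$ --- that is, $i\le b_1$ or $i\ge a_2$. The paper's \cref{lem:good-approx-buffs} gives $a_2-b_1 \ge (\beta - 2\alpha(1+5\delta_{\buff}/2+\gamma))\tau$, so the bad fraction is $1-\frac{a_2-b_1}{|\phi^{-1}(j)|}\le\frac{(\gamma+6)\alpha}{\beta+\alpha\gamma}$, a constant ratio of the order $\alpha/\beta$ that is $\le\gamma$ once $\alpha\le 2\gamma/(\gamma+6)$. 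In short: the bad region is $\Theta(\alpha\tau)$ positions near the two buffers (present even with zero corruption), not something proportional to the number of edits times $\log\tau$. The SZ density property (\cref{lem:SZ-inner-code}) enters only to bound how deeply an approximate buffer can cut into $Y^{(j)}$, not to assign a $\log\tau$-cost to each edit.

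For property (2), your outline is closer to the paper's, but you wave away the key subtlety. You assert the interior blocks are ``separated from both endpoints by true buffers that are detected reliably,'' but the SZ buffer scan is a \emph{left-to-right greedy} process: if the starting index $l$ does not align with the left boundary of $\mathcal{I}[L,R-1]$, the first buffer can be misplaced, and one must argue that the scan re-synchronizes before reaching interior blocks rather than propagating the offset. The paper addresses this explicitly: the misalignment affects only the first detected buffer, after which the scan coincides with what it would have produced if started at the true left boundary, and the leftmost block is recovered by a second, right-to-left pass. Your proposal should either establish this re-synchronization (using the SZ density guarantee to show the scanner cannot begin a buffer inside a high-weight codeword segment, forcing alignment with each true buffer) or adopt the paper's two-pass fix; as written, point (i) of your outline is asserted rather than proven.
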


Given \cref{thm:main-noisy-binary-search} and \cref{thm:buffer-finding-properties}, we recall and prove \cref{thm:noisybs-to-outer-decoder}.

\nbsToOuter*
\begin{proof}
	Let $\Goody \subseteq [d]$ be the set of $(\theta, \gamma)$-locally-good blocks, and let $\overline{\Goody} = [d] \setminus \Goody$. \Cref{lem:bad-local} implies that 
	\begin{align*}
		& &\abs{\overline{\Goody}} \le \tp{1 + \frac{1}{\theta}}\frac{\delta d\beta}{\alpha\gamma} = \frac{\delta_{out}d}{2}. 
	\end{align*}
	For each $j \in \Goody$, denote by $E_j$ the event $\set{\*b_j \neq b_j}$. \Cref{thm:main-noisy-binary-search} in conjunction with a union bound implies that
	\begin{align*}
		& &\Pr\left[ \bigcup_{j \in \Goody}E_j \right] \le \negl(n').
	\end{align*}
	Since
	\begin{align*}
		& &\Pr_{j \in [d]}\left[\*b_j \neq b_j \right] &\le \Pr_{j \in [d]}\left[ j \in \overline{\Goody} \right] + \Pr_{j \in [d]}\left[\*b_j \neq b_j \ \middle| \ j \in \Goody \right] \\
		& & &\le \frac{\delta_{out}}{2} + \Pr_{j \in [d]}\left[\*b_j \neq b_j \ \middle| \ j \in \Goody \right],
	\end{align*}
	we have
	\begin{align*}
		& &\Pr\left[ \Pr_{j \in [d]}\left[\*b_j \neq b_j \right] \ge \delta_{out} \right] &\le \Pr\left[ \Pr_{j \in [d]}\left[\*b_j \neq b_j \ \middle| \ j \in \Goody \right] \ge \frac{\delta_{out}}{2} \right] \\
		& & &\le \Pr\left[ \bigcup_{j \in \Goody}E_j \right] \le \negl(n').
	\end{align*}
\end{proof}

\section{Block Decode Algorithm} \label{sec:buffer-detection}
A key component of the Noisy Binary Search algorithm is the ability to decode $\gamma$-good blocks in the corrupted codeword $c'$.
In order to do so, our algorithm will take explicit advantage of the $\gamma$-good properties of a block.
We present our block decoding algorithm, named \bbs, in \Cref{alg:boundary-search}.

\begin{algorithm}
\caption{\sc Block-Decode}
\label{alg:boundary-search}
\textbf{Input:} An index $i \in [n']$ and oracle access to (corrupted) codeword $c' \in \zo^{n'}$.\\
\textbf{Output:} Some string $\Dec(s)$ for a substring $s$ of $c'$, or $\bot$.
\begin{algorithmic}[1]
\Function{Block-Decode$^{c'}$}{$i$}
    \State $\mathsf{buff} \gets \textproc{Buff-Find}_\eta^{c'}(i)$
    \If{$\mathsf{buff}==\bot$}
        \State \Return $\bot$
    \Else ~Parse $\mathsf{buff}$ as $(a,b), (a',b')$
        \If{$b < i < a'$}
            \State \Return $\Dec_{in}(c'[ b+1, a'-1 ])$
        \EndIf
    \EndIf
    \State\Return $\bot$
\EndFunction
\end{algorithmic}
\end{algorithm}

\begin{algorithm}[h]
\caption{\sc Buff-Find$_\eta$}
\label{alg:buff-find}
\textbf{Input:} An index $i \in [n']$ and oracle access to (corrupted) codeword $c' \in \set{0,1}^{n'}$.\\ 
\textbf{Output:} Two consecutive $\delta_\buff$-approximate buffers $(a,b), (a',b')$, or $\bot$.
\begin{algorithmic}[1]
\Function{Buff-Find$^{c'}$}{$i$}
	\State $j_s \gets \max\{1, i-\eta \tau\}$, $j_e \gets \min\{n'-\tau+1, i+\eta \tau\}$
	\State $\mathsf{buffs} \gets [~]$
	\While{$j_s \leq j_e$}
		\If{$\ED(0^\tau, c'[j_s, j_s+\tau-1]) \leq \delta_\buff \alpha\tau$}
			\State $\mathsf{buffs}.\mathsf{append}((j_s, j_s+\tau-1))$
		\EndIf
		\State $j_s \gets j_s + 1$
	\EndWhile
	\ForAll{$k \in \set{0,1,\dotsc, |\mathsf{buffs}|-2}$}
		\State $(a,b) \gets \mathsf{buffs}[k]$, $(a',b') \gets \mathsf{buffs}[k+1]$
		\If{$b < i < a'$}
			\State \Return $(a,b),(a',b')$
		\EndIf
	\EndFor
	\State \Return $\bot$
\EndFunction
\end{algorithmic}
\end{algorithm}

\subsection{\textsc{Buff-Find}}
The algorithm \bbs makes use of the sub-routine \buffind, presented in \cref{alg:buff-find}.
At a high-level, the algorithm \buffind on input $i$ and given oracle access to (corrupted) codeword $c'$ searches the ball $c'[i-\eta\tau,i+\eta\tau]$ for all {\em $\delta_\buff$-approximate buffers} in the interval, where $\eta \geq 1$ is a constant such that if $i \in \phi^{-1}(j)$ for any good block $j$ then $c'[ \phi^{-1}(j) ] \subseteq c'[i-\eta\tau, i+\eta\tau]$. 
Briefly, for any $k \in \bbN$ and $\delta_\buff \in (0,1/2)$ a string $w \in \zo^{k}$ is a {\em $\delta_\buff$-approximate buffer} if $\ED(w, 0^{k}) \leq \delta_\buff\cdot k$. 
For brevity we refer to approximate buffers simply as buffers.
Once all buffers are found, the algorithm attempts to find a pair of consecutive buffers such that the index $i$ is between these two buffers.
If two such buffers are found, then the algorithm returns these two consecutive buffers.
For notational convenience, for integers $a < b$ we let the tuple $(a,b)$ denote a (approximate) buffer.
\begin{lemma}\label{lem:good-approx-buffs}
    Let $i \in [n']$ and $j \in [d]$.
    There exist constants $\gamma < \delta_\buff \in (0,1/2)$ such that if $i \in \phi^{-1}(j)$ then \buffind finds buffers $(a_1, b_1)$ and $(a_2, b_2)$ such that $\Dec_{in}( c'[b_1 + 1, a_2 - 1] ) = j \circ b_j$.
    Further, if $b_1 < i < a_2$ then \bbs outputs $j \circ b_j$. 
\end{lemma}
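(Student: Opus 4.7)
\textbf{Proof proposal for Lemma~\ref{lem:good-approx-buffs}.} The plan is to fix a $\gamma$-good block $j$ (the natural implicit hypothesis from context, matching Theorem~\ref{thm:buffer-finding-properties}(1)), an arbitrary $i \in \phi^{-1}(j)$, and then argue in three stages that the list $\mathsf{buffs}$ produced by \buffind contains exactly the ``true'' pair of $\delta_\buff$-approximate buffers flanking $Y^{(j)}$ as consecutive entries.

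First, I will verify that the search window $[i-\eta\tau, i+\eta\tau]$ covers $\phi^{-1}(j)$ and a little extra on both sides (so that the zero regions just before and after $c'[\phi^{-1}(j)]$ are scanned). By Proposition~\ref{prop:bounds-good}(1), $|\phi^{-1}(j)| \le (\beta+\alpha\gamma)\tau$, so choosing $\eta = \Theta(\beta)$ suffices. Second, I will show that every zero region of length $2\alpha\tau$ sitting between two adjacent uncorrupted blocks contains a length-$\tau$ window whose corrupted image in $c'$ has edit distance at most $\delta_\buff \alpha\tau$ to $0^\tau$. Since $\ED(c'[\phi^{-1}(j)], \tilde Y^{(j)}) \le \gamma \alpha\tau$, the zero regions bordering block $j$ inherit at most that many edits; choosing $\alpha \ge 1/2$ ensures a length-$\tau$ window fits in each $2\alpha\tau$ zero region, and requiring $\gamma < \delta_\buff$ ensures that its corrupted version still qualifies as a $\delta_\buff$-approximate buffer. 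This gives the ``true'' pair $(a_1,b_1),(a_2,b_2)$.

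Third, and this is the main obstacle, I need to rule out spurious $\delta_\buff$-approximate buffers sitting ``deep inside'' $Y^{(j)}$. Here I invoke Lemma~\ref{lem:SZ-inner-code}: in the uncorrupted $Y^{(j)}$, every substring of length at least $2\log t = O(\log\log m)$ has fractional Hamming weight at least $2/5$, and since the algorithm's window length $\tau = \log m$ is much larger than $2\log t$, any length-$\tau$ substring of $Y^{(j)}$ has Hamming weight at least $(2/5)\tau$. Because the edit distance of a binary string of length $\tau$ to $0^\tau$ equals twice its Hamming weight, such a window has edit distance at least $(4/5)\tau$ to $0^\tau$. Even when we account for up to $\gamma\alpha\tau$ edits introduced by corruption of block $j$, the edit distance to $0^\tau$ remains at least $(4/5 - \gamma\alpha)\tau$; choosing $\delta_\buff$ with $\delta_\buff\alpha + \gamma\alpha < 4/5$ guarantees this exceeds $\delta_\buff\alpha\tau$. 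The same weight calculation, applied to windows straddling the $Y^{(j)}$/buffer boundary, shows that any candidate buffer can intrude into $Y^{(j)}$ by only $O((\delta_\buff + \gamma)\alpha\tau)$ symbols.

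Combining the three steps, the leftmost and rightmost candidate buffers genuinely hugging $Y^{(j)}$ appear as consecutive entries in $\mathsf{buffs}$, and $c'[b_1+1, a_2-1]$ differs from $Y^{(j)}$ by at most $(\gamma + O(\delta_\buff + \gamma))\alpha\tau$ insertions and deletions; for constants chosen with $(\gamma + O(\delta_\buff+\gamma))\alpha < \delta_{in}(\beta-2\alpha)$, the inner decoder correctly returns $\Dec_{in}(c'[b_1+1,a_2-1]) = j \circ b_j$, yielding the first conclusion. The second conclusion is then immediate: when $b_1 < i < a_2$, \buffind's final scan identifies $(a_1,b_1),(a_2,b_2)$ as the first consecutive pair straddling $i$, \bbs feeds $c'[b_1+1, a_2-1]$ to $\Dec_{in}$, and the output is $j \circ b_j$. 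The hardest part of the argument is choosing $\alpha, \gamma, \delta_\buff, \eta$ simultaneously to satisfy all of: $\alpha > 1/2$ for buffers to fit, $\gamma < \delta_\buff$ with $\delta_\buff\alpha + \gamma\alpha < 4/5$ for the no-spurious-buffer bound, and the residual distance after trimming still within the $\delta_{in}$ decoding radius of $C_{in}$.
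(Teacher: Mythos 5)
The overall three-stage plan (window coverage, existence of the flanking approximate buffers, no spurious buffers deep inside $Y^{(j)}$, then decodability of the trimmed substring) mirrors the paper's proof. However, there is a concrete parameter incompatibility that you do not resolve, and it is the crux of the lemma.

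You require $\alpha\ge 1/2$ so that a length-$\tau$ window fits inside the $2\alpha\tau$ zero-run between consecutive $Y$'s. But elsewhere the construction needs $\alpha$ to be a \emph{small} constant: \cref{prop:parameters} imposes $\alpha\le 2\gamma/(\gamma+6)$, which with $\gamma=1/12$ gives $\alpha\le 2/73$, and that constraint is load-bearing in the proof of \cref{thm:buffer-finding-properties}(1) (the bound $\Pr_{i\in\phi^{-1}(j)}[\varphi(i)\neq j]\le\gamma$ needs $\alpha$ tiny so that the fraction of an interval $\phi^{-1}(j)$ consumed by buffers and cut regions is small). So your lemma would only hold under a hypothesis that the rest of the paper cannot afford. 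The paper's proof avoids stating $\alpha\ge 1/2$, instead reasoning about the maximal ``cut'' that a qualifying window can make into $Y$: it shows that any length-$c_0\tau$ excursion into $Y$ already accumulates weight exceeding the threshold $(\delta_\buff/2)\alpha\tau$ by the SZ density property (\cref{lem:SZ-inner-code}), and thereby derives the range $[(1+\delta_\buff/2)\alpha\tau,\,(1+5\delta_\buff/2)\alpha\tau]$ for $b_1$; the cut is $O(\delta_\buff\alpha\tau)$, so the trimmed interval $c'[b_1+1,a_2-1]$ is within the $\delta_{in}$ decoding radius provided $(5\delta_\buff+3\gamma)\alpha/(\beta-2\alpha)\le\delta_{in}$. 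The paper's argument allows the candidate window to straddle the boundary into the previous block's trailing buffer and does not hinge on the window fitting entirely inside the zero region.

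Two smaller points. First, when you account for corruption, the edit distance to the all-zero string drops by \emph{twice} the number of edits (each insertion/deletion changes the Hamming weight by at most one, and $\ED(w,0^{|w|})=2\cdot\mathrm{wt}(w)$), so your lower bound $(4/5-\gamma\alpha)\tau$ should be $(4/5-2\gamma\alpha)\tau$; this does not change the qualitative conclusion but the factor matters when you pin down constants. Second, your step two implicitly assumes the $2\alpha\tau$ zero-run is only lightly corrupted, but that region straddles blocks $j-1$ and $j$, and only block $j$ is assumed $\gamma$-good; the paper sidesteps this by anchoring its window entirely within the single block $B=0^{\alpha\tau}\circ Y\circ 0^{\alpha\tau}$ under analysis. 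You should either add a hypothesis on block $j-1$ or, following the paper, analyze the cut bound purely from $B$'s own $\alpha\tau$ buffer.
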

\begin{proof}
	We first examine an uncorrupted block which has the form $B = 0^{\alpha\tau} \circ Y \circ 0^{\alpha\tau}$ for some $Y \in C_{in}$.
	Let $s = |B| = \beta\tau$ and note that $\tau = \log m$ and $|Y| = (\beta-2\alpha)\tau$.
	Note also that $B[1,\alpha\tau] = B[s,s-\alpha\tau+1] = 0^{\alpha\tau}$ and $B[\alpha\tau+1, s-\alpha\tau] = Y$.
	We observe that approximate buffers $(a,b)$ exist such that $b > \alpha\tau$ or $a < s-\alpha\tau+1$; that is, an approximate buffer can cut into the codeword $Y$.
	We are interested in bounding how large this ``cut'' can be in a $\gamma$-good block and first examine how large this ``cut'' can be in an uncorrupted block.
	
	Our inner code has the property that any interval of length $2\log (\alpha\tau)$ has at least fractional weight $\geq 2/5$.
	That is, an interval of length $2\log(\alpha\tau)$ in $Y$ has at least $(4/5)\log (\alpha\tau)$ number of $1$'s.
	Also any approximate buffer has weight at most $(\delta_\buff/2) \alpha\tau$.
	Let $\ell = c_0\tau$ for some constant $c_0$.
	We count the number of 1's in any $c_0\tau$ interval of $Y$.
	Note that in such an interval there are at most $c_0\tau/(2\log(\alpha\tau))$ disjoint intervals of length $2\log(\alpha\tau)$.
	Since the weight of each of these $2\log(\alpha\tau)$ intervals is at least $(4/5)\log(\alpha\tau)$ and the intervals are disjoint, we have that the weight of the interval $c_0\tau$ in $Y$ is at least $c_0\tau/(2\log(\alpha\tau)) \cdot (4/5)\log(\alpha\tau) = (2/5)c_0\tau$.
	We pick $c_0$ such that $(2/5)c_0\tau \geq (\delta_\buff/2) \alpha\tau + 1$; i.e., $c_0 = (5/4)\delta_\buff\alpha+1 \geq (5/4)\alpha\delta_\buff + 5/(2\tau)$ (for large enough $m$ since $\tau$ is an increasing function of $m$).
	On the other hand, we can have that an interval of length $(\delta_\buff/2)\alpha\tau+1$ in $Y$ has $(\delta_\buff/2)\alpha\tau+1$ number of 1's.
	
	The above derivation implies that largest ``cut'' an approximate buffer can make into the codeword $Y$ from the start (i.e., indicies after $\alpha\tau$) (and symmetrically the end; i.e., indices before $s-\alpha\tau+1$) has size in the range $[ (1+\delta_\buff/2)\alpha\tau, (1+5\delta_\buff/2)\alpha\tau]$.
	This implies that there exists $b_1,a_2 \in \bbN$ such that $(1+\delta_\buff/2)\alpha\tau \leq b_1 \leq (1+5\delta_\buff/2)\alpha\tau$ and $(\beta-\alpha(1+5\delta_\buff/2))\tau\leq a_2\leq (\beta - \alpha(1+\delta_\buff/2))\tau$.
	Further $b_1$ and $a_2$ have the following properties: (1) $B[b_1-\tau+1, b_1]$ and $B[a_2, a_2+\tau-1]$ are approximate buffers; and (2) for every $i \in \{ b_1-\tau+2, b_1-\tau+3,\dotsc, a_2-1 \}$, the window $B[i,i+\tau-1]$ is not an approximate buffer.
	These properties follow by our choice of $c_0$ and by the density property we have for our inner code $C_{in}$.
	
	We obtained the above bounds on $b_1,a_2$ by analyzing an uncorrupted block $B$.
	We use this as a starting point for analyzing a $\gamma$-good block $\tilde{B}$ (i.e., $\ED(B,\tilde{B}) \leq \gamma \alpha\tau$.
	Let $s' = |\tilde{B}|$.
	Then by $\gamma$-good we have that $(1-\alpha\gamma)s \leq s' \leq (1+\alpha\gamma)s$.
	Now by $\gamma$-good, we have that the bounds obtained on $b_1$ and $a_2$ are perturbed by at most $\alpha\gamma\tau$.
	That is, we have in block $\tilde{B}$
	\begin{align*}
		& & (1+\delta_\buff/2-\gamma)\alpha\tau &\leq b_1 \leq (1+5\delta_\buff/2+\gamma)\alpha\tau\\
		& & (\beta - \alpha(1 + 5\delta_\buff/2+\gamma))\tau &\leq a_2 \leq (\beta - \alpha(1+\delta_\buff/2-\gamma))\tau.
	\end{align*}
	This gives us
	\begin{align*}
		& & a_2 - b_1 &\leq (\beta - \alpha(1+\delta_\buff/2-\gamma))\tau - (1+\delta_\buff/2-\gamma)\alpha\tau\\
		& & &= (\beta -2\alpha(1+\delta_\buff/2-\gamma))\tau\\
		& & a_2 - b_1 &\geq (\beta - \alpha(1 + 5\delta_\buff/2+\gamma))\tau - (1+5\delta_\buff/2+\gamma)\alpha\tau\\
		& & &= (\beta - 2\alpha(1 + 5\delta_\buff/2+\gamma))\tau.
	\end{align*}
	Now we want to ensure decoding is possible on $c'[b_1+1, a_2-1]$.
	We observe that $(\beta-2\alpha)\tau - (a_2-b_1)$ is the number of insdels that are introduced because of the buffer finding algorithm.
	This quantity can be written as
	\begin{align*}
		& & (\delta_\buff-2\gamma)\alpha\tau \leq (\beta-2\alpha)\tau-(a_2-b_1) \leq (5\delta_\buff+2\gamma)\alpha\tau.
	\end{align*}
	Note that since $\abs{(\delta_\buff-2\gamma)\alpha\tau} \leq \abs{(5\delta_\buff+2\gamma)\alpha\tau}$, we can correctly decode if $\gamma$ and $\delta_\buff$ are chosen such that
	\begin{align*}
		& & (5\delta_\buff + 2\gamma)\alpha\tau + \gamma\alpha\tau &\leq \delta_\inn (\beta-2\alpha)\tau\\
		& & \frac{(5\delta_\buff+3\gamma)\alpha}{\beta-2\alpha} &\leq \delta_\inn.
	\end{align*}
	
	To finish, we note that the constant $\eta$ is chosen so that if $i \in \phi^{-1}(j)$ for any good block $j$ then we have that $c'[\phi^{-1}(j)] \subset c'[i-\eta\tau, i+\eta\tau]$.
	Since the algorithm \buffind finds every $\delta_\buff$-approximate buffer in the interval $c'[i-\kappa\tau, i+\kappa\tau]$ and since this interval contains $\gamma$-good block $j$, we have that the algorithm indeed \buffind returns approximate buffers $(a_1,b_1)$ and $(a_2,b_2)$ such that $\Dec_{in}(c'[b_1+1, a_2-1]) = j\circ Y\p{j})$ if $b_1 < i < a_2$, thus proving the lemma. 
\end{proof}

We now recall and prove \cref{thm:buffer-finding-properties}.

\buffProperties*
\begin{proof}
	First we analyze the probability $\Pr_{i \in \phi^{-1}(j)}[ \varphi(i) \neq j ].$
	By \Cref{lem:good-approx-buffs} the algorithm \bbs on input $i$ correctly outputs the block $Y\p{j} \circ j$ if $i \in [b_1+1, a_2-1] \subset \phi^{-1}(j)$.
	Since $j$ is $\gamma$-good and by \Cref{prop:bounds-good} we have that $|\phi^{-1}(j)| \leq (\beta+\alpha\gamma)\tau$.
	Finally, by correctness of the decoder $\Dec_{in}$, \cref{lem:good-approx-buffs} gives us a lower bound on the distance $a_2-b_1$.
	In particular,
	\begin{align*}
		& &a_2 - b_1 \geq (\beta - 2\alpha(1+5\delta_\buff/2 + \gamma))\tau.
	\end{align*}
	Thus we have that
	\begin{align*}
		& &\Pr_{i \in \phi^{-1}(j)}[ \varphi(i) \neq j ] &= 1- \Pr_{i \in \phi^{-1}(j)}[ \varphi(i) = j ] = 1 - \frac{a_2-b_1}{|\phi^{-1}(j)|} \leq 1 - \frac{a_2 - b_1}{(\beta+\alpha\gamma)\tau}\\
		& & &\leq 1 - \frac{(\beta - 2\alpha(1 + 5\delta_\buff/2+\gamma))\tau}{(\beta+\alpha\gamma)\tau} =1 - \frac{\beta - 2\alpha(1 + 5\delta_\buff/2+\gamma)}{(\beta+\alpha\gamma)} \\
		& & &\leq \frac{\alpha\gamma + 6\alpha}{\beta+\alpha\gamma} \leq \frac{\tp{\gamma+6}\alpha}{2} \le \gamma,
	\end{align*}
	where we assumed that that $\delta_\buff < 1/2$, $\gamma=1/12$ and $\alpha \le 2\gamma/(\gamma+6)$.
	More generally, there exists constants $\delta_\buff, \gamma$, and $\alpha$ such that the above inequalities hold with $\alpha \leq 2\gamma/(\gamma+6)$.

	For the second statement of \Cref{thm:buffer-finding-properties}, we analyze the algorithm \textproc{Interval-Decode}.
	Note we are only concerned with $\gamma$-good blocks which are wholly contained in the interval $[l,r)$. 
	Let $\+I[L,R-1]$ be the closure of $[l,r)$. 
	We note that $\phi$ restricted to $\+I[L,R-1]$ is a sub-decomposition which captures the errors introduced to blocks $L, \ldots, R-1$. 
	The algorithm \textproc{Interval-Decode} is similar to the global buffer-finding algorithm of SZ codes applied to the interval $[l,r)$: it searches intervals of length $\alpha\tau$ in $\{l,l+1,\dotsc, r-1\}$ from left to right until an approximate buffer $c'[i,i+\alpha\tau-1]$ is found. 
	Then the algorithm marks it and continue scanning for approximate buffers, starting with left endpoint of the first new interval at the right endpoint of the presumed buffer. 
	Then once the whole interval has been scanned, the algorithm finds pairs of consecutive buffers which are far apart and attempts to decode the section of the block that falls between these two buffers.
	
	According to the analysis of the SZ buffer finding algorithm, as long as block $j$ and $j+1$ are $\gamma$-good (for small enough constant $\gamma$), the buffers surrounding block $j+1$ should be located approximately correctly, and block $j$ will appear close to a codeword. 
	Since every block in the closure of $[l,r)$ is $\gamma$-good, all the buffers in this interval should be located approximately correctly, and every block $j$ such that $\phi^{-1}(j) \subseteq [l,r)$ should be decoded properly. 
	Therefore there will be exactly one block decoded to $(j, b)$ and it must hold that $b = b_j$.
	
	There is one minor issue with the above argument. 
	The searching process starts from an index $l$ which does not necessarily align with the left boundary of $\+I[L,R-1]$. 
	However, we note that this only affects the location of the first approximate buffer, and all subsequent buffers are going to be consistent with what the algorithm would have found if it started from the left boundary of $\+I[L,R-1]$. 
	In order to decode the first block, \textproc{Interval-Decode} performs another SZ buffer finding algorithm, but \emph{from right to left}, and decodes the leftmost block.
\end{proof}

\section{Parameter Setting and Proof of \texorpdfstring{\Cref{thm:main}}{Theorem 6}}\label{sec:param-and-proof}

In this section we list a set of constraints which our setting of parameters must satisfy, and then complete the proof of \Cref{thm:main}. 
These constraints are required by different parts of the analysis. 
Recall that $\delta_{out}, \delta_{in} \in (0,1)$ and $\beta_{in} \ge 1$ are given as parameters of the outer code and the inner code, and that $\beta = 2\alpha + \beta_{in}\tp{1 + \log|\Sigma|}$. We have that $\beta \ge 2$ for any non-negative $\alpha$.
\begin{proposition} \label{prop:parameters}
There exists constants $\gamma, \theta \in (0,1)$ and $\alpha = \Omega(\delta_{in})$ such that the following constraints hold:
\begin{enumerate}
    \item $\gamma \le 1/12$ and $\theta < 1/50$; 
    \item $(\beta+\gamma)/(\beta-\gamma) < 4/3$;
    \item $\alpha \le 2\gamma/(\gamma+6)$;
    \item $\alpha(1 + 3\gamma)/(\beta-2\alpha) < \delta_{in}$.
\end{enumerate}
\end{proposition}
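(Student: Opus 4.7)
The plan is to exhibit concrete constants and verify each of the four constraints in turn, treating the proposition as a routine algebraic feasibility check. I will fix $\gamma$ and $\theta$ independently of the other parameters, and then choose $\alpha$ as an explicit function of $\delta_{in}$ so that constraints (3) and (4)---which couple $\alpha$ with $\gamma$ and $\beta$---are satisfied simultaneously.

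First I set $\gamma = 1/12$ and $\theta = 1/100$, which immediately gives constraint (1). For constraint (2), I observe that $(\beta+\gamma)/(\beta-\gamma) < 4/3$ is equivalent (after cross-multiplying, using $\beta > \gamma$) to $\beta > 7\gamma = 7/12$. Since $|\Sigma| \ge 2$ we have $1 + \log|\Sigma| \ge 2$, so $\beta = 2\alpha + \beta_{in}(1+\log|\Sigma|) \ge 2\beta_{in} \ge 2 > 7/12$, and (2) holds regardless of how $\alpha$ is eventually chosen.

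It then remains to pick $\alpha$. With $\gamma = 1/12$, constraint (3) becomes $\alpha \le 2\gamma/(\gamma+6) = (1/6)/(73/12) = 2/73$. Constraint (4) becomes $(5/4)\alpha/(\beta - 2\alpha) < \delta_{in}$, and since $\beta - 2\alpha = \beta_{in}(1+\log|\Sigma|) \ge 1$, a sufficient (stronger) condition is $(5/4)\alpha < \delta_{in}$, i.e. $\alpha < (4/5)\delta_{in}$. Therefore I will set
\[
  \alpha \;=\; \min\!\left\{\tfrac{2}{73},\; \tfrac{\delta_{in}}{2}\right\},
\]
which simultaneously satisfies (3) and (4). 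To verify $\alpha = \Omega(\delta_{in})$, I will split into two cases: if $\delta_{in} \le 4/73$ then $\alpha = \delta_{in}/2$, and if $\delta_{in} > 4/73$ then $\alpha = 2/73 \ge (2/73)\,\delta_{in}$; in either case $\alpha \ge (2/73)\,\delta_{in}$.

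There is no real obstacle in this proof; it is purely an inequality-chasing argument. The only subtlety is ensuring that the constant cap $2/73$ coming from (3) does not defeat the linear-in-$\delta_{in}$ behaviour required by the $\Omega(\delta_{in})$ claim, which is precisely what the two-case analysis above handles. Note also that the analysis is robust to the exact choice of $\gamma$ and $\theta$: any $\gamma \in (0, 1/12]$ and $\theta \in (0, 1/50)$ would serve, provided $\alpha$ is then scaled accordingly.
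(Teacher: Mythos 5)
Your proposal is correct and takes essentially the same route as the paper: fix explicit constants $\gamma,\theta$ satisfying (1)--(2), then choose $\alpha$ as a linear multiple of $\delta_{in}$ small enough to satisfy (3)--(4). The only cosmetic difference is the choice of $\alpha$: you set $\alpha = \min\{2/73,\ \delta_{in}/2\}$ and handle (3) and the $\Omega(\delta_{in})$ requirement by a case split, whereas the paper sets $\alpha = 2\gamma\delta_{in}/(\gamma+6)$ directly, which makes (3) automatic from $\delta_{in}\le 1$ and avoids the min.
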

\begin{proof}
For convenience of the reader and simplicity of the presentation 
we work with explicit values and verify that they satisfy the constraints in \cref{prop:parameters}.
Let $\gamma = 1/12$ and $\theta = 1/51$, which satisfies constraint (1). 
Note that $\gamma < 2/7 \le \beta/7$, hence 
$$ \frac{\beta+\gamma}{\beta-\gamma} < \frac{4}{3} $$
and constraint (2) is satisfied. 
We take $\alpha = 2\gamma\delta_{in}/(\gamma+6)$ so that $\alpha = \Omega(\delta_{in})$ and constraint (3) is satisfied. 
Note also that $\beta - 2\alpha = \beta_{in}(1 + \log|\Sigma|) \ge 2$ which implies
$$ \frac{\alpha(1 + 3\gamma)}{\beta-2\alpha} \le \frac{\alpha(1+3\gamma)}{2} = \frac{\alpha(\gamma + 3\gamma^2)}{2\gamma} < \frac{\alpha(\gamma+6)}{2\gamma} = \delta_{in}. $$
Therefore, constraint (4) is also satisfied.
\end{proof}

We let
$$ \delta = \frac{\delta_{out} \alpha\gamma}{2\beta(1+1/\theta)} = \Omega\tp{ \delta_{in}\delta_{out} }. $$
We now recall and prove \Cref{thm:main}, which shows \Cref{thm:general}.

\main*
\begin{proof}
Recall that the decoder $\Dec$ works as follows. 
Given input index $i \in [k]$ and oracle access to $c' \in \set{0,1}^{n'}$, $\Dec^{c'}(i)$ simulates $\Dec_{out}^{s'}(i)$. 
Whenever $\Dec_{out}^{s'}(i)$ queries an index $j \in [m]$, the decoder expresses $j = (p-1)\tau + r_j$ for $p \in [d]$ and $0\le r_j < \tau$, and runs \cref{alg:noisy-binary-search} on input $(c',1,n'+1,p)$ to  obtain a $\tau$-long string $b'_p$. 
Then it feeds the $(r_j+1)$-th symbol of $b'_p$ to $\Dec_{out}^{s'}(i)$. 
At the end of the simulation, $\Dec^{c'}(i)$ returns the output of $\Dec_{out}^{s'}(i)$.

For $p \in [d]$, let $b'_p \in \Sigma^{\tau} \cup \set{\perp}$ be a random variable that has the same distribution as the output of \cref{alg:noisy-binary-search} on input $(c',1,n'+1,p)$. 
Define a random string $s' \in \tp{\Sigma \cup \set{\perp}}^m$ as follows. 
For every $i \in [m]$ such that $i=(p-1)\tau + r$ for $p \in [d]$ and $0\le r < \tau$, 
\begin{align*}
& & s'[i] = \begin{cases}
b'_p[r] & \textup{if $b'_p \neq \perp$}, \\
\perp & \textup{if $b'_p = \perp$}.
\end{cases}
\end{align*}

Since $b'_p = b_p$ implies $s'[(p-1)\tau + r] = s[(p-1)\tau + r]$ for all $0 \le r < \tau$, the event $E_s \coloneqq \set{\Pr_{j \in [m]}\left[ s'[j] \neq s[j] \right] \le \delta_{out} }$ is implied by the event $E_b \coloneqq \{\Pr_{j \in [d]}\big[ b'_j \neq b_j \big] \le \delta_{out} \}$. 
\cref{thm:noisybs-to-outer-decoder} implies that $\Pr[E_s] \ge \Pr[E_b] \ge 1-\negl(n')$. 
According to the construction of $\Dec$, from the perspective of the outer decoder, the string $s'$ is precisely the string it is interacting with. 
Hence by properties of $\Dec_{out}$ we have that
\begin{align*}
& &\forall i \in [k], \quad \Pr\left[ \Dec_{out}^{s'}(i) = x[i] \ \middle| \ E_s \right] \ge \frac{1}{2} + \eps_{out}.
\end{align*}
Therefore by construction of $\Dec$ we have
\begin{align*}
& &\forall i \in [k], \quad \Pr\left[ \Dec^{c'}(i) = x[i] \right] &\ge \Pr\left[ E_s \right] \cdot \Pr\left[ \Dec_{out}^{s'}(i) = x[i] \ \middle| \ E_s \right] \\
& & &\ge \tp{1-\negl(n')} \cdot \tp{\frac{1}{2} + \eps_{out}} \ge \frac{1}{2} + \eps_{out} - \negl(n').
\end{align*}
The query complexity of $\Dec$ is $\ell_{out} \cdot O\tp{\log^4 n'}$ since it makes $\ell_{out}$ calls to \cref{alg:noisy-binary-search}, which by \cref{prop:noisybs-query-complexity} has query complexity $O\tp{\log^4 n'}$.
\end{proof}

\bibliography{references,abbrev0,crypto}

\begin{thebibliography}{10}

\bibitem{CCS:AlwBloHar17}
Jo{\"e}l Alwen, Jeremiah Blocki, and Ben Harsha.
\newblock Practical graphs for optimal side-channel resistant memory-hard
  functions.
\newblock In Bhavani~M. Thuraisingham, David Evans, Tal Malkin, and Dongyan Xu,
  editors, {\em ACM CCS 2017: 24th Conference on Computer and Communications
  Security}, pages 1001--1017, Dallas, TX, USA, October~31~--~November~2, 2017.
  {ACM} Press.
\newblock \href {https://doi.org/10.1145/3133956.3134031}
  {\path{doi:10.1145/3133956.3134031}}.

\bibitem{EC:AlwBloPie18}
Jo{\"e}l Alwen, Jeremiah Blocki, and Krzysztof Pietrzak.
\newblock Sustained space complexity.
\newblock In Jesper~Buus Nielsen and Vincent Rijmen, editors, {\em Advances in
  Cryptology -- {EUROCRYPT}~2018, Part~II}, volume 10821 of {\em Lecture Notes
  in Computer Science}, pages 99--130, Tel Aviv, Israel, April~29~--~May~3,
  2018. Springer, Heidelberg, Germany.
\newblock \href {https://doi.org/10.1007/978-3-319-78375-8_4}
  {\path{doi:10.1007/978-3-319-78375-8_4}}.

\bibitem{BBGKZ20}
Alexander~R. Block, Jeremiah Blocki, Elena Grigorescu, Shubhang Kulkarni, and
  Minshen Zhu.
\newblock {Locally Decodable/Correctable Codes for Insertions and Deletions}.
\newblock In Nitin Saxena and Sunil Simon, editors, {\em 40th IARCS Annual
  Conference on Foundations of Software Technology and Theoretical Computer
  Science (FSTTCS 2020)}, volume 182 of {\em Leibniz International Proceedings
  in Informatics (LIPIcs)}, pages 16:1--16:17, Dagstuhl, Germany, 2020. Schloss
  Dagstuhl--Leibniz-Zentrum f{\"u}r Informatik.
\newblock URL: \url{https://drops.dagstuhl.de/opus/volltexte/2020/13257}, \href
  {https://doi.org/10.4230/LIPIcs.FSTTCS.2020.16}
  {\path{doi:10.4230/LIPIcs.FSTTCS.2020.16}}.

\bibitem{BrakensiekGZ18}
Joshua Brakensiek, Venkatesan Guruswami, and Samuel Zbarsky.
\newblock Efficient low-redundancy codes for correcting multiple deletions.
\newblock {\em {IEEE} Trans. Inf. Theory}, 64(5):3403--3410, 2018.

\bibitem{SODA:BraMos08}
Mark Braverman and Elchanan Mossel.
\newblock Noisy sorting without resampling.
\newblock In Shang-Teng Huang, editor, {\em 19th Annual {ACM}-{SIAM} Symposium
  on Discrete Algorithms}, pages 268--276, San Francisco, CA, USA,
  January~20--22, 2008. {ACM-SIAM}.

\bibitem{ChengGHL20}
Kuan Cheng, Venkatesan Guruswami, Bernhard Haeupler, and Xin Li.
\newblock Efficient linear and affine codes for correcting
  insertions/deletions.
\newblock {\em CoRR}, abs/2007.09075, 2020.
\newblock URL: \url{https://arxiv.org/abs/2007.09075}.

\bibitem{ChengHLSW19}
Kuan Cheng, Bernhard Haeupler, Xin Li, Amirbehshad Shahrasbi, and Ke~Wu.
\newblock Synchronization strings: Highly efficient deterministic constructions
  over small alphabets.
\newblock In Timothy~M. Chan, editor, {\em Proceedings of the Thirtieth Annual
  {ACM-SIAM} Symposium on Discrete Algorithms}, pages 2185--2204. {SIAM}, 2019.

\bibitem{ChengJLW18}
Kuan Cheng, Zhengzhong Jin, Xin Li, and Ke~Wu.
\newblock Deterministic document exchange protocols, and almost optimal binary
  codes for edit errors.
\newblock In Mikkel Thorup, editor, {\em 59th {IEEE} Annual Symposium on
  Foundations of Computer Science, {FOCS}}, pages 200--211. {IEEE} Computer
  Society, 2018.

\bibitem{ChengJ0W19}
Kuan Cheng, Zhengzhong Jin, Xin Li, and Ke~Wu.
\newblock Block edit errors with transpositions: Deterministic document
  exchange protocols and almost optimal binary codes.
\newblock In {\em 46th International Colloquium on Automata, Languages, and
  Programming, {ICALP}}, volume 132 of {\em LIPIcs}, pages 37:1--37:15. Schloss
  Dagstuhl - Leibniz-Zentrum f{\"{u}}r Informatik, 2019.

\bibitem{cheng2020efficient}
Kuan Cheng and Xin Li.
\newblock Efficient document exchange and error correcting codes with
  asymmetric information, 2020.
\newblock \href {http://arxiv.org/abs/2007.00870} {\path{arXiv:2007.00870}}.

\bibitem{ChengLZ20}
Kuan Cheng, Xin Li, and Yu~Zheng.
\newblock Locally decodable codes with randomized encoding.
\newblock {\em CoRR}, abs/2001.03692, 2020.
\newblock URL: \url{https://arxiv.org/abs/2001.03692}.

\bibitem{SODA:DhaGacWin92}
Aditi Dhagat, P{\'e}ter G{\'a}cs, and Peter Winkler.
\newblock On playing ``twenty questions'' with a liar.
\newblock In Greg~N. Frederickson, editor, {\em 3rd Annual {ACM}-{SIAM}
  Symposium on Discrete Algorithms}, pages 16--22, Orlando, Florida, USA,
  January~27--29, 1992. {ACM-SIAM}.

\bibitem{DvirGY11}
Zeev Dvir, Parikshit Gopalan, and Sergey Yekhanin.
\newblock Matching vector codes.
\newblock {\em {SIAM} J. Comput.}, 40(4):1154--1178, 2011.

\bibitem{Efremenko12}
Klim Efremenko.
\newblock 3-query locally decodable codes of subexponential length.
\newblock {\em {SIAM} J. Comput.}, 41(6):1694--1703, 2012.

\bibitem{EGS75}
Paul Erd\"os, Ronald~L. Graham, and Endre Szemer\'edi.
\newblock On sparse graphs with dense long paths.
\newblock Technical report, Stanford, CA, USA, 1975.

\bibitem{FRPU94}
Uriel Feige, Prabhakar Raghavan, David Peleg, and Eli Upfal.
\newblock Computing with noisy information.
\newblock {\em SIAM J. Comput.}, 23(5):1001–1018, October 1994.

\bibitem{GLLP17}
Barbara Geissmann, Stefano Leucci, Chih-Hung Liu, and Paolo Penna.
\newblock Sorting with recurrent comparison errors.
\newblock 09 2017.

\bibitem{GuruswamiHS20}
Venkatesan Guruswami, Bernhard Haeupler, and Amirbehshad Shahrasbi.
\newblock Optimally resilient codes for list-decoding from insertions and
  deletions.
\newblock In Konstantin Makarychev, Yury Makarychev, Madhur Tulsiani, Gautam
  Kamath, and Julia Chuzhoy, editors, {\em Proccedings of the 52nd Annual {ACM}
  {SIGACT} Symposium on Theory of Computing, {STOC}}, pages 524--537. {ACM},
  2020.

\bibitem{GuruswamiL18}
Venkatesan Guruswami and Ray Li.
\newblock Coding against deletions in oblivious and online models.
\newblock In Artur Czumaj, editor, {\em Proceedings of the Twenty-Ninth Annual
  {ACM-SIAM} Symposium on Discrete Algorithms}, pages 625--643. {SIAM}, 2018.

\bibitem{guruswami2018polynomial}
Venkatesan Guruswami and Ray Li.
\newblock Polynomial time decodable codes for the binary deletion channel.
\newblock {\em IEEE Transactions on Information Theory}, 65(4):2171--2178,
  2018.

\bibitem{GuruswamiL19}
Venkatesan Guruswami and Ray Li.
\newblock Polynomial time decodable codes for the binary deletion channel.
\newblock {\em {IEEE} Trans. Inf. Theory}, 65(4):2171--2178, 2019.

\bibitem{guruswami2017deletion}
Venkatesan Guruswami and Carol Wang.
\newblock Deletion codes in the high-noise and high-rate regimes.
\newblock {\em IEEE Transactions on Information Theory}, 63(4):1961--1970,
  2017.

\bibitem{Haeupler19}
Bernhard Haeupler.
\newblock Optimal document exchange and new codes for insertions and deletions.
\newblock In David Zuckerman, editor, {\em 60th {IEEE} Annual Symposium on
  Foundations of Computer Science, {FOCS} 2019, Baltimore, Maryland, USA,
  November 9-12, 2019}, pages 334--347. {IEEE} Computer Society, 2019.

\bibitem{HaeuplerRS19}
Bernhard Haeupler, Aviad Rubinstein, and Amirbehshad Shahrasbi.
\newblock Near-linear time insertion-deletion codes and
  (1+\emph{{\(\epsilon\)}})-approximating edit distance via indexing.
\newblock In Moses Charikar and Edith Cohen, editors, {\em Proceedings of the
  51st Annual {ACM} {SIGACT} Symposium on Theory of Computing, {STOC}}, pages
  697--708. {ACM}, 2019.

\bibitem{HaeuplerS17}
Bernhard Haeupler and Amirbehshad Shahrasbi.
\newblock Synchronization strings: codes for insertions and deletions
  approaching the singleton bound.
\newblock In Hamed Hatami, Pierre McKenzie, and Valerie King, editors, {\em
  Proceedings of the 49th Annual {ACM} {SIGACT} Symposium on Theory of
  Computing, {STOC}}, pages 33--46. {ACM}, 2017.

\bibitem{HaeuplerS18}
Bernhard Haeupler and Amirbehshad Shahrasbi.
\newblock Synchronization strings: explicit constructions, local decoding, and
  applications.
\newblock In Ilias Diakonikolas, David Kempe, and Monika Henzinger, editors,
  {\em Proceedings of the 50th Annual {ACM} {SIGACT} Symposium on Theory of
  Computing, {STOC}}, pages 841--854. {ACM}, 2018.

\bibitem{HaeuplerSS18}
Bernhard Haeupler, Amirbehshad Shahrasbi, and Madhu Sudan.
\newblock Synchronization strings: List decoding for insertions and deletions.
\newblock In Ioannis Chatzigiannakis, Christos Kaklamanis, D{\'{a}}niel Marx,
  and Donald Sannella, editors, {\em 45th International Colloquium on Automata,
  Languages, and Programming, {ICALP}}, volume 107 of {\em LIPIcs}, pages
  76:1--76:14. Schloss Dagstuhl - Leibniz-Zentrum f{\"{u}}r Informatik, 2018.

\bibitem{SODA:KarKle07}
Richard~M. Karp and Robert Kleinberg.
\newblock Noisy binary search and its applications.
\newblock In Nikhil Bansal, Kirk Pruhs, and Clifford Stein, editors, {\em 18th
  Annual {ACM}-{SIAM} Symposium on Discrete Algorithms}, pages 881--890, New
  Orleans, LA, USA, January~7--9, 2007. {ACM-SIAM}.

\bibitem{KatzT00}
Jonathan Katz and Luca Trevisan.
\newblock On the efficiency of local decoding procedures for error-correcting
  codes.
\newblock In {\em {STOC}}, pages 80--86, 2000.

\bibitem{KerenidisW04}
Iordanis Kerenidis and Ronald de~Wolf.
\newblock Exponential lower bound for 2-query locally decodable codes via a
  quantum argument.
\newblock {\em J. Comput. Syst. Sci.}, 69(3):395--420, 2004.

\bibitem{Kiwi_expectedlength}
Marcos Kiwi, Martin Loebl, and Jiri Matousek.
\newblock Expected length of the longest common subsequence for large
  alphabets.

\bibitem{ESA:KPSW11}
Rolf Klein, Rainer Penninger, Christian Sohler, and David~P. Woodruff.
\newblock Tolerant algorithms.
\newblock In Camil Demetrescu and Magn{\'u}s~M. Halld{\'o}rsson, editors, {\em
  Algorithms -- ESA 2011}, pages 736--747, Berlin, Heidelberg, 2011. Springer
  Berlin Heidelberg.

\bibitem{KoppartyMRS17}
Swastik Kopparty, Or~Meir, Noga Ron{-}Zewi, and Shubhangi Saraf.
\newblock High-rate locally correctable and locally testable codes with
  sub-polynomial query complexity.
\newblock {\em J. {ACM}}, 64(2):11:1--11:42, 2017.

\bibitem{KoppartyS16}
Swastik Kopparty and Shubhangi Saraf.
\newblock Guest column: Local testing and decoding of high-rate
  error-correcting codes.
\newblock {\em {SIGACT} News}, 47(3):46--66, 2016.

\bibitem{KoppartySY14}
Swastik Kopparty, Shubhangi Saraf, and Sergey Yekhanin.
\newblock High-rate codes with sublinear-time decoding.
\newblock {\em J. {ACM}}, 61(5):28:1--28:20, 2014.

\bibitem{Levenshtein_SPD66}
Vladimir~Iosifovich Levenshtein.
\newblock Binary codes capable of correcting deletions, insertions and
  reversals.
\newblock {\em Soviet Physics Doklady}, 10(8):707--710, 1966.
\newblock Doklady Akademii Nauk SSSR, V163 No4 845-848 1965.

\bibitem{LiuTX20}
Shu Liu, Ivan Tjuawinata, and Chaoping Xing.
\newblock On list decoding of insertion and deletion errors.
\newblock {\em CoRR}, abs/1906.09705, 2019.
\newblock URL: \url{http://arxiv.org/abs/1906.09705}.

\bibitem{Mercier2010ASO}
Hugues Mercier, Vijay~K. Bhargava, and Vahid Tarokh.
\newblock A survey of error-correcting codes for channels with symbol
  synchronization errors.
\newblock {\em IEEE Communications Surveys and Tutorials}, 12, 2010.

\bibitem{Mitzenmachen-survey}
Michael Mitzenmacher.
\newblock A survey of results for deletion channels and related synchronization
  channels.
\newblock volume~6, pages 1--3, 07 2008.

\bibitem{Ostrovsky-InsdelLDC-Compiler}
Rafail Ostrovsky and Anat Paskin-Cherniavsky.
\newblock Locally decodable codes for edit distance.
\newblock In Anja Lehmann and Stefan Wolf, editors, {\em Information Theoretic
  Security}, pages 236--249, Cham, 2015. Springer International Publishing.

\bibitem{SchZuc99}
L.~J. {Schulman} and D.~{Zuckerman}.
\newblock Asymptotically good codes correcting insertions, deletions, and
  transpositions.
\newblock {\em IEEE Transactions on Information Theory}, 45(7):2552--2557,
  1999.

\bibitem{Sloane2002OnSC}
N.J.A. Sloane.
\newblock On single-deletion-correcting codes.
\newblock {\em arXiv: Combinatorics}, 2002.

\bibitem{SudanTV99}
Madhu Sudan, Luca Trevisan, and Salil~P. Vadhan.
\newblock Pseudorandom generators without the {XOR} lemma (abstract).
\newblock In {\em {CCC}}, page~4, 1999.

\bibitem{Woodruff12}
David~P. Woodruff.
\newblock A quadratic lower bound for three-query linear locally decodable
  codes over any field.
\newblock {\em J. Comput. Sci. Technol.}, 27(4):678--686, 2012.

\bibitem{Yekhanin08}
Sergey Yekhanin.
\newblock Towards 3-query locally decodable codes of subexponential length.
\newblock {\em J. {ACM}}, 55(1):1:1--1:16, 2008.

\bibitem{Yekhanin12}
Sergey Yekhanin.
\newblock Locally decodable codes.
\newblock {\em Foundations and Trends in Theoretical Computer Science},
  6(3):139--255, 2012.

\end{thebibliography}

\newpage
\appendix
\section{Proof of \texorpdfstring{\cref{thm:main-noisy-binary-search}}{Theorem 19}}\label{app:main-noisybs-proof}
We first recall \cref{thm:main-noisy-binary-search}.

\mainNoisyBinarySearch*
We emphasize that the exact boundaries of any block $\phi^{-1}(j)$ or interval $\+I[L,R]$ are not known to the binary search algorithm, so it cannot do uniform sampling within the exact boundaries. Instead, as we can see in \cref{alg:noisy-binary-search}, in each iteration it picks two indices $m_1, m_2$ and calls \textsc{Block-Decode} on uniformly sampled indices in $\set{m_1, m_1+1, \cdots, m_2-1}$. Depending on the results returned by \textsc{Block-Decode}, it either sets $l=m_1$ or $r=m_2$ and recursively search in the smaller interval $[l, r)$.

The following lemma shows that as long as the closure of an interval $[l,r)$ is  $(\theta,\gamma)$-good, uniform samples from $[l,r)$ does now perform much worse than uniform samples from a good block in terms of estimating $\phi$.

\begin{lemma} \label{lem:boundary-oblivious-sampling}
	Let $[l,r)$ be an interval with closure $\+I[L,R-1]$. Suppose $\+I[L,R-1]$ is a $(\theta,\gamma)$-good interval. We have
	\begin{align*}
		& &\Pr_{i \in [l,r)}\left[ \varphi(i) \neq \phi(i) \right] \le \gamma + \theta + \frac{\gamma}{\beta}.
	\end{align*}
\end{lemma}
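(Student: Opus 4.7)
The plan is to split the error count $E = \abs{\set{i \in [l, r) : \varphi(i) \neq \phi(i)}}$ by block and bound contributions from good and bad blocks separately. Let $G \subseteq \set{L, \ldots, R-1}$ denote the set of $\gamma$-good blocks and write $\ED_j \coloneqq \ED(c'[\phi^{-1}(j)], \tY\p{j})$. By \cref{def:good-interval} applied to the closure $\+I[L,R-1]$, at most $\theta(R-L)$ blocks are $\gamma$-bad and $\sum_{j=L}^{R-1} \ED_j \le \gamma(R-L)\alpha\tau$. For each $j \in G$, \cref{thm:buffer-finding-properties}(1) gives $\abs{\set{i \in \phi^{-1}(j) : \varphi(i) \neq j}} \le \gamma|\phi^{-1}(j)|$; for each $j \notin G$, I would use the trivial bound $|\phi^{-1}(j)|$. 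Summing over $j$ and using $|\phi^{-1}(j)| \le \beta\tau + \ED_j$,
\begin{align*}
E \;\le\; \gamma\abs{\+I[L, R-1]} + \sum_{j \notin G}|\phi^{-1}(j)| \;\le\; \gamma\abs{\+I[L, R-1]} + \theta(R-L)\beta\tau + \gamma(R-L)\alpha\tau.
\end{align*}
\Cref{prop:bounds-good}(2) gives $\abs{\+I[L, R-1]} \le (R-L)(\beta+\alpha\gamma)\tau$, so $E \le (R-L)\beta\tau \cdot [\gamma + \theta + \gamma\alpha(1+\gamma)/\beta]$.

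Next, to bound $p = E/N$ with $N = |[l,r)|$, I lower bound $N$ as follows. Since $\phi$ is non-decreasing with $\phi(l) = L$ and $\phi(r-1) = R-1$, every interior block $\phi^{-1}(j)$ for $L < j < R-1$ lies entirely in $[l, r)$, so $N \ge \abs{\+I[L,R-1]} - |\phi^{-1}(L)| - |\phi^{-1}(R-1)|$. By \cref{prop:bounds-good}(2), $\abs{\+I[L,R-1]} \ge (R-L)(\beta-\alpha\gamma)\tau$; using $|\phi^{-1}(j)| \le \beta\tau + \ED_j$ together with $\ED_L + \ED_{R-1} \le \gamma(R-L)\alpha\tau$, I get $|\phi^{-1}(L)| + |\phi^{-1}(R-1)| \le 2\beta\tau + \gamma(R-L)\alpha\tau$. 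Therefore $N \ge (R-L)(\beta - 2\alpha\gamma)\tau - 2\beta\tau$. Combining this with the bound on $E$ and invoking the parameter setting (which ensures $\alpha$ is small enough that $\alpha(1+\gamma) \le 1/\beta$) yields $p \le \gamma + \theta + \gamma/\beta$.

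The main obstacle is the small-$(R-L)$ regime, where the additive $-2\beta\tau$ term in the lower bound on $N$ dominates and the ratio $E/N$ could degrade. I expect this to be handled implicitly by the fact that \cref{alg:noisy-binary-search} invokes this lemma only when $r - l > C = 36(\beta+\gamma)\tau$; combined with $r - l \le \abs{\+I[L,R-1]} \le (R-L)(\beta+\alpha\gamma)\tau$, this forces $R - L$ to be bounded below by a sufficiently large absolute constant, making the $-2\beta\tau$ correction negligible relative to $(R-L)\beta\tau$ and giving the claimed inequality.
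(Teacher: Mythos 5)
Your proposal takes essentially the same route as the paper's proof: in both cases, the error probability is split into the contribution from $\gamma$-good blocks (bounded by $\gamma$ via \cref{thm:buffer-finding-properties}(1)) and the mass landing in $\gamma$-bad or boundary blocks (bounded by $\theta + \gamma/\beta$ via the $(\theta,\gamma)$-good interval conditions and \cref{prop:bounds-good}). The paper phrases the first step as a conditional-probability decomposition $\Pr[\varphi\neq\phi]\le\Pr[\varphi\neq\phi\mid\phi\in\Goody]+\Pr[\phi\notin\Goody]$ with $\Goody$ restricted to interior blocks $\{L+1,\dots,R-2\}$, whereas you split the raw error count $E$ and divide by $N=r-l$ afterward; the bookkeeping is isomorphic. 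The one place you are more careful than the paper is that you explicitly flag the additive $-2\beta\tau$ boundary correction in the lower bound on $N$ and the resulting breakdown when $R-L$ is small. The paper's write-up silently drops this: its numerator omits the $+2\beta\tau$ for the two boundary blocks (which are always outside $\Goody$), and it lower-bounds $r-l$ by $\abs{\+I[L,R-1]}$, which actually runs the wrong direction; these two omissions roughly cancel, yielding the clean $\theta+\gamma/\beta$ only after these lower-order terms are discarded. Your observation that the lemma is only ever invoked (inside \cref{lem:main-invariant}) on intervals with $r-l\ge 18(\beta+\gamma)\tau$, which forces $R-L\ge 12$ and makes the $\pm 2\beta\tau$ terms negligible, is exactly the right way to close that gap and is the assumption the paper tacitly relies on without stating it.
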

\begin{proof}
	Let $\Goody \subseteq \set{L+1, \ldots R-2}$ be the set of $\gamma$-good blocks among $\set{L+1, \ldots, R-2}$, and let $\overline{\Goody} = \set{L+1,\ldots,R-2} \setminus \Goody$. By definition of $(\theta,\gamma)$-goodness we have $\overline{\Goody} \le \theta(R-L)$. Since for each $j \in \Goody$, $\phi^{-1}(j) \subseteq [l,r)$. We can apply item (1) of \cref{thm:buffer-finding-properties} and get
	\begin{align*}
		& &\Pr_{i \in [l,r)}\left[ \varphi(i) \neq \phi(i) \mid \phi(i) \in \Goody \right] \le \gamma.
	\end{align*}
	Now we have the bound
	\begin{align*}
		& &\Pr_{i \in [l,r)}\left[ \varphi(i) \neq \phi(i) \right] &\le \Pr_{i \in [l,r)}\left[ \varphi(i) \neq \phi(i) \mid \phi(i) \in \Goody \right] + \Pr_{i \in [l,r)}\left[ \phi(i) \notin \Goody \right] \\
		& & &\le \gamma + \Pr_{i \in [l,r)}\left[ \phi(i) \notin \Goody \right],
	\end{align*}
	so it suffices to upper bound $\Pr_{i \in [l,r)}\left[ \phi(i) \notin \Goody \right]$. Denote $\Delta_j = \abs{\phi^{-1}(j)}-\beta\tau$. It holds that $\sum_{j=L}^{R-1}\abs{\Delta_j} \le \gamma(R-L)\tau$. In particular $\sum_{j \in \Goody}\Delta_j \ge -\gamma(R-L)\tau$ and $\sum_{j \in \overline{\Goody}}\Delta_j \le \gamma(R-L)\tau$. We have
	\begin{align*}
		\Pr_{i \in [l,r)}[\phi(i) \notin \Goody] &\le \frac{\sum_{j \notin \Goody}\abs{\phi^{-1}(j)}}{r-l} \le  \frac{\theta(R-L)\beta\tau + \sum_{j \notin \Goody}\Delta_j}{(R-L)\beta\tau + \sum_{j \in \Goody}\Delta_j + \sum_{j \notin \Goody}\Delta_j} \\
		&\le \frac{\theta(R-L)\beta\tau + \gamma(R-L)\tau}{(R-L)\beta\tau} = \theta + \frac{\gamma}{\beta}.
	\end{align*}
	Hence the lemma follows.
\end{proof}

In the following, we set $\rho = \min\set{\frac{1}{4} \cdot \frac{\beta-\gamma}{\beta+\gamma}, 1-\frac{3}{4}\cdot\frac{\beta+\gamma}{\beta-\gamma}}$ as in \cref{alg:noisy-binary-search}. Note that by item (2) of \cref{prop:parameters} we have $\rho > 0$.

The following lemma states that any interval not too far from a locally good block is also good. 
\begin{lemma}\label{lem:proximity-good-interval}
Let $l, r \in [n']$ be such that $r-l \ge 18(\beta+\gamma)\tau$. Let $\+I[L, R-1]$ be the closure of $[l, r)$. Set $m_1 = (1-\rho) l + \rho r$ and $m_2 = \rho l + (1-\rho) r$ and let $\+I[M_1, M_2-1]$ be the closure of $[m_1, m_2)$. Suppose for some $L \le x \le M_1$ block $x$ is $(\theta,\gamma)$-locally-good. Then we have
\begin{enumerate}
\item $M_1 \le L+(R-L)/3$, $M_2 \ge L+2(R-L)/3$.
\item $\+I[M_1, M_2-1]$ is a $(2\theta, 2\gamma)$-good interval.
\end{enumerate}
\end{lemma}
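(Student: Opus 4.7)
The plan is to leverage $(\theta,\gamma)$-local-goodness of block $x$, which makes every closure $\+I[a,b]$ with $a \le x \le b$ --- in particular $\+I[L, R-1]$, $\+I[L, M_1]$, and $\+I[L, M_2]$ --- into a $(\theta,\gamma)$-good interval. Proposition 21 then sandwiches each such size between $(b-a+1)(\beta-\alpha\gamma)\tau$ and $(b-a+1)(\beta+\alpha\gamma)\tau$. A preliminary observation: the hypothesis $r - l \ge 18(\beta+\gamma)\tau$ combined with $r - l \le |\+I[L, R-1]| \le (R-L)(\beta+\alpha\gamma)\tau$ forces $R - L \ge 18$, and this slack is precisely what lets us absorb $O(\beta\tau)$ boundary corrections later.

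For Part 1, I would bound $M_1 - L$ by estimating $|\+I[L, M_1]|$ in two ways. Proposition 21 gives the lower bound $|\+I[L, M_1]| \ge (M_1 - L + 1)(\beta - \alpha\gamma)\tau$. Positional reasoning ($a_L \le l$ since $l \in \phi^{-1}(L)$, and $b_{M_1} \ge m_1$ since $m_1 \in \phi^{-1}(M_1)$) gives the upper bound $|\+I[L, M_1]| \le (m_1 - l) + |\phi^{-1}(L)| + |\phi^{-1}(M_1)| - 1$. The two boundary block lengths each satisfy $|\phi^{-1}(j)| \le \beta\tau + \Delta_j$ where $\Delta_j = \ED(c'[\phi^{-1}(j)], \tilde Y^{(j)})$ and $\sum_{j=L}^{R-1}\Delta_j \le \gamma(R-L)\alpha\tau$ by $(\theta,\gamma)$-goodness of $\+I[L,R-1]$. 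Substituting $m_1 - l = \rho(r-l) \le \rho(R-L)(\beta+\alpha\gamma)\tau$ and the first branch $\rho \le (\beta-\gamma)/(4(\beta+\gamma))$ collapses the two bounds to $M_1 - L + 1 \le (R-L)/4 + O(1)$; together with $R - L \ge 18$ and the size constraints on $\alpha,\gamma$ from Proposition 24, this yields $M_1 \le L + (R-L)/3$. The lower bound on $M_2$ is symmetric: from $|\+I[L, M_2]| \ge m_2 - l + 1 = (1-\rho)(r-l) + 1$, the upper bound $|\+I[L, M_2]| \le (M_2 - L + 1)(\beta+\alpha\gamma)\tau$, and the second branch $\rho \le 1 - 3(\beta+\gamma)/(4(\beta-\gamma))$, one derives $M_2 - L \ge 3(R-L)/4 - O(1) \ge 2(R-L)/3$.

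Part 2 is then a counting argument riding on the (stronger, intermediate) bounds of Part 1. The inclusion $\{M_1, \ldots, M_2-1\} \subseteq \{L, \ldots, R-1\}$ implies the total edit distance over $[M_1, M_2-1]$ is at most $\gamma(R-L)\alpha\tau$, and the number of $\gamma$-bad (hence $2\gamma$-bad) blocks in $[M_1, M_2-1]$ is at most $\theta(R-L)$. Using $M_2 - M_1 \ge (R-L)/2$ --- the strictly tighter bound that falls out of the proof of Part 1, even though Part 1's statement only asserts $(R-L)/3$ --- these become $\le 2\gamma(M_2 - M_1)\alpha\tau$ and $\le 2\theta(M_2 - M_1)$, verifying both conditions in the definition of $(2\theta,2\gamma)$-good.

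The main obstacle is that the $(\theta,\gamma)$-good property controls only \emph{aggregate} edit distance, so the individual boundary block lengths $|\phi^{-1}(L)|, |\phi^{-1}(M_1)|, |\phi^{-1}(M_2)|, |\phi^{-1}(R-1)|$ must be handled via the weak per-block bound $|\phi^{-1}(j)| \le \beta\tau + \Delta_j$. Tracking these $O(\beta\tau)$ additive corrections carefully through the two-sided size bound from Proposition 21, and verifying they are harmless after division by $(\beta-\alpha\gamma)\tau$ given the hypothesis $r - l \ge 18(\beta+\gamma)\tau$, is the delicate bookkeeping step that governs the choice of the constant $18$.
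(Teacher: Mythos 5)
Your Part 1 is conceptually sound and takes a route parallel to the paper's, though the paper uses the cleaner inclusion $\+I[L+1,M_1-1]\subseteq[l,m_1)$ to get $|\+I[L+1,M_1-1]|\le m_1-l$ directly, whereas you estimate $|\+I[L,M_1]|$ from above and must separately charge the two boundary blocks $\phi^{-1}(L)$, $\phi^{-1}(M_1)$; both work with the slack afforded by $R-L\ge18$.

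Part 2, however, has a genuine gap. You bound the total edit distance over $\{M_1,\dots,M_2-1\}$ by $\gamma(R-L)\alpha\tau$ (summing over all of $\+I[L,R-1]$) and then claim $M_2-M_1\ge(R-L)/2$ so that $\gamma(R-L)\alpha\tau\le2\gamma(M_2-M_1)\alpha\tau$. But this inequality is not what "falls out" of Part 1. The upper bound on $M_1-L$ in the proof is roughly $(R-L)/4+1$ and the lower bound on $M_2-L$ is roughly $\tfrac34(R-L)-\tfrac32$, so the proof only delivers $M_2-M_1>(R-L)/2-\tfrac52$, which is strictly below $(R-L)/2$; and the stated Part 1 gives only $M_2-M_1\ge(R-L)/3$, which would yield $(3\theta,3\gamma)$-goodness, not $(2\theta,2\gamma)$-goodness. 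The fix is to sum over a \emph{smaller} interval, not to hunt for a sharper gap bound. Since $x$ is $(\theta,\gamma)$-locally good and $x\le M_1\le M_2-1$, the closure interval $\+I[x,M_2-1]$ is $(\theta,\gamma)$-good, so
\[
\sum_{j=M_1}^{M_2-1}\ED\bigl(c'[\phi^{-1}(j)],\tilde Y^{(j)}\bigr)
\;\le\;\sum_{j=x}^{M_2-1}\ED\bigl(c'[\phi^{-1}(j)],\tilde Y^{(j)}\bigr)
\;\le\;\gamma\,(M_2-x)\alpha\tau\;\le\;\gamma\,(M_2-L)\alpha\tau,
\]
and similarly the number of $\gamma$-bad blocks among $\{M_1,\dots,M_2-1\}$ is at most $\theta(M_2-L)$. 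Now $M_2-L\le2(M_2-M_1)$ \emph{is} an immediate consequence of the statement of Part 1 (it is equivalent to $M_1-L\le M_2-M_1$, which follows from $M_1-L\le(R-L)/3$ and $M_2-M_1\ge(R-L)/3$), and the $(2\theta,2\gamma)$-goodness follows. The crucial point you are missing is that local goodness of $x$ lets you restrict attention to the sub-interval starting at $x$, rather than paying for the entire span $[L,R-1]$.
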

\begin{proof}
Since $L \le x \le R-1$ and block $x$ is $(\theta,\gamma)$-locally good, by definition $\+I[L,R-1]$ is a $(\theta,\gamma)$-good interval. From the inclusion $[l,r) \subseteq \+I[L,R-1]$ we know that
\begin{align*}
& &(R-L)(\beta+\alpha\gamma)\tau \ge \abs{\+I(L, R-1)} \ge r-l \ge 18(\beta+\alpha\gamma)\tau,    
\end{align*}
which implies $R - L \ge 18$.

We begin by proving item (1).

\begin{claim}
$M_1 \le L + (R-L)/3$.
\end{claim}
\begin{proof}
Suppose $M_1 > L + (R-L)/3$. From the inclusion $ \+I[L+1, M_1-1] \subseteq [l, m_1)$, we have
\begin{align*}
 & &\rho (r-l) = m_1 - l &\ge \abs{\+I[L+1, M_1-1]} \ge (M_1-L-1)(\beta-\alpha\gamma)\tau\\
 & & &> \frac{1}{4}\tp{R-L} \cdot (\beta-\alpha\gamma)\tau.
\end{align*}

The last inequality holds as long as $R-L \ge 12$. Similarly, from the inclusion $[l,r) \subseteq \+I[L,R-1]$, we have that
\begin{align*}
& &r-l \le |\+I[L,R-1]| \le (R-L)(\beta+\alpha\gamma)\tau. 
\end{align*}

This implies $\rho > \frac{1}{4} \cdot \frac{\beta-\gamma}{\beta+\gamma}$ which is a contradiction.
\end{proof}

\begin{claim}
$M_2 \ge L+2(R-L)/3$.
\end{claim}
\begin{proof}
Suppose $M_2 < L + 2(R-L)/3$. From the inclusion $[l, m_2) \subseteq \+I[L, M_2-1]$, we have
\begin{align*}
	& &(1-\rho)(r-l) &= m_2 - l \le \abs{\+I[L, M_2-1]} \le (M_2-L)(\beta+\alpha\gamma)\tau\\
	& & &< \frac{3}{4}\tp{R-L-2} \cdot (\beta+\alpha\gamma)\tau.
\end{align*}

The last inequality holds as long as $R-L \ge 18$. Similarly, from the inclusion $\+I[L+1,R-2] \subseteq [l,r)$, we have that
\begin{align*}
& &r-l \ge |\+I[L+1,R-2]| \ge (R-L-2)(\beta-\alpha\gamma)\tau. 
\end{align*}

This implies $1-\rho < \frac{3}{4} \cdot \frac{\beta+\gamma}{\beta-\gamma}$ which is a contradiction.
\end{proof}

An immediate consequence of item (1) is that $M_2 - L \le 2(M_2 - M_1)$. Therefore, by $(\theta, \gamma)$-locally-goodness of $x$, we have
\begin{align*}
& &\sum_{j=M_1}^{M_2-1}\ED\tp{c'[\phi^{-1}(j)], \tilde{Y}\itn{j}} &\le \sum_{j=x}^{M_2-1}\ED\tp{c'[\phi^{-1}(j)], \tilde{Y}\itn{j}} \\
& & &\le \gamma \cdot (M_2-L)\tau \\
& & &\le 2\gamma \cdot (M_2-M_1)\tau.
\end{align*} 
Similarly, the number of $2\gamma$-bad blocks among $\set{M_1, \cdots, M_2-1}$ is at most the number of $\gamma$-bad blocks among $\set{L, \cdots, M_2-1}$, which is upper bounded by $\theta (M_2 - L) \le 2\theta (M_2 - M_1)$. Therefore the interval $\+I[M_1, M_2-1]$ is $(2\theta, 2\gamma)$-good.
\end{proof}

The following is the main lemma we use to prove \cref{thm:main-noisy-binary-search}.
\begin{lemma} \label{lem:main-invariant}
Assume $j \in [d]$ is a $(\theta, \gamma)$-locally-good block. Denote by $l\itn{t}$ , $r\itn{t}$ the values of $l$, $r$ at beginning of the $t$-th iteration when running \cref{alg:noisy-binary-search} on input $(c',1,n'+1,j)$. Suppose $r\itn{t}-l\itn{t} \ge 36(\beta+\gamma)\tau$. Then we have
$$\Pr\left[ \phi^{-1}(j) \subseteq \left[l\itn{t+1}, r\itn{t+1}\right) \ \middle| \ \phi^{-1}(j) \subseteq \left[l\itn{t}, r\itn{t}\right) \right] \ge 1 - \negl(n'),$$
where the probability is taken over the randomness of the algorithm.
\end{lemma}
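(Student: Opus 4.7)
The plan is to show that a single iteration of the algorithm preserves the invariant $\phi^{-1}(j) \subseteq [l\itn{t}, r\itn{t})$ with failure probability $\negl(n')$. Write $[l, r) = [l\itn{t}, r\itn{t})$, let $\+I[L, R-1]$ be its closure, and let $m_1, m_2$ together with their closure endpoints $M_1, M_2$ be as in \cref{alg:noisy-binary-search}. From $\phi^{-1}(j) \subseteq [l, r)$ we have $j \in \{L, \ldots, R-1\}$, so $(\theta, \gamma)$-local-goodness of $j$ makes $\+I[L, R-1]$ a $(\theta, \gamma)$-good interval.

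First I would establish a structural observation. Since $j$ is in particular $\gamma$-good, item~(1) of \cref{prop:bounds-good} gives $|\phi^{-1}(j)| \le (\beta + \alpha\gamma)\tau$, whereas $m_2 - m_1 = (1-2\rho)(r-l) \ge 18(\beta+\gamma)\tau$ using $\rho \le 1/4$ and the hypothesis $r - l \ge 36(\beta+\gamma)\tau$. Hence $\phi^{-1}(j)$ cannot straddle $[m_1, m_2)$, and one of (A) $\phi^{-1}(j) \subseteq [l, m_2)$ or (B) $\phi^{-1}(j) \subseteq [m_1, r)$ must hold. The invariant is preserved iff the algorithm picks the corresponding side, which requires $\tilde j \ge j$ in case (A) and the strict inequality $\tilde j < j$ in case (B); if both hold (i.e.\ $\phi^{-1}(j) \subseteq [m_1, m_2)$), either choice works automatically, with no sampling bound needed.

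To control $\tilde j$ when only one of (A),(B) holds, I would invoke \cref{lem:proximity-good-interval} with $j$ itself as the locally good witness (using its obvious left--right symmetric analog when $j > M_1$) to deduce that $\+I[M_1, M_2-1]$ is $(2\theta, 2\gamma)$-good, and then apply \cref{lem:boundary-oblivious-sampling} to $[m_1, m_2)$ to obtain
\[
	\nu \;\coloneqq\; \Pr_{i \in [m_1, m_2)}\bigl[\varphi(i) \ne \phi(i)\bigr] \;\le\; 2\gamma + 2\theta + 2\gamma/\beta,
\]
which by \cref{prop:parameters} is bounded strictly below $1/2$ by an absolute constant. The analysis then splits on where $j$ sits within $\{L, \ldots, R-1\}$. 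When $j < M_1$ (resp.\ $j > M_2-1$) every $i \in [m_1, m_2)$ satisfies $\phi(i) > j$ (resp.\ $\phi(i) < j$), so whenever $\varphi(i) = \phi(i)$ the decoded index $j_t$ sits on the correct side of $j$; a Chernoff bound over the $N = \Theta(\log^2 n')$ independent samples places the median of the non-$\perp$ outputs on the correct side except with probability $\exp(-\Omega(\log^2 n')) = \negl(n')$. In the remaining boundary cases inside $M_1 \le j \le M_2-1$, the structural observation above forces that either $\phi^{-1}(j) \subseteq [m_1, m_2)$ (trivial) or $j \in \{M_1, M_2-1\}$, reducing the analysis to the same pattern.

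The hardest part is the sub-case $j = M_2-1$ with $\max \phi^{-1}(j) \ge m_2$: here the invariant requires the strict inequality $\tilde j < j$, yet the sampling argument above only gives $\tilde j \le j$ because samples $i$ with $\phi(i) = j$ may legitimately decode to $j_t = j$. To sharpen to strict inequality I would apply the size comparison of the second paragraph once more, namely $\Pr_{i \in [m_1, m_2)}[\phi(i) = j] \le |\phi^{-1}(j)|/(m_2 - m_1) \le 1/18$, so that $\Pr[j_t < j] \ge 17/18 - \nu > 1/2$ by a constant margin, and Chernoff again yields $\tilde j < j$ except with probability $\negl(n')$. The mirror sub-case $j = M_1$ with $\min \phi^{-1}(j) < m_1$ only needs the non-strict $\tilde j \ge j$, which is exactly what the previous paragraph delivers, so a union bound over the constant number of cases finishes the lemma.
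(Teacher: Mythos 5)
Your proof is correct and follows the same core route as the paper's: it invokes \cref{lem:proximity-good-interval} to obtain that the closure of $[m_1,m_2)$ is $(2\theta,2\gamma)$-good, applies \cref{lem:boundary-oblivious-sampling} to bound the per-sample misdecoding probability by $2\gamma+2\theta+2\gamma/\beta<1/3$, and concludes by Chernoff over $N=\Theta(\log^2 n')$ samples. Where you genuinely add value is the explicit treatment of the boundary case $j=M_2-1$ with $\phi^{-1}(j)\not\subseteq[m_1,m_2)$. The paper declares the case $M_2-1\le j\le R$ ``completely symmetric'' to $L\le j\le M_1$, but the two are not quite symmetric because of the non-strict comparison $j\le\tilde j$ in \cref{alg:noisy-binary-search}: when $j\le M_1$ the algorithm must recurse left, which requires only $\tilde j\ge j$, and a correctly decoded sample always yields $\varphi(i)=\phi(i)\ge M_1\ge j$, so the paper's coupling $X_t\le Y_t$ works even with equality at $j=M_1$. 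In the mirror case $j=M_2-1$ the algorithm must recurse right, i.e.\ $\tilde j<j$ strictly, and a correctly decoded sample landing in $\phi^{-1}(j)$ returns $\varphi(i)=\phi(i)=j$, which counts toward failure but is not a misdecoding, so the naively symmetric coupling breaks down. Your fix---observing that $\gamma$-goodness of $j$ together with $m_2-m_1\ge 18(\beta+\gamma)\tau$ give $\Pr_{i\in[m_1,m_2)}[\phi(i)=j]\le |\phi^{-1}(j)|/(m_2-m_1)\le 1/18$, and folding this extra $1/18$ into the Chernoff margin---is exactly what is needed to close the gap. The rest of your write-up (the dichotomy that $\phi^{-1}(j)$ cannot straddle both $m_1$ and $m_2$, the trivial sub-case $\phi^{-1}(j)\subseteq[m_1,m_2)$, and the reduction of the interior range $M_1\le j\le M_2-1$ to the two endpoints) matches what the paper uses implicitly and is sound.
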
 
\begin{proof}
Let $m_1$ and $m_2$ be defined as in \cref{alg:noisy-binary-search}. Let $\+I[L, R-1]$ be closure of $[l\itn{t}, r\itn{t}]$, and let $\+I[M_1, M_2-1]$ be the closure of $[m_1, m_2)$. Since we always have $[m_1, m_2) \subseteq [l\itn{t+1}, r\itn{t+1})$, $\phi^{-1}(j) \subseteq [m_1, m_2)$ would immediately imply $\phi^{-1}(j) \subseteq [l\itn{t+1}, r\itn{t+1})$. In the rest of the proof, we assume $\phi^{-1}(j) \not\subseteq [m_1, m_2)$, which means $L \le j \le M_1$ or $M_2-1 \le j \le R$.

We may assume that $L \le j \le M_1$ since the other case $M_2-1 \le j \le R$ is completely symmetric. The condition $r\itn{t} - l\itn{t} \ge 36(\beta+\gamma)\tau$ implies that $R-L \ge 36$, and that $m_2 - m_1 \ge \tp{r\itn{t} - l\itn{t}}/2 \ge 18(\beta+\gamma)\tau$. Therefore we can apply \cref{lem:proximity-good-interval} to $m_1$, $m_2$ and get that (1) $M_2 - M_1 \ge \tp{R - L}/3 \ge 12$, and (2) $\+I[M_1, M_2-1]$ is a $(2\theta,2\gamma)$-good interval. Since $\+I[M_1, M_2-1]$ is the closure of $[m_1, m_2)$, \Cref{lem:boundary-oblivious-sampling} gives 
\begin{align*}
    & &\Pr_{i \in [m_1,m_2)}\left[ \varphi(i) \neq \phi(i) \right] \le 2\gamma + 2\theta + \frac{2\gamma}{\beta} < \frac{1}{4} + \frac{1}{25} < \frac{1}{3},
\end{align*}
where the second last inequality is because $\theta < 1/50$, $\gamma \le 1/12$ (i.e., item (1) of \cref{prop:parameters}) and $\beta \ge 2$.

Let $i_1, i_2, \cdots, i_N$ be the samples drawn by \cref{alg:noisy-binary-search}, which are independent and uniform samples from $[m_1, m_2)$. Define $X_j$ to be the indicator random variable of the event $\set{\varphi(i_j) = \perp} \cup \set{\varphi(i_j) < x}$, and define $Y_j$ to be the indicator random variable of the event $\set{\varphi(i_j) \neq \phi(i_j)}$. It follows that $\E[Y_j] < 1/3$, and $\phi^{-1}(x) \not\subseteq \left[l\itn{t+1}, r\itn{t+1}\right)$ if and only if $\sum_{j=1}^{N}X_j \ge N/2$. Therefore it suffices to upper bound the probability of the latter event.

We observe that if $i \in [m_1, m_2) \subseteq \+I[M_1, M_2]$, then $\phi(i) \ge M_1 \ge j$. Therefore $\varphi(i) = \phi(i)$ implies $\varphi(i) \ge j$, or in other words $X_j \le Y_j$. An application of Chernoff bound gives 
\begin{align*}
& &\Pr\left[ \sum_{j=1}^{N}X_j \ge \frac{N}{2} \right] &\le \Pr\left[ \sum_{j=1}^{N}Y_j \ge \frac{N}{2} \right] \le \Pr\left[ \sum_{j=1}^{N}Y_j \ge \tp{1+\frac{1}{2}}\sum_{j=1}^{N}\E[Y_j] \right] \\
& & &\le \exp\tp{-\frac{N}{36}}.
\end{align*}
Taking $N = \Theta(\log^2 n')$ gives $\Pr\left[ \sum_{j=1}^{N}X_j \ge \frac{N}{2} \right] \le \exp\tp{-\Theta\tp{\log^2 n'}} = \negl(n')$.
\end{proof}

We are now ready to prove \cref{thm:main-noisy-binary-search}.
\begin{proof}[Proof of \cref{thm:main-noisy-binary-search}]
Let $C = 36(\beta+\gamma)\tau$ be defined as in \cref{alg:noisy-binary-search}, and $T = O(\log n')$ be the number of iterations until $r-l \le C$. Denote by $l\itn{t}$ , $r\itn{t}$ the values of $l$, $r$ at beginning of the $t$-th iteration. Let $\*b$ be the random variable denoting the output of \cref{alg:noisy-binary-search}. We have 
\begin{align} \label{ineq:success-prob}
    & &\Pr[\*b = b_j] \ge \Pr\left[ \phi^{-1}(j) \subseteq [l\itn{T}, r\itn{T}) \right] \cdot \Pr\left[ \*b = b_j \ \middle| \ \phi^{-1}(j) \subseteq [l\itn{T}, r\itn{T}) \right].
\end{align}
We are going to lower bound both probabilities in the right-hand-side of (\cref{ineq:success-prob}).

According to the algorithm, we have the following inclusion chain 
\begin{align*}
& &[l\itn{T}, r\itn{T}) \subseteq [l\itn{T-1}, r\itn{T-1}) \subseteq \cdots \subseteq [l\itn{1}, r\itn{1}),    
\end{align*}
where $l\itn{1}=1$ and $r\itn{1}=n'+1$. By \cref{lem:main-invariant}, it holds that
\begin{align*}
\Pr\left[\phi^{-1}(j) \subseteq [l\itn{T}, r\itn{T}) \right] &= \prod_{t=1}^{T-1}\Pr\left[ \phi^{-1}(j) \subseteq \left[l\itn{t+1}, r\itn{t+1}\right) \ \middle| \ \phi^{-1}(j) \subseteq \left[l\itn{t}, r\itn{t}\right) \right] \\
&\ge \tp{1 - \negl(n')}^T  \ge 1 - T \cdot \negl(n') = 1 - \negl(n').
\end{align*}

For the second term in \cref{ineq:success-prob}, let $\+I[L,R-1]$ be the closure of $[l\itn{T}, r\itn{T})$. Then $R-L \le 2+36(\beta+\gamma)/(\beta-\gamma) \le 50$. Conditioned on $\phi^{-1}(j) \subseteq [l\itn{T}, r\itn{T}) \subseteq \+I[L,R-1]$, the interval $\+I[L,R-1]$ is $(\theta,\gamma)$-good. Therefore every block in $\+I[L,R-1]$ is $\gamma$-good since the number of $\gamma$-bad blocks is bounded by $(R-L)\theta \le 50\theta < 1$. Due to item (2) of \cref{thm:buffer-finding-properties}, we have 
\begin{align*}
& &\Pr\left[ \*b = b_j \ \middle| \ \phi^{-1}(j) \subseteq [l\itn{T}, r\itn{T}) \right] = 1.
\end{align*}
Hence the theorem follows.
\end{proof}

\end{document}